%REREAD again
%Beware of Overfull h-boxes
%SUBMIT

\documentclass[a4paper,10pt]{report}
\usepackage[utf8]{inputenc}

\usepackage{color}

\usepackage{amssymb}
\usepackage{amsmath}
\usepackage{amsfonts}
\usepackage{amsthm}
\usepackage{url}
\usepackage{makeidx}

\usepackage[dvips]{epsfig}
\usepackage[dvips]{graphics}

\usepackage{algorithm}
\usepackage{algorithmic}

\usepackage[all,knot,poly,arc]{xy}

\newcommand{\mi}{\mbox{$\mu$}}

%\newcommand{\cosh}{\mbox{\rm cosh}} 

 % End of World

\newcommand{\cD}{\mbox{$\cal D$}}

\newcommand{\cS}{\mbox{$\cal S$}}

\newcommand{\NN}{{\Bbb N}}
\newcommand{\RR}{{\Bbb R}}

\newcommand{\PP}{{\Bbb P}}

\newcommand{\bone}{\mbox{\boldmath$1$}}

\newcommand{{\uk}}{\mbox{$\underline{k}$}}

\def\nod(#1,#2){\put(#1,#2){\circle*{.125}}
\put(#1,#2){\makebox(0,0.5){{\small$#2$}}}}%
\def\rod(#1,#2){\put(#1,#2){\circle*{.2}}}
%\put(#1,#2){\makebox(0,0.5){{\small$#2$}}}}%
\def\NOD(#1,#2)#3{\put(#1,#2){\circle*{.2}}\put(#1,#2){\makebox(0,0.8){{\small$#3$}}}}

\def\EXX{{\hfill{$\diamondsuit$}}}

\makeindex

\parindent=10pt

\newcounter{exampleNo}

\newtheorem{theorem}{Theorem}[chapter]
\newtheorem{lemma}[theorem]{Lemma}
\newtheorem{proposition}[theorem]{Proposition}
\newtheorem{corollary}[theorem]{Corollary}

\newenvironment{example}[1][Example \arabic{exampleNo}.]{\begin{trivlist}\refstepcounter{exampleNo}
\item[\hskip \labelsep {\bfseries #1}]}{\end{trivlist}}

\makeindex

\title{On the Theory of Stochastic Automata}
\author{Merve Nur Cakir, Mehwish Saleemi, Karl-Heinz Zimmermann\footnote{Email: k.zimmermann@tuhh.de}\\
Department of Computer Engineering \\
Hamburg University of Technology\\
21071 Hamburg, Germany}

\begin{document}
\maketitle
\tableofcontents

%\begin{abstract}
%\end{abstract}
%\medskip
%
%\mbox{\bf AMS Subject Classification:} 68Q70, 20M35, 20A05
%\medskip
%
%\mbox{\bf Keywords:} 
%

\chapter{Introduction}

The theory of discrete stochastic systems has been initiated by the work of 
Shannon~\cite{shannon} and von Neumann~\cite{neumann}.
While Shannon has considered memory-less communication channels and their generalization by introducing states,
von Neumann has studied the synthesis of reliable systems from unreliable components.
The fundamental work of Rabin and Scott~\cite{rscott} about deterministic finite-state automata 
has led to two generalizations.
First, the generalization of transition functions to conditional distributions 
studied by Carlyle~\cite{carl} and Starke~\cite{starke}.
This in turn has led to a generalization of discrete-time Markov chains in which the chains are governed 
by more than one transition probability matrix.
Second, the generalization of regular sets by introducing stochastic automata as described by Rabin~\cite{rabin}. 
Stochastic automata are well-investigated~\cite{claus}.

This report provides a short introduction to stochastic automata based on the valuable book of Claus~\cite{claus}.
This includes the basic topics of the theory of stochastic automata:
equivalence, minimalization, reduction, coverings, observability, and determinism.
Then stochastic versions of Mealy and Moore automata are studied and finally
stochastic language acceptors are considered as a generalization of nondeterministic finite-state acceptors.

\chapter{Stochastic Automata}
Stochastic automata are abstract machines with input and output behavior.
Such automata are also called transducers\index{transducer}.
A stochastic automaton can be viewed as an extension of a nondeterministic automaton with probabilistic transitions.
%We will give a basic introduction to the behavior of stochastic automata.

A {\em stochastic automaton\/}\index{stochastic automaton}  (SA) is a quadruple $A = (S,\Sigma, \Omega, p)$, where
\begin{itemize}
\item $S$ is a nonempty finite set of {\em states}\index{state},
\item $\Sigma$ is an alphabet of {\em input symbols}\index{input},
\item $\Omega$ is an alphabet of {\em output symbols}\index{output},  and
\item for each pair $(a,s)\in \Sigma\times S$, $p(\cdot,\cdot\mid a,s)$ 
is a conditional probability distribution on $\Omega\times S$.
\end{itemize}
Note that a conditional probability distribution $p(\cdot,\cdot\mid a,s)$ on $\Omega\times S$ consists of nonnegative numbers
$p(b,s'\mid a,s)$ for all $s'\in S$ and $b\in\Omega$ such that 
\begin{eqnarray}\label{e-SA-marg0}
\sum_{b\in \Omega} \sum_{s'\in S} p(b,s'\mid a,s) =1,\quad a\in\Sigma, s\in S.
\end{eqnarray}
%Let $A$ be a stochastic automaton.
Given a conditional probability distribution $p(\cdot,\cdot\mid a,s)$ on $\Omega\times S$,  
we define a conditional probability distribution $\hat p(\cdot,\cdot\mid x,s)$ with $x\in\Sigma^*$ and $s\in S$
on $\Omega^*\times S$ recursively as follows.
\begin{itemize}
\item For all $s,s'\in S$,
\begin{eqnarray}\label{e-SA-phat1}
\hat p (\epsilon,s'\mid \epsilon,s) = \left\{ \begin{array}{ll} 1 & \mbox{if } s=s',\\ 0 & \mbox{if } s\ne s',  \end{array} \right.
\end{eqnarray}
where $\epsilon$ denotes both the empty word in $\Sigma^*$ and $\Omega^*$.
\item For all $s,s'\in S$, $x\in\Sigma^*$, and $y\in\Omega^*$ with $|x|\ne |y|$,
\begin{eqnarray}\label{e-SA-phat2}
\hat p(y,s'\mid x,s) =  0.
\end{eqnarray}
\item For all $s,s'\in S$, $a\in \Sigma$, $x\in\Sigma^*$, $b\in\Omega$, and $y\in\Omega^*$,
\begin{eqnarray}\label{e-SA-phat3}
\hat p (yb,s'\mid xa,s) = \sum_{t\in S} \hat p(y,t\mid x,s)\cdot p(b,s'\mid a,t).
\end{eqnarray}
\end{itemize}
\begin{proposition}\label{p-SA-prob00}
For each pair $(x,s)\in\Sigma^*\times S$, $\hat p(\cdot,\cdot\mid x,s)$ is a conditional probability distribution 
on $\Omega^*\times S$.
\end{proposition}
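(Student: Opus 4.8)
The plan is to establish both defining properties of a conditional probability distribution---nonnegativity of every value and summation to $1$---simultaneously, by induction on the length $|x|$ of the input word $x$.

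For the base case $|x|=0$, i.e. $x=\epsilon$, equation~(\ref{e-SA-phat1}) yields only the nonnegative values $0$ and $1$, while equation~(\ref{e-SA-phat2}) forces $\hat p(y,s'\mid\epsilon,s)=0$ whenever $|y|\ne 0$. Hence in the double sum $\sum_{y\in\Omega^*}\sum_{s'\in S}\hat p(y,s'\mid\epsilon,s)$ the only nonzero contribution comes from $y=\epsilon$ and $s'=s$, and it equals $1$. So the base case holds.

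For the inductive step I assume the claim for all words of length $n$ and fix $x\in\Sigma^*$ with $|x|=n$ together with $a\in\Sigma$, so that $|xa|=n+1$. Nonnegativity is immediate: by~(\ref{e-SA-phat3}) each value $\hat p(yb,s'\mid xa,s)$ is a finite sum of products of the nonnegative quantities $\hat p(y,t\mid x,s)$ (induction hypothesis) and $p(b,s'\mid a,t)$ (definition of a conditional distribution), and all remaining values vanish by~(\ref{e-SA-phat2}). For the normalization the key observation is that the support of $\hat p(\cdot,\cdot\mid xa,s)$ is confined to output words of length exactly $n+1$, of which there are only finitely many because $\Omega$ is finite; thus the sum over the infinite monoid $\Omega^*$ is really a finite sum and the rearrangements below are legitimate. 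Writing each length-$(n+1)$ word uniquely as $yb$ with $|y|=n$ and $b\in\Omega$, I compute
\begin{eqnarray*}
\sum_{y'\in\Omega^*}\sum_{s'\in S}\hat p(y',s'\mid xa,s)
&=& \sum_{|y|=n}\sum_{b\in\Omega}\sum_{s'\in S}\sum_{t\in S}\hat p(y,t\mid x,s)\,p(b,s'\mid a,t)\\
&=& \sum_{|y|=n}\sum_{t\in S}\hat p(y,t\mid x,s)\Bigl(\sum_{b\in\Omega}\sum_{s'\in S}p(b,s'\mid a,t)\Bigr).
\end{eqnarray*}
Each bracketed factor equals $1$ by~(\ref{e-SA-marg0}), so the right-hand side collapses to $\sum_{|y|=n}\sum_{t\in S}\hat p(y,t\mid x,s)$, which is $1$ by the induction hypothesis. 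This closes the induction.

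The only real obstacle is bookkeeping rather than mathematical depth. One must justify that the double sum over the infinite set $\Omega^*$ is well defined and may be reordered, which follows from the finiteness of its support established through~(\ref{e-SA-phat2}); and one must match the length bookkeeping of $xa$ with that of $yb$ so that the recursion~(\ref{e-SA-phat3}) applies to every contributing term. Once the sum is restricted to words of length $n+1$, the interchange of the finite sums over $b$, $s'$, $t$, and $y$, followed by the appeal to~(\ref{e-SA-marg0}) and then to the induction hypothesis, completes the argument.
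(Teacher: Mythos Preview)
Your proof is correct. The paper actually states Proposition~\ref{p-SA-prob00} without proof, so there is nothing to compare against; your induction on $|x|$, using~(\ref{e-SA-phat2}) to confine the support to words of the right length and~(\ref{e-SA-marg0}) together with the recursion~(\ref{e-SA-phat3}) for the normalization, is precisely the standard argument one would supply to fill this gap.
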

Therefore,  for each pair $(x,s)\in\Sigma^*\times S$, 
\begin{eqnarray}\label{e-SA-phat4}
\sum_{y\in \Omega^*} \sum_{s'\in S} p(y,s'\mid x,s) =1.
\end{eqnarray}

A stochastic automaton works serially and synchronously.
It reads an input word symbol by symbol and after reading an input symbol 
it emits an output symbol and transits into another state.
More precisely, if the automaton starts in state $s$ and reads the word $x$, then
with probability $\hat p(y,s'\mid x,s)$ it will end in state $s'$ emitting the word $y$ by taking all intermediate states into account.

Note that the measures $p$ and $\hat p$ coincide on the set $\Omega\times S\times\Sigma\times S$ if we put $x=y=\epsilon$ in~(\ref{e-SA-phat3}).
Therefore, we will write $p$ instead of $\hat p$.
\begin{proposition}\label{p-SA-prob0}
For all $x,x'\in\Sigma^*$, $y,y'\in\Omega^*$, and $s,s'\in S$ with $|x|=|y|$, 
$$p(yy', s'\mid xx',s) = \sum_{t\in S} p(y, t\mid x,s) \cdot p(y', s'\mid x',t).$$ 
\end{proposition}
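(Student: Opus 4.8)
The plan is to argue by induction on the common length $n = |x'| = |y'|$, after first disposing of a degenerate case. Observe that if $|x'| \ne |y'|$, then both sides vanish: since $|x| = |y|$ we have $|xx'| \ne |yy'|$, so the left-hand side is $0$ by~(\ref{e-SA-phat2}), while on the right-hand side every factor $p(y', s'\mid x', t)$ is $0$ by~(\ref{e-SA-phat2}) as well, making each summand $0$. Hence the identity holds trivially there, and we may assume $|x'| = |y'| = n$ and induct on $n$.

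For the base case $n = 0$ we have $x' = y' = \epsilon$. The left-hand side reduces to $p(y, s'\mid x, s)$, while the right-hand side is $\sum_{t\in S} p(y, t\mid x, s)\cdot p(\epsilon, s'\mid \epsilon, t)$; by~(\ref{e-SA-phat1}) the factor $p(\epsilon, s'\mid \epsilon, t)$ equals $1$ when $t = s'$ and $0$ otherwise, so only the term $t = s'$ survives and the right-hand side collapses to $p(y, s'\mid x, s)$, matching the left.

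For the inductive step I would write $x' = x''a$ and $y' = y''b$ with $a\in\Sigma$, $b\in\Omega$, and $|x''| = |y''| = n$. First expand the left-hand side $p((yy'')b, s'\mid (xx'')a, s)$ by one application of the recursion~(\ref{e-SA-phat3}), introducing a summation variable $u$ over the intermediate state reached after $xx''$; this yields $\sum_{u\in S} p(yy'', u\mid xx'', s)\cdot p(b, s'\mid a, u)$. Next apply the induction hypothesis to the factor $p(yy'', u\mid xx'', s)$, rewriting it as $\sum_{t\in S} p(y, t\mid x, s)\cdot p(y'', u\mid x'', t)$. Substituting and interchanging the two finite sums, the inner $u$-summation regroups as $\sum_{u\in S} p(y'', u\mid x'', t)\cdot p(b, s'\mid a, u)$, which by~(\ref{e-SA-phat3}) again is exactly $p(y''b, s'\mid x''a, t) = p(y', s'\mid x', t)$. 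This leaves $\sum_{t\in S} p(y, t\mid x, s)\cdot p(y', s'\mid x', t)$, the desired right-hand side.

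The argument is essentially bookkeeping with the recursive definition, so no deep obstacle arises; the one point requiring care is the interchange of summations in the inductive step and, relatedly, keeping the two intermediate states distinct, namely $t$ reached after reading $x$ and $u$ reached after reading $xx''$. One should also check that the induction is set up on the length of the \emph{suffix} $x'$ rather than the prefix $x$, since the recursion~(\ref{e-SA-phat3}) peels symbols off the right end, which is precisely what the decomposition $x' = x''a$ exploits.
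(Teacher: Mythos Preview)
Your proof is correct and follows essentially the same approach as the paper: handle the degenerate case $|x'|\neq|y'|$ via~(\ref{e-SA-phat2}), establish the base case $x'=y'=\epsilon$ via~(\ref{e-SA-phat1}), and then induct on the length of the suffix by peeling off the last symbol with~(\ref{e-SA-phat3}), applying the induction hypothesis, swapping the two finite sums, and reassembling with~(\ref{e-SA-phat3}). The paper additionally singles out the case $|x'|=|y'|=1$ as a separate step, but this is redundant and already covered by your inductive step applied to the base case, so your version is marginally cleaner; one tiny slip is that in your inductive step you write $|x''|=|y''|=n$ when you presumably mean $n-1$ (or else you are proving the case $n+1$), but the argument itself is unaffected.
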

\begin{proof}
First, if $|x'|\ne |y'|$, then $|xx'|\ne |yy'|$ and so both sides are zero by~(\ref{e-SA-phat2}).
Second, if $x'=\epsilon=y'$, then both sides are equal to $p(y, s'\mid x,s)$ by~(\ref{e-SA-phat1}).

Third, if $x'\in \Sigma$ and $y'\in \Omega$, then both sides are equal by~(\ref{e-SA-phat3}).
Finally, suppose the assertion holds for words of length $|x'| = |y'|\leq k$ for some $k\geq 1$.
Then consider words $x'a$ and $y'b$ of length $k+1$, where $a\in\Sigma$ and $b\in\Omega$.
We have
\begin{eqnarray*}
\lefteqn{ p(yy'b, s'\mid xx'a,s) }\\
&=& \sum_{t\in S} p(yy', t\mid xx',s) \cdot p(b, s'\mid a,t),\quad \mbox{by~(\ref{e-SA-phat3})},\\
&=& \sum_{t\in S} \sum_{t'\in S} p(y, t'\mid x,s) p(y', t\mid x',t') \cdot p(b, s'\mid a,t),\;\mbox{by induction},\\
&=& \sum_{t'\in S} p(y, t'\mid x,s) \sum_{t\in S} p(y', t\mid x',t') \cdot p(b, s'\mid a,t)\\
%&=& \sum_{t'\in S} p(y, t'\mid x,s) \sum_{t\in S} p(y', t\mid x',t') \cdot p(b, s'\mid a,t)\\
&=& \sum_{t'\in S} p(y, t'\mid x,s) \cdot p(y'b, s'\mid x'a,t'),  \quad \mbox{by~(\ref{e-SA-phat3})}.
\end{eqnarray*}
\end{proof}

%The computation of the conditional probabilities can be

The behavior of a stochastic automaton can be described by probability matrices.
To see this, let $A$ be a stochastic automaton with state set $S=\{s_1,\ldots,s_n\}$.
For each pair of input and output symbols $a\in\Sigma$ and $b\in\Omega$, put
\begin{eqnarray}
p_{ij}(b\mid a) = p(b,s_j\mid a,s_i),\quad 1\leq i,j\leq n,
\end{eqnarray}
and define the real-valued $n\times n$ matrix 
\begin{eqnarray}
P(b\mid a) = (p_{ij}(b\mid a))_{1\leq i,j\leq n}.
\end{eqnarray}
Note that the matrix $P(b\mid a)$ is {\em substochastic}\index{matrix!substochastic}, 
i.e., it is a square matrix with nonnegative entries and by~(\ref{e-SA-phat4}) each row adds up to at most~1.
The elements of $P(b\mid a)$ provide the transition probabilities between the states 
if the symbol $a$ is read and the symbol $b$ is emitted.
This definition can be extended to strings of input and output symbols.
For this, note that by~(\ref{e-SA-phat1}) we have
\begin{eqnarray}
P(\epsilon\mid \epsilon) = I_n,
\end{eqnarray}
where $I_n$ is the $n\times n$ unit matrix.
Moreover, if $x\in\Sigma^*$ and $y\in\Omega^*$ with $|x|\ne |y|$, then by~(\ref{e-SA-phat2}) we have
\begin{eqnarray}
P(x\mid y) = O_n,
\end{eqnarray}
where $O_n$ is the $n\times n$ zero matrix.
Furthermore, if $a\in \Sigma$, $x\in\Sigma^*$, $b\in\Omega$, and $y\in\Omega^*$, then by~(\ref{e-SA-phat3}) we have
\begin{eqnarray}
P(yb\mid xa) = P(y\mid x)\cdot P(b\mid a).
\end{eqnarray}
By Prop.~\ref{p-SA-prob0} and the associativity of matrix multiplication, we obtain the following.
\begin{proposition}\label{p-SA-prob1}
For all $x,x'\in\Sigma^*$ and $y,y'\in\Omega^*$ with $|x|=|y|$, 
$$P(yy'\mid xx') = P(y\mid x) \cdot P(y'\mid x').$$ 
\end{proposition}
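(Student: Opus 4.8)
The plan is to read the claimed matrix identity off entry by entry and to recognize each entry equation as a single instance of Proposition~\ref{p-SA-prob0}. The essential preliminary step is to record the entry interpretation of the string-indexed matrices: for any $x\in\Sigma^*$ and $y\in\Omega^*$, the $(i,j)$ entry of $P(y\mid x)$ equals $p(y,s_j\mid x,s_i)$. This holds by definition for single symbols, and it extends to words of arbitrary length by a routine induction on length using the recursion $P(yb\mid xa)=P(y\mid x)\cdot P(b\mid a)$ together with~(\ref{e-SA-phat3}), whose summation over intermediate states is precisely the inner product defining one entry of a matrix product. I would either take this identification as already established when the string-indexed $P$ was introduced, or dispatch it in one line.

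With that in hand, I fix indices $i,j$ and compare the $(i,j)$ entries of the two sides. The left-hand entry is $p(yy',s_j\mid xx',s_i)$. Applying Proposition~\ref{p-SA-prob0} with $s=s_i$ and $s'=s_j$ rewrites it as $\sum_{t\in S} p(y,t\mid x,s_i)\cdot p(y',s_j\mid x',t)$. Writing the summation index as $t=s_k$ with $k$ ranging over $1,\dots,n$, each summand is $[P(y\mid x)]_{ik}\cdot[P(y'\mid x')]_{kj}$, so the whole sum is exactly the $(i,j)$ entry of the product $P(y\mid x)\cdot P(y'\mid x')$. Since $i,j$ are arbitrary, the two matrices coincide.

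I should verify the side condition. Proposition~\ref{p-SA-prob0} requires only $|x|=|y|$, which is our hypothesis, and imposes nothing on $x'$ and $y'$; the subcase $|x'|\ne|y'|$ causes no trouble, because then the right factor $P(y'\mid x')$ is the zero matrix by~(\ref{e-SA-phat2}), while on the left $|xx'|\ne|yy'|$ forces $P(yy'\mid xx')=O_n$ as well, so both sides vanish. Hence no extra assumption beyond $|x|=|y|$ is needed.

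I expect essentially no obstacle here: once the string-indexed matrices are known to carry the $\hat p$-values as entries, the proposition is a verbatim transcription of Proposition~\ref{p-SA-prob0} into matrix language, the sum over the intermediate state $t$ becoming matrix multiplication. The only genuine content is that preliminary entry identification. As an alternative route, presumably the one the phrase \emph{associativity of matrix multiplication} alludes to, one can instead induct directly on $|x'|$: the base case $x'=y'=\epsilon$ uses $P(\epsilon\mid\epsilon)=I_n$, and the inductive step peels off a final symbol via $P(yy'b\mid xx'a)=P(yy'\mid xx')\cdot P(b\mid a)$, applies the inductive hypothesis, and reassociates to regroup the product as $P(y\mid x)\cdot P(y'b\mid x'a)$. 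I would present the entry-translation argument, as it is the shorter of the two.
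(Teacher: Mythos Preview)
Your proposal is correct and matches the paper's approach: the paper's proof is the single sentence ``By Prop.~\ref{p-SA-prob0} and the associativity of matrix multiplication,'' and you have simply spelled out what that sentence means---the entrywise reading of Proposition~\ref{p-SA-prob0} as matrix multiplication, with the inductive/associativity route noted as the obvious alternative. Nothing is missing.
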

It follows by induction that if $x=x_1\ldots x_k\in\Sigma^*$ and $y=y_1\ldots y_k\in\Omega^*$, then
\begin{eqnarray}\label{e-SA-prob2}
P(y\mid x) = P(y_1\mid x_1)\cdots P(y_k\mid x_k).
\end{eqnarray}

\begin{proposition}
Each stochastic automaton $A$ is uniquely characterized by the collection of substochastic matrices
$$\{P(b\mid a) \mid a\in \Sigma, b\in \Omega\}.$$
%For each input symbol $a\in\Sigma$, the matrix 
%$$P(a) = \sum_{b\in\Omega} P(b\mid a)$$
%is stochastic. 
For each input word $x\in\Sigma^*$, the matrix
$$P(x) = \sum_{y\in\Omega^*} P(y\mid x)$$
is (row) stochastic\index{matrix!stochastic}, i.e., each row sums up to~$1$.
\end{proposition}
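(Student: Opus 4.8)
The plan is to treat the two assertions separately, since they are essentially independent. For the first assertion I would argue that the collection $\{P(b\mid a)\}$ carries exactly the same information as the defining datum $p$. By definition the $(i,j)$ entry of $P(b\mid a)$ is $p_{ij}(b\mid a)=p(b,s_j\mid a,s_i)$, so fixing all matrices $P(b\mid a)$ for $a\in\Sigma$, $b\in\Omega$ amounts to fixing the value $p(b,s'\mid a,s)$ for every $a,b,s,s'$, i.e.\ the full family of conditional distributions $p(\cdot,\cdot\mid a,s)$. Since a stochastic automaton over fixed sets $S,\Sigma,\Omega$ is nothing but this family, two automata with identical matrix collections coincide, and conversely. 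I would additionally remark that no behavioral information is lost: by the product formula~(\ref{e-SA-prob2}) every matrix $P(y\mid x)$, and hence the whole extended measure $\hat p=p$, is recovered from the single-symbol matrices.

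For the second assertion, the first point to settle is that $P(x)$ is well defined despite the sum nominally ranging over the infinite set $\Omega^*$. By~(\ref{e-SA-phat2}) we have $P(y\mid x)=O_n$ whenever $|y|\ne|x|$, so only the finitely many words $y\in\Omega^{|x|}$ contribute and the sum is effectively finite. Nonnegativity of the entries of $P(x)$ is then immediate, being a finite sum of the nonnegative numbers $p(y,s_j\mid x,s_i)$.

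The heart of the argument is the row-sum computation, which I would carry out directly from the normalization already established in Proposition~\ref{p-SA-prob00}. Fixing a row index $i$, I would interchange the (finite) sums over $j$ and over $y$ to obtain
$$\sum_{j=1}^n \big[P(x)\big]_{ij} = \sum_{y\in\Omega^*}\sum_{j=1}^n p(y,s_j\mid x,s_i) = \sum_{y\in\Omega^*}\sum_{s'\in S} p(y,s'\mid x,s_i) = 1,$$
where the last equality is exactly~(\ref{e-SA-phat4}). Thus every row of $P(x)$ sums to $1$, and together with nonnegativity this shows $P(x)$ is row stochastic.

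There is no serious obstacle here; the work is routine once~(\ref{e-SA-phat4}) is in hand. The only points demanding care are the observation that the nominally infinite sum collapses to a finite one via~(\ref{e-SA-phat2}), and the harmless interchange of the order of summation. As an alternative to the direct computation I could instead induct on $|x|$: the base case $P(\epsilon)=I_n$ follows from~(\ref{e-SA-phat1}), the inductive step uses Proposition~\ref{p-SA-prob1} to factor $P(xa)=P(x)\cdot P(a)$ after summing over the last output symbol, and one invokes the standard fact that a product of row-stochastic matrices is row stochastic (with $P(a)=\sum_{b}P(b\mid a)$ stochastic by~(\ref{e-SA-marg0})). The direct route is shorter, so I would present that one.
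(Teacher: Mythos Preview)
Your proof is correct. For the second assertion you choose the direct route via the normalization~(\ref{e-SA-phat4}), computing the row sum by swapping the finite summations over $j$ and~$y$. The paper instead takes precisely the alternative you sketch at the end: it shows $P(a)=\sum_{b}P(b\mid a)$ is stochastic for single letters by~(\ref{e-SA-marg0}), observes $P(\epsilon)=I_n$, and then appeals to~(\ref{e-SA-prob2}) together with the closure of stochastic matrices under multiplication to conclude for arbitrary~$x$. Your direct argument is marginally shorter and uses only the pointwise normalization; the paper's multiplicative argument has the side benefit of making the identity $P(x)=P(x_1)\cdots P(x_k)$ explicit, which is used later. Since you already mention both routes, there is nothing to add.
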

\begin{proof}
The substochastic matrices $P(b\mid a)$ with $a\in \Sigma$ and $b\in \Omega$ describe the conditional probabilities of the automaton and so characterize the automaton uniquely.

By~(\ref{e-SA-marg0}), the matrix $P(a) = \sum_{b\in\Omega} P(b\mid a)$ is stochastic for each input symbol $a\in\Sigma$.
Moreover, the matrix $P(\epsilon)=P(\epsilon\mid\epsilon) = I_n$ is stochastic.
Since the multiplication of stochastic matrices is again a stochastic matrix, 
by~(\ref{e-SA-prob2}) the matrix $P(x)$ is stochastic for each $x\in\Sigma^*$.
\end{proof}

\begin{example}
Consider the stochastic automaton $A = (\{s_1,s_2\},\{a\},\{b\},p)$ with conditional probabilities 
$$p(b,s_1\mid a,s_1)=\frac{2}{3},\quad p(b,s_2\mid a,s_1)=\frac{1}{3},\quad\mbox{and}\quad p(b,s_2\mid a,s_2)=1.$$ 
The automaton is given by the state diagram in Fig.~\ref{fi-SA-A}.
The corresponding (substochastic) matrix is 
$$P(a) = P(b\mid a) = \left(\begin{array}{cc}\frac{2}{3}&\frac{1}{3}\\ 0 & 1\end{array}\right).$$
Thus for each integer $k \geq 1$,
$$P(a^k) = \left(\begin{array}{cc}\frac{2^k}{3^k}&\frac{3^k-2^k}{3^k}\\ 0 & 1\end{array}\right).$$
\EXX
\end{example}
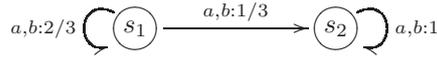
\begin{figure}[hbt]
\begin{center}
\mbox{$
\xymatrix{
%\txt{start}\ar@{-->}[d] && \\
*++[o][F-]{s_1} 
\ar@(ul,dl)[]_{a,b:2/3}/
\ar@{->}[rr]^{a,b:1/3}
%\ar@{-->}[d]
&&
*++[o][F-]{s_2} 
\ar@(ur,dr)[]^{a,b:1}
%\ar@{-->}[d]
%\ar@/^/[ll]^{0.1} \\
%{0.5(h),0.5(t)}& \txt{output} & {0.75(h),0.25(t)}
}
$}
\end{center}
\caption{State diagram of $A$.}\label{fi-SA-A}
\end{figure}

For each input word $x\in\Sigma^*$, the stochastic matrix $P(x)$ can be viewed as generating a discrete-time Markov chain.
Thus the behavior of a stochastic automaton is an interleaving of Markov chains each of which corresponding to a single input symbol.

\begin{example}(Communication channels)\label{e-BSC}
A {\em binary symmetric channel\/}\index{binary symmetric channel} (BSC)\index{BSC} introduced by 
Claude Shannon (1948) is a common communication channel in coding theory.
A BSC with crossover probability $p$
is a binary input, binary output channel that flips the input bit with probability $p$.
A BSC is memoryless and characterized by the conditional probabilities 
\begin{eqnarray*}
\PP(Y=0\mid X=0) &=& 1-p, \\ \PP(Y=0\mid X=1) &=& p, \\ \PP(Y=1\mid X=0) &=& p, \\ \PP(Y=1\mid X=1) &=& 1-p,
\end{eqnarray*}
where the crossover probability lies between $0$ and $\frac{1}{2}$ (Fig.~\ref{f-BSC}).

An alternative to the more idealized binary symmetric channel is the {\em arbitrarily varying channel\/}\index{arbitrarily varying channel} (AVC)\index{AVC}
which is more realistic for network models.
An AVC has an input alphabet $\Sigma$, an output alphabet $\Omega$, and a state set $S$.
A symbol over the input alphabet is transmitted and at the other end a symbol over the output alphabet is received.
During transmission, the state in the set $S$ can vary arbitrarily at each time step.
The conditional probabilities of an AVC are defined as
$$p(b,s'\mid a,s) = p'(b\mid a,s)\cdot p''(s'\mid s).$$
where $p'(b\mid a,s)$ is the conditional probability of receiving the symbol $b$ when the symbol $a$ has been transmitted in state $s$,
and $p''(s'\mid s)$ is the conditional probability of moving from state $s$ to state $s'$.
%An AVC is {\em symmetric\/}\index{AVC!symmetric} if 
%$$\sum_{s\in S} p'(b\mid a,s) c(s\mid a') = \sum_{s\in S} p'(b\mid a',s) c(s\mid a)$$ 
%for all $a,a'\in\Sigma$, $b\in\Omega$, and $c:\Sigma\rightarrow S$ is a {\em channel function}\index{channel function}, where $c(a)=s$ is written as $c(s\mid a)$.
\EXX
\end{example}
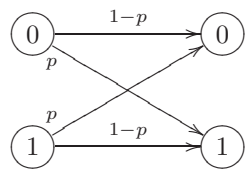
\begin{figure}[hbt]
\begin{center}
\mbox{$
\xymatrix{
*++[o][F-]{0} \ar@{->}[rr]^{1-p}\ar@{->}[drr]_<<{p}  && *++[o][F-]{0} \\
*++[o][F-]{1} \ar@{->}[rr]^{1-p}\ar@{->}[urr]^<<{p}  && *++[o][F-]{1} \\
}
$}
\end{center}
\caption{Binary symmetric channel.}\label{f-BSC}
\end{figure}

%\begin{example}[Implementation of stochastic automata]
%We may think of a stochastic automaton as an abstract computing machine 
%that takes on a well-defined state at each time step of computation.  
%\EXX
%\end{example}

\chapter{Equivalence}
In the following, the capabilities of stochastic automata will be compared.
Intuitively, two stochastic automata have the same performance if they exhibit the same input-output behavior.

More concretely, let $A= (S,\Sigma,\Omega,p)$ be a stochastic automaton with state set $S=\{s_1,\ldots,s_n\}$.
The output behavior of the automaton being in state $s_i\in S$, reading the symbol $a\in \Sigma$ 
and emitting the symbol $y\in\Omega$ is given by the marginal probability 
\begin{eqnarray}
\eta_i(b\mid a) = \sum_{s\in S} p(b,s\mid a,s_i).
\end{eqnarray}
Take the column vector of marginals
\begin{eqnarray}
\eta(b\mid a) = \left( \begin{array}{c} \eta_1(b\mid a)\\\vdots\\\eta_n(b\mid a)\end{array} \right).
\end{eqnarray}
Thus $\eta(b\mid a)$ is the vector of row sums of the substochastic matrix $P(b\mid a)$.
Equivalently, if $\bone_n$ denotes the all-one vector of length $n$, we have
\begin{eqnarray}
\eta(b\mid a) = P(b\mid a)\cdot\bone_n.
\end{eqnarray}
This can be generalized to words $x\in\Sigma^*$ and $y\in\Omega^*$ by setting
\begin{eqnarray}\label{e-etaP}
\eta(y\mid x) = P(y\mid x)\cdot\bone_n.
\end{eqnarray}
The $i$th component $\eta_i(y\mid x)$ is the conditional probability of transition 
from state $s_i$ to any state when the word $x$ is read and the word $y$ is emitted, $1\leq i\leq n$.
The vector $\eta(y\mid x)$ is the {\em result vector\/}\index{result vector} of the stochastic automaton 
reading $x\in\Sigma^*$ and emitting $y\in \Omega^*$.
\begin{proposition}\label{p-omega}
We have
\begin{itemize}
\item
$\eta(\epsilon\mid\epsilon) = \bone_n.$
\item
For all words $x\in\Sigma^*$,
$$\sum_{y\in\Omega^*} \eta(y\mid x) =\bone_n.$$
%\item
%For all words $x\in\Sigma^*$, $a\in\Sigma$, and $y\in\Omega^*$, 
%$$\sum_{b\in\Omega} \omega(yb\mid xa) =\omega(y\mid x).$$
%\item
%For all words $x\in\Sigma^*$, $y\in\Omega^*$, and $a\in\Sigma$, 
%$$\sum_{b\in\Omega} \omega(by\mid ax) =P(a)\cdot \omega(y\mid x).$$
\item
For all words $x,x'\in\Sigma^*$ and $y,y'\in\Omega^*$ with $|x|=|y|$, 
$$\eta(yy'\mid xx') =P(y\mid x)\cdot \eta(y'\mid x').$$
\end{itemize}
\end{proposition}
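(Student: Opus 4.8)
The plan is to reduce all three identities to the matrix facts already established, exploiting the defining relation $\eta(y\mid x)=P(y\mid x)\cdot\bone_n$ from~(\ref{e-etaP}). Since every claim is a statement about the vector obtained by right-multiplying a suitable probability matrix with the all-one vector, each part should follow by substituting the corresponding matrix identity and then simplifying. I would dispatch the items in the order first, third, second, taking the genuinely routine ones out of the way before the one requiring a word of justification.

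For the first item I would simply observe that $P(\epsilon\mid\epsilon)=I_n$, so that $\eta(\epsilon\mid\epsilon)=P(\epsilon\mid\epsilon)\cdot\bone_n=I_n\cdot\bone_n=\bone_n$.

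For the third item, which I treat before the second since it is the most direct, I would invoke Proposition~\ref{p-SA-prob1}: under the hypothesis $|x|=|y|$ one has $P(yy'\mid xx')=P(y\mid x)\cdot P(y'\mid x')$. Multiplying on the right by $\bone_n$ and using associativity of matrix multiplication together with~(\ref{e-etaP}) gives $\eta(yy'\mid xx')=P(y\mid x)\cdot\big(P(y'\mid x')\cdot\bone_n\big)=P(y\mid x)\cdot\eta(y'\mid x')$, as required.

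The second item is where a little care is needed, and is the step I expect to be the main (though still modest) obstacle. Here I would write $\sum_{y\in\Omega^*}\eta(y\mid x)=\sum_{y\in\Omega^*}P(y\mid x)\cdot\bone_n$ and pull the fixed vector $\bone_n$ outside the sum. The remaining sum $\sum_{y\in\Omega^*}P(y\mid x)$ is exactly the matrix $P(x)$ introduced in the preceding proposition; although the index set $\Omega^*$ is infinite, only the finitely many words $y$ with $|y|=|x|$ contribute, since $P(y\mid x)=O_n$ whenever $|x|\ne|y|$, so interchanging the summation with the multiplication by $\bone_n$ is legitimate. Because $P(x)$ was shown there to be (row) stochastic, each of its rows sums to $1$, which is precisely the identity $P(x)\cdot\bone_n=\bone_n$. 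Combining these observations yields $\sum_{y\in\Omega^*}\eta(y\mid x)=P(x)\cdot\bone_n=\bone_n$, completing the argument.
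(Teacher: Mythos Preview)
Your proof is correct and follows essentially the same approach as the paper: each item is obtained by applying the defining relation~(\ref{e-etaP}) and the relevant matrix identity ($P(\epsilon\mid\epsilon)=I_n$, stochasticity of $P(x)$, and Proposition~\ref{p-SA-prob1}, respectively). The only differences are cosmetic---you reorder the items and add an explicit remark about why the sum over $\Omega^*$ has only finitely many nonzero terms, which the paper leaves implicit.
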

\begin{proof}
First, we have $p(\epsilon\mid\epsilon) = P(\epsilon\mid\epsilon)\cdot \bone_n=I_n\cdot\bone_n = \bone_n$.

Second, we have 
$$\sum_{y\in\Omega^*} \eta (y\mid x) = \sum_{y\in\Omega^*} P(y\mid x)\cdot\bone_n = P(x)\cdot \bone_n =\bone_n,$$
since the matrix $P(x)$ is stochastic.

Finally, by Prop.~\ref{p-SA-prob1} and~(\ref{e-etaP}), we have 
\begin{eqnarray*}
\eta (yy'\mid xx') &=& P(yy'\mid xx') \cdot\bone_n = P(y\mid x)\cdot P(y'\mid x') \cdot\bone_n \\
&=& P(y\mid x)\cdot \eta(y'\mid x').
\end{eqnarray*}
\end{proof}
%The last assertion says that the behavior of a stochastic automaton can be 

A stochastic automaton can be thought of as an abstract machine which takes on a well-defined state at each step.  
However, an external observer is unable to determine this state with certainty.  
She can only consider the probability that the automaton is in a certain state.
Therefore, we introduce socalled state distributions.

Let $A= (S,\Sigma,\Omega,p)$ be a stochastic automaton with state set $S=\{s_1,\ldots,s_n\}$.
A row vector $\pi=(\pi_1,\ldots,\pi_n)$ is a {\em state distribution}\index{state distribution} of $A$ if the entries of $\pi$ are nonnegative and add up to~1.
The $i$th component $\pi_i$ is the probability that the automaton $A$ is in state $s_i$, $1\leq i\leq n$.

By convention, a state distribution $\pi=(\pi_1,\ldots,\pi_n)$ of $A$ will also be written 
as a formal linear combination of states,
\begin{eqnarray}
\pi = \pi_1s_1+\ldots+\pi_ns_n.
\end{eqnarray}
In particular, $\pi=s_i$ is the state distribution given by the $i$th unit vector $e_i$ which has~1 in position $i$ and 0's elsewhere, 
% that the automaton is in state $s_i$ with certainty, 
$1\leq i\leq n$.

Take the probability of emitting the word $y\in\Omega^*$ 
if the automaton $A$ is in state $s_i$ with probability $\pi_i$ and the word $x\in\Sigma^*$ is read.
This probability is given by the matrix product
\begin{eqnarray}
\pi\cdot \eta(y\mid x) = \sum_{i=1}^n \pi_i\cdot \eta_i(y\mid x).
\end{eqnarray}
For this, consider the mapping $ \eta^\pi :\Omega^*\times \Sigma^*\rightarrow[0,1]$ defined by
\begin{eqnarray}
 \eta^\pi (y\mid x) = \pi\cdot  \eta(y\mid x),\quad x\in\Sigma^*,y\in\Omega^*.
\end{eqnarray}
This mapping is well-defined, since $\sum_i\pi_i=1$, $0\leq  \eta_i(y\mid x)\leq 1$ for each $1\leq i\leq n$
and so $0\leq \sum_i \pi_i \eta_i(y\mid x)\leq 1$.

In view of the result vectors, define the sets
\begin{eqnarray}
\cD_A = \{ \eta^\pi\mid \pi\mbox{ state distribution of } A\}
\end{eqnarray}
and
\begin{eqnarray}
\cS_A = \{ \eta^s\mid s\mbox{ state of } \; A\}.
\end{eqnarray}

We introduce several notions of equivalence of state distributions and stochastic automata.
For this, let $A=(S_A,\Sigma,\Omega,p_A)$ and $B=(S_B,\Sigma,\Omega,p_B)$ be stochastic automata with the same input and output alphabets.
\begin{itemize}
\item
$A$ and $B$ are {\em equivalent\/}\index{stochastic automaton!equivalence}, written $A\equiv B$, if $\cD_A=\cD_B$.
\item
$A$ and $B$ are {\em S-equivalent\/}\index{stochastic automaton!S-equivalence}, written $A\equiv_S B$, if $\cS_A=\cS_B$.
\item
$A$ {\em covers} $B$\index{stochastic automaton!covering}, written $A\geq B$, if $\cD_A\supseteq\cD_B$.
\item
A state distribution $\pi_A$ of $A$ is {\em equivalent\/}\index{state distribution!equivalence} to a state distribution $\pi_B$ of $B$, 
written $\pi_A\equiv \pi_B$, if 
\begin{eqnarray}
\eta_A^{\pi_A}(y\mid x)  =  \eta_B^{\pi_B}(y\mid x)
\end{eqnarray}
for all $x\in\Sigma^*$ and $y\in\Omega^*$.
\item
Let $k\geq 0$.
A state distribution $\pi_A$ of $A$ is {\em $k$-equivalent\/}\index{state distribution!k-equivalence} to a state distribution $\pi_B$ of $B$, 
written $\pi_A\equiv_k \pi_B$, if 
\begin{eqnarray}
\eta_A^{\pi_A}(y\mid x)  =  \eta_B^{\pi_B}(y\mid x)
%\pi_A\cdot \eta_A(y\mid x) = \pi_B\cdot \eta_B(y\mid x)
\end{eqnarray}
for all $x\in\Sigma^*$ and $y\in\Omega^*$ with $|x|=|y|\leq k$.
%\item
%Two respective state distributions $\pi_A$ and $\pi_B$ of $A$ and $B$ are {\em $k$-equivalent\/} if $\omega_A^{\pi_A}  = \omega_B^{\pi_B}$.
\end{itemize}
These relations are equivalence relations on the class of stochastic automata resp.\ the class of state distributions.
The following result follows directly from the definitions.
\begin{lemma}\label{l-SA-equiv}
Let $A$ and $B$ be stochastic automata with the same input and output alphabets.
Then $A\equiv B$ if and only if $A\geq B$ and $B\geq A$.
%\begin{itemize}
%\item $A\equiv B$ if and only if $A\geq B$ and $B\geq A$.
%\item $A\equiv B$ if and only if for each state distribution of $A$ there is an equivalent state distribution of $B$ and vice versa.
%\item $A\equiv_S B$ if and only if for each state of $A$ there is an equivalent state of $B$ and vice versa.
%\end{itemize}
\end{lemma}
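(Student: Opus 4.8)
The plan is to recognize that this lemma is purely a matter of unfolding the definitions introduced just above and invoking the antisymmetry of set inclusion; there is essentially no analytic content. First I would recall that by definition $A\equiv B$ means exactly $\cD_A=\cD_B$, while $A\geq B$ means $\cD_A\supseteq\cD_B$ and, symmetrically, $B\geq A$ means $\cD_B\supseteq\cD_A$. Thus the entire statement is a translation of the set-theoretic fact that two sets are equal if and only if each is contained in the other.

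Concretely, for the forward direction I would assume $A\equiv B$, i.e.\ $\cD_A=\cD_B$. Since equal sets satisfy $\cD_A\supseteq\cD_B$ and $\cD_B\supseteq\cD_A$, we immediately obtain $A\geq B$ and $B\geq A$ directly from the definition of covering. For the converse I would assume both $A\geq B$ and $B\geq A$, which by definition give the two inclusions $\cD_A\supseteq\cD_B$ and $\cD_B\supseteq\cD_A$; combining these yields $\cD_A=\cD_B$, that is, $A\equiv B$.

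The only point worth stating carefully is that the objects being compared are the sets $\cD_A$ and $\cD_B$ of result mappings $\eta^\pi$ arising from state distributions, so that ``$A\equiv B$'' and the pair of coverings are being measured against the \emph{same} collection; this is guaranteed because $A$ and $B$ share the input alphabet $\Sigma$ and output alphabet $\Omega$, so the mappings $\eta^\pi\colon\Omega^*\times\Sigma^*\to[0,1]$ live in a common function space and set inclusion among them is meaningful. Granting this, there is no genuine obstacle: the equivalence $\cD_A=\cD_B\iff(\cD_A\supseteq\cD_B\text{ and }\cD_B\supseteq\cD_A)$ is the antisymmetry of $\subseteq$, and the lemma follows at once. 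I would therefore present the argument as a two-line biconditional chain rather than a structured case analysis, reflecting the paper's own remark that the result ``follows directly from the definitions.''
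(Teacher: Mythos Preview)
Your proposal is correct and matches the paper's treatment exactly: the paper gives no proof beyond the remark that the result ``follows directly from the definitions,'' and your unfolding of $A\equiv B$ as $\cD_A=\cD_B$ together with the antisymmetry of $\subseteq$ is precisely that. There is nothing to add.
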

%These assertions follow directly from the definitions.

\begin{lemma}\label{l-SA-equiv2} % Hilfssatz 4 i
Let $A$ and $B$ be stochastic automata.
Then $A$ covers $B$ if and only if for each state of $B$ there is an equivalent state distribution of $A$.
\end{lemma}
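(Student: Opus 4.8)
The plan is to unwind the definitions and then exploit the fact that the assignment $\pi\mapsto\eta^\pi$ is affine in the state distribution $\pi$. Recall that ``$A$ covers $B$'' means $\cD_A\supseteq\cD_B$, and that a state $s$ of $B$ is identified with the state distribution given by the corresponding unit vector, so that $\eta_B^s\in\cD_B$. The statement to be shown therefore reads: $\cD_A\supseteq\cD_B$ holds if and only if for every state $s$ of $B$ the function $\eta_B^s$ lies in $\cD_A$, i.e.\ equals $\eta_A^{\pi_A}$ for some state distribution $\pi_A$ of $A$.

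For the forward direction I would argue directly. Suppose $A\geq B$. Each state $s$ of $B$ is in particular a state distribution of $B$, so $\eta_B^s\in\cD_B\subseteq\cD_A$. Hence there is a state distribution $\pi_A$ of $A$ with $\eta_A^{\pi_A}=\eta_B^s$, which is exactly the assertion $s\equiv\pi_A$. This direction is immediate from the inclusion.

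The substance is the converse. Assume $S_B=\{s_1,\ldots,s_m\}$ and that for each $i$ there is a state distribution $\pi_A^{(i)}$ of $A$ with $\eta_B^{s_i}=\eta_A^{\pi_A^{(i)}}$. Given an arbitrary state distribution $\pi_B=\sum_i\beta_i s_i$ of $B$, I must produce a state distribution of $A$ realizing the same result function. The key observation is linearity in the first argument: by the definition $\eta^\pi(y\mid x)=\pi\cdot\eta(y\mid x)$ we have $\eta_B^{\pi_B}(y\mid x)=\sum_i\beta_i\,\eta_B^{s_i}(y\mid x)$ for all $x\in\Sigma^*$ and $y\in\Omega^*$. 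Setting $\pi_A:=\sum_i\beta_i\,\pi_A^{(i)}$ and running the same computation on the $A$-side gives $\eta_A^{\pi_A}(y\mid x)=\sum_i\beta_i\,\eta_A^{\pi_A^{(i)}}(y\mid x)=\sum_i\beta_i\,\eta_B^{s_i}(y\mid x)=\eta_B^{\pi_B}(y\mid x)$.

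The one point that needs checking — and essentially the only place where anything could go wrong — is that $\pi_A$ is itself a legitimate state distribution of $A$. Since the $\beta_i$ are nonnegative with $\sum_i\beta_i=1$ and each $\pi_A^{(i)}$ has nonnegative entries summing to $1$, the convex combination $\pi_A=\sum_i\beta_i\,\pi_A^{(i)}$ again has nonnegative entries summing to $1$, so it is a state distribution. Consequently $\eta_B^{\pi_B}=\eta_A^{\pi_A}\in\cD_A$, and since $\pi_B$ was arbitrary this yields $\cD_B\subseteq\cD_A$, i.e.\ $A\geq B$. I expect this affineness/convexity bookkeeping to be the main (and only mild) obstacle; the remainder is a direct translation of the definitions.
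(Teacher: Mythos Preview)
Your proof is correct and follows essentially the same approach as the paper: both directions are handled identically, with the converse constructed by taking the convex combination $\sum_j \pi'_j\,\pi(t_j)$ of the $A$-distributions corresponding to the states of $B$. Your write-up is more explicit about the linearity of $\pi\mapsto\eta^\pi$ and the verification that the convex combination is again a state distribution, points the paper leaves implicit.
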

\begin{proof}
Let $S_A = \{s_1,\ldots,s_n\}$ and $S_B = \{t_1,\ldots,t_m\}$. 

First, suppose $A$ covers $B$.
Then for each state $t\in S_B$ there is a state distribution $\pi$ of $A$ such that
$\eta_B^t = \eta^\pi$.

Conversely, suppose that for each state $t\in S_B$ there is an equivalent state distribution $\pi(t)$ of $A$.
Then the state distribution $\pi' = \sum_j \pi'_jt_j $ of $B$ 
is equivalent to the state distribution $\pi = \sum_j \pi'_j\pi(t_j)$ of $A$.
Thus $\cD_A \supseteq \cD_B$ and hence $A$ covers $B$.
\end{proof}

\begin{proposition}\label{p-AB-equiv1}
Let $A$ and $B$ be stochastic automata with the same input and output alphabets.
If $A$ and $B$ are S-equivalent, then $A$ and $B$ are equivalent.
%Then $A\equiv_S B$ implies $A\equiv B$.
%\item $A\equiv_S B$ implies $A\equiv B$.
%\begin{itemize}
%\item $A\geq B$ if and only if for each state of $B$ there is an equivalent state distribution of $A$.
%\item $A\equiv_S B$ implies $A\equiv B$.
%\end{itemize}
\end{proposition}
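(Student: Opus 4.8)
The plan is to show that $\cD_A$ is nothing more than the convex hull of $\cS_A$. Once this is established the implication is immediate: if $\cS_A = \cS_B$, then $\conv(\cS_A) = \conv(\cS_B)$, whence $\cD_A = \cD_B$, i.e.\ $A\equiv B$. So the real content lies in the identity $\cD_A = \conv(\cS_A)$, and the hypothesis of S-equivalence enters only at the very end.

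The crucial observation is that the assignment $\pi\mapsto\eta^\pi$ is affine in $\pi$. Writing $S_A=\{s_1,\ldots,s_n\}$, a state distribution $\pi=(\pi_1,\ldots,\pi_n)$ is by definition a convex combination of the unit vectors, $\pi=\sum_{i=1}^n\pi_i s_i$ with $\pi_i\geq 0$ and $\sum_i\pi_i=1$. Since $\eta^\pi(y\mid x)=\pi\cdot\eta(y\mid x)=\sum_{i=1}^n\pi_i\,\eta_i(y\mid x)$ and $\eta^{s_i}(y\mid x)=\eta_i(y\mid x)$ (the pure state $s_i$ corresponds to the $i$th unit vector), we obtain $\eta^\pi=\sum_{i=1}^n\pi_i\,\eta^{s_i}$ as functions $\Omega^*\times\Sigma^*\to[0,1]$.

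First I would read this identity from left to right to get $\cD_A\subseteq\conv(\cS_A)$: every $\eta^\pi\in\cD_A$ is displayed as a convex combination of the elements $\eta^{s_i}\in\cS_A$. For the reverse inclusion $\conv(\cS_A)\subseteq\cD_A$, I would take an arbitrary convex combination $\sum_i\lambda_i\,\eta^{s_i}$ with $\lambda_i\geq 0$ and $\sum_i\lambda_i=1$, and observe that the coefficient vector $\lambda=(\lambda_1,\ldots,\lambda_n)$ is itself a legitimate state distribution of $A$, so $\sum_i\lambda_i\,\eta^{s_i}=\eta^\lambda\in\cD_A$. This yields $\cD_A=\conv(\cS_A)$, and symmetrically $\cD_B=\conv(\cS_B)$.

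There is no serious obstacle here; the proposition is essentially the remark that $\cD$ is the convex hull of $\cS$. The only points needing care are bookkeeping: identifying $\eta^{s_i}$ with the $i$th component function $\eta_i$, and checking that an arbitrary convex coefficient vector qualifies as a state distribution so that the reverse inclusion goes through. With $\cD_A=\conv(\cS_A)$ and $\cD_B=\conv(\cS_B)$ in hand, the hypothesis $\cS_A=\cS_B$ forces $\cD_A=\cD_B$, that is, $A\equiv B$, completing the argument. (It is worth noting that the converse fails precisely because equal convex hulls need not have equal generating sets, which is why S-equivalence is strictly stronger.)
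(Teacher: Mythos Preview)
Your proof is correct and rests on the same observation the paper uses: the assignment $\pi\mapsto\eta^\pi$ is affine, so $\eta^\pi=\sum_i\pi_i\,\eta^{s_i}$. The paper argues this slightly more directly---for each state $s_i$ of $A$ it picks an equivalent state $t_i$ of $B$ (which exists since $\cS_A=\cS_B$) and observes that $\pi=\sum_j\pi_j s_j$ is then equivalent to $\sum_j\pi_j t_j$, giving $\cD_A\subseteq\cD_B$ and symmetrically the reverse---whereas you package the same computation as the structural identity $\cD_A=\conv(\cS_A)$ before invoking the hypothesis. Your framing has the minor advantage of isolating a reusable lemma (and your parenthetical remark about why the converse fails drops out of it naturally), but the mathematical content of the two arguments is identical.
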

\begin{proof}
%Let $A$ and $B$ be S-equivalent such that $S_A=\{s_1,\ldots,s_m\}$ and $S_B=\{t_1,\ldots,t_n\}$.
Let $A$ and $B$ be S-equivalent.
Then for each state $s_i$ in $A$ there is an equivalent state $t_i$ in $B$.
Thus the state distribution $\pi = \sum_j \pi_j s_j$ of $A$ is equivalent to the state distribution 
$\pi' = \sum_j \pi'_j t_j$ of $B$.
Thus $\cD_A\subseteq \cD_B$ and similarly $\cD_B\subseteq \cD_A$.
Hence, $A\equiv B$.
%By hypothesis, for each state $t$ of $B$ there is an equivalent state distribution $\pi(t)$ of $A$.
%Then each state distribution $\pi' = \sum_j \pi'_j t_j$ of $B$ is equivalent to the state distribution
%$\pi = \sum_j \pi'_j \pi(t_j)$ of $A$.
%Thus $\cS_B\subseteq \cS_A$ and hence $A\equiv B$.
%
%Claim that $A\geq B$ if and only if for each state of $B$ there is an equivalent distribution of $A$.
%Indeed, let $A\geq B$.
%Then for each state $t_j$ of $B$ there is a state distribution $\pi$ of $A$ such that
%$\eta_B^{t_j} =  \eta_A^\pi$.
%
%Conversely, suppose that for each state $t_j$ of $B$ there is state distribution $\pi(t_j)$ of $A$ such that
%$\eta_B^{t_j} =  \eta_A^{\pi(t_j)}$.
%Then each state distribution $\pi' = \sum_{j=1}^n \pi'_jt_j$ of $B$ 
%is equivalent to $\pi=\sum_{j=1}^n \pi'_j\pi(t_j)$.
%%since $\pi'\cdot  p_B(y\mid x) = \sum_j \pi'_j p_{B,i}(y\mid x) = \sum_j \pi'_j  p$.
%Since $\sum_j\pi'_j=1$, $\pi$ is a state distribution of $A$ and so $ \eta_B^{\pi'} =  \eta_A^\pi$.
%This proves the claim.
%
%It follows by Lemma~\ref{l-SA-equiv} that $A$ and $B$ are equivalent if and only if
%for each state of $A$ there is an equivalent state distribution of $B$ and 
%for each state of $B$ there is an equivalent state distribution of $A$.
%Suppose $A$ and $B$ are S-equivalent.
%Then for each state of $A$ there is an equivalent state of $B$, and vice versa.
%Hence, $A$ and $B$ are equivalent.
\end{proof}
This assertion gives rise to a decision algorithm for the equivalence of stochastic automata.
Instead of considering all state distributions, it is sufficient to take only the states (i.e., the state distributions of the states) into account (Alg.~\ref{a-SA-equiv}).
\begin{algorithm}\label{a-SA-equiv}
\caption{Equivalence of stochastic automata.}
\begin{algorithmic}
\REQUIRE Stochastic automata $A$ and $B$ with respective state sets $S_A=\{s_1,\ldots,s_m\}$ and $S_B=\{t_1,\ldots,t_n\}$ and common input and output alphabets
\ENSURE Output~1 if $A$ and $B$ are equivalent; otherwise, output~0
\STATE Compute $\cS_A = \{ \eta_A^{s_1},\ldots, p_A^{s_m}\}$
\STATE Compute $\cS_B = \{ \eta_B^{t_1},\ldots, p_B^{t_n}\}$
\IF  {$\cS_A = \cS_B$}
\RETURN 1  
\ELSE
\RETURN 0  
\ENDIF 
\end{algorithmic}
\end{algorithm}

\begin{example}
Consider the stochastic automaton $A=(\{s_1,s_2\},\{a,b\},\{c,d\},p)$ given by the substochastic matrices
$$
\begin{array}{ll}
P(c\mid a) = \left( \begin{array}{cc}\frac{1}{4}&\frac{1}{8}\\ 0&\frac{1}{2}\end{array} \right),\;&
P(d\mid a) = \left( \begin{array}{cc}\frac{1}{8}&\frac{1}{2}\\ 0&\frac{1}{2}\end{array} \right),\\
P(c\mid b) = \left( \begin{array}{cc}\frac{1}{4}&\frac{1}{4}\\ \frac{1}{4} & 0\end{array} \right),\;&
P(d\mid b) = \left( \begin{array}{cc}\frac{3}{8}&\frac{1}{8}\\ \frac{3}{4} & 0\end{array} \right).
\end{array}
$$
The corresponding stochastic matrices are
$$
P(a) = \left( \begin{array}{cc}\frac{3}{8}&\frac{5}{8}\\ 0&1\end{array} \right)\;\mbox{and}\;
P(b) = \left( \begin{array}{cc}\frac{5}{8}&\frac{3}{8}\\ 1&0\end{array} \right).
$$
The result vectors for single in- and outputs are
$$
\begin{array}{ll}
 \eta(c\mid a) = \left( \begin{array}{c}\frac{3}{8}\\ \frac{1}{2}\end{array} \right),\;&
 \eta(d\mid a) = \left( \begin{array}{c}\frac{5}{8}\\ \frac{1}{2}\end{array} \right),\\
 \eta(c\mid b) = \left( \begin{array}{c}\frac{1}{2}\\ \frac{1}{4}\end{array} \right),\; &
 \eta(d\mid b) = \left( \begin{array}{c}\frac{1}{2}\\ \frac{3}{4}\end{array} \right).
\end{array}
$$
For instance, the states $s_1$ and $s_2$ are not equivalent, since 
$ \eta^{s_1} (c\mid a) =\frac{3}{8}$ and $ \eta^{s_2} (c\mid a) =\frac{1}{2}$.
\EXX
\end{example}

%A stochastic Mealy automaton is a specific stochastic automaton in which state transition and output are independent stochastic events.
%A stochastic automaton $A=(S,\Sigma,\Omega, p)$ is a {\em stochastic Mealy automaton\/} if there are conditional probability measures 
%$p_1(\cdot\mid a,s)$ over $\Omega$ and $p_2(\cdot\mid a,s)$ over $S$ such that the conditional probabilities are
%$$p_A(b,s'\mid a,s) = p_1(b\mid a,s)\cdot p_2(s'\mid a,s)$$
%for all $a\in\Sigma$, $b\in\Omega$, and $s,s'\in S$.
%
%A stochastic Moore automaton is a specific stochastic automaton in which the output only depends on the subsequent state.
%A stochastic automaton $A=(S,\Sigma,\Omega, p)$ is a {\em stochastic Moore automaton\/} if there are conditional probability measures 
%$p_o(\cdot\mid s')$ over $\Omega$ and $p_s(\cdot\mid a,s)$ over $S$ such that the conditional probabilities are
%$$p_A(b,s'\mid a,s) = p_o(b\mid s')\cdot p_s(s'\mid a,s).$$

%-------------------------------------------------------------------------------------- 

\chapter{Reduction}
The objective of reduction is to construct stochastic automata with the same performance but a reduced number of states.
For this, a stochastic automaton $A$ 
is called {\em reduced\/}\index{stochastic automaton!reduced} if $A$ has no two distinct states which are equivalent.
\begin{proposition}\label{l-red}
Let $A$ and $B$ be reduced and S-equivalent stochastic automata.
Then the state sets of $A$ and $B$ have the same cardinality.
\end{proposition}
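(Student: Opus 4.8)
The plan is to turn this into an elementary counting argument by identifying each state set bijectively with its set of result functions, the identification being forced by the reducedness hypothesis.

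First I would fix $S_A = \{s_1, \ldots, s_n\}$ and $S_B = \{t_1, \ldots, t_m\}$ and consider the map $\phi_A \colon S_A \to \cS_A$ sending a state $s$ to its result function $\eta^s$. By the definition of $\cS_A = \{\eta^s \mid s \text{ a state of } A\}$, this map is surjective. To see that it is injective I would appeal to the reducedness of $A$: the automaton has no two distinct equivalent states, and since equivalence of two states of the same automaton means precisely equality of the associated functions $\eta^s$ on $\Omega^* \times \Sigma^*$, we obtain $s \ne s' \Rightarrow \eta^s \ne \eta^{s'}$. Hence $\phi_A$ is a bijection and $|S_A| = |\cS_A|$. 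The identical argument applied to the reduced automaton $B$ gives $|S_B| = |\cS_B|$.

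Next I would invoke S-equivalence. By definition $A \equiv_S B$ means $\cS_A = \cS_B$, so in particular $|\cS_A| = |\cS_B|$. Chaining the equalities then gives $|S_A| = |\cS_A| = |\cS_B| = |S_B|$, which is exactly the asserted equality of cardinalities.

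I do not anticipate a genuine obstacle here: every step is a direct unfolding of the definitions. The only point that warrants explicit care is the passage from \emph{reduced} to injectivity of $\phi_A$, where one must recall that equivalence of two states is defined as coincidence of their result functions for all inputs and outputs; once this is made explicit, the remainder is pure finite counting.
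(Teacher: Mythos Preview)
Your proof is correct and follows essentially the same idea as the paper's: both exploit that reducedness makes the assignment $s \mapsto \eta^s$ injective, and that S-equivalence means $\cS_A = \cS_B$. The only cosmetic difference is that the paper builds injections $S_A \to S_B$ and $S_B \to S_A$ directly and then invokes finiteness, whereas you factor through the equality $|S_A| = |\cS_A| = |\cS_B| = |S_B|$, which is slightly more streamlined but not substantively different.
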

\begin{proof}
For each state $s\in S_A$ there is an equivalent state $s'\in S_B$ in the sense that the induced state distributions are equivalent.
Since the automaton $B$ is reduced, the state $s$ cannot be equivalent to another state $s''\in S_B$, since then the states $s'$ and $s''$ would be equivalent.
This gives an injective mapping $S_A\rightarrow S_B$.
In the same way, we obtain an injective mapping $S_B\rightarrow S_A$.
But the state sets are finite and hence there is a bijection between $S_A$ and $S_B$.
\end{proof}

For each stochastic automaton, there is an equivalent and reduced stochastic automaton 
which can be constructed using the powerset construction.
\begin{theorem}\label{t-redA}
Each stochastic automaton is S-equivalent to a reduced stochastic automaton.
\end{theorem}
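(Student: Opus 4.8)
The plan is to realize the reduced automaton $B$ as a quotient of $A$ by the equivalence relation on its states, made concrete by selecting one representative per equivalence class and redirecting all transition mass onto these representatives. First I would form the classes of $S_A$ under the relation identifying $s$ and $s'$ whenever $\eta_A^s = \eta_A^{s'}$, and pick a set $R = \{s_{i_1},\ldots,s_{i_r}\}$ of representatives, one per class; here $r$ equals the number of distinct functions in $\cS_A$. On the state set $R$ I define a new automaton $B$ whose conditional probabilities send, from a representative $s_i$ on input $a$ and output $b$, the total mass of each original class to that class's representative:
$$\tilde p(b, s_j \mid a, s_i) = \sum_{s_k \equiv s_j} p(b, s_k \mid a, s_i), \qquad s_j \in R.$$
A short check shows each $\tilde p(\cdot,\cdot\mid a, s_i)$ is a genuine conditional distribution, since summing over $b\in\Omega$ and over all representatives $s_j$ regroups into the full original row sum $\sum_b\sum_k p(b,s_k\mid a,s_i) = 1$.

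The heart of the argument is to show that this redirection leaves the input--output behavior of the representatives unchanged, that is $\eta_B^{s_i} = \eta_A^{s_i}$ for every $s_i \in R$. I would prove this by induction on the common length $|x| = |y|$, the cases $|x|\ne|y|$ and $x=y=\epsilon$ being immediate from~(\ref{e-SA-phat2}) and~(\ref{e-SA-phat1}). For the inductive step I expand $\eta_B^{s_i}(by'\mid ax')$ using the single-symbol instance of Prop.~\ref{p-omega} inside $B$, apply the induction hypothesis to replace $\eta_B^{s_j}$ by $\eta_A^{s_j}$ for the representatives, and then invoke the defining property of the classes, namely $\eta_A^{s_k} = \eta_A^{s_j}$ whenever $s_k\equiv s_j$. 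This last identity lets me pull the class sum hidden in $\tilde p$ back out into a sum over all states of $A$, which collapses exactly to the $A$-recursion $\sum_k p(b,s_k\mid a,s_i)\,\eta_A^{s_k}(y'\mid x') = \eta_A^{s_i}(by'\mid ax')$.

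Once the behaviors agree, the conclusion is immediate. Since $R$ meets every class and every state of $A$ shares its behavior with its representative, $\cS_B = \{\eta_B^{s}\mid s\in R\} = \{\eta_A^{s}\mid s\in R\} = \cS_A$, whence $A\equiv_S B$. Moreover distinct representatives come from distinct classes and hence carry distinct result-vector functions, so $B$ has no two equivalent states and is reduced.

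I expect the main obstacle to be exactly the well-definedness issue that the representative construction is built to avoid. If one instead took the literal quotient, letting the states be the classes and defining class-to-class probabilities $\sum_{s_k\in[s_j]} p(b,s_k\mid a,s_i)$, one would need this quantity to be independent of which $s_i$ is chosen from $[s_i]$; equivalence of the $\eta$-functions does not secure this, since $s_i\equiv s_{i'}$ only places $e_i-e_{i'}$ in the kernel of the behavior map (and that kernel is preserved under right multiplication by $P(b\mid a)$), which is strictly weaker than having vanishing class-masses. Keeping the actual representatives as the source states of $B$ sidesteps this entirely, so the only genuine work is the length induction above, where the sole point of care is to apply the hypothesis to the representative targets while applying the class-collapsing identity to rewrite them as arbitrary class members.
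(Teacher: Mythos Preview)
Your proposal is correct and follows essentially the same route as the paper: form the equivalence classes of states under equality of $\eta$-functions, pick a representative per class, define the new transition probabilities by summing the original mass over the target class while keeping the chosen representative as the source, and then run a length induction (via Prop.~\ref{p-omega}) to show the $\eta$-functions of the representatives are unchanged. The only cosmetic difference is that the paper names the new states $Z_1,\ldots,Z_r$ (the classes) rather than the representatives themselves; the transition formula and the inductive calculation are identical, and your closing remark on why a source-independent quotient would fail is a useful observation that the paper does not make explicit.
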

\begin{proof}
Let $A=(S,\Sigma,\Omega,p_A)$ be a stochastic automaton with state set $S=\{s_1,\ldots,s_n\}$.
%The relation $\equiv$ on the set of state distributions of $A$ is an equivalence relation.
Two states $s$ and $t$ of $A$ are equivalent if $ \eta^s(y\mid x) =  \eta^t(y\mid x)$ for all $x\in\Sigma^*$ and $y\in\Omega^*$.
This relation is an equivalence relation on the state set $S$ of $A$.
Let $Z=\{Z_1,\ldots,Z_r\}$ denote the corresponding set of equivalence classes of the state set of $A$;
in particular, the set $Z$ is a partition of the state set $S$.
For each class $Z_i$ choose a fixed representative $z_i$, $1\leq i\leq r$.

Define the stochastic automaton $B=(Z,\Sigma,\Omega,p_B)$ with conditional probabilities 
\begin{eqnarray}\label{e-SA-AB}
p_B(b,Z_j\mid a,Z_i) = \sum_{z\in Z_j} p_A(b,z\mid a,z_i),\quad a\in\Sigma,\;b\in\Omega,\;1\leq i,j\leq r.
\end{eqnarray}
The automaton $B$ is reduced, since if the states $Z_i$ and $Z_j$ are equivalent, then $z_i=z_j$ and thus $i=j$ by definition.

Claim that the automata $A$ and $B$ are S-equivalent. 
Indeed, take the states $z_i\in Z_i$ in $A$ and $Z_i$ in $B$, $1\leq i\leq r$.
Then for all $a\in\Sigma$ and $b\in\Omega$,
\begin{eqnarray*}
 \eta_A^{z_i}(b\mid a) 
%&=&  \eta_{A,i}(b\mid a) \\
&=&  \sum_{s\in S_A} p_A(b,s\mid a,z_i) \\
&=& \sum_{Z_j\in Z}\sum_{z\in Z_j} p_A(b,z\mid a,z_i)\\
&=& \sum_{Z_j\in Z} p_B(b,Z_j\mid a,Z_i) \\
%&=&  \eta_{B,i}(b\mid a) \\
&=&   \eta_{B}^{Z_i}(b\mid a).
\end{eqnarray*}
Thus the states $z_i$ and $Z_i$ are 1-equivalent.
Suppose the states $z_i$ and $Z_i$ are $k$-equivalent for some $k\geq 1$.
Then for all $x\in\Sigma^*$, $y\in\Omega^*$ with $|x|=|y|=k$, $a\in\Sigma$, and $b\in\Omega$, we have
\begin{eqnarray*}
 \eta_A^{z_i}(by\mid ax) 
&=& \sum_{s\in S_A} p_A(b,s\mid a,z_i)  \eta_{A}^{s}(y\mid x), \quad \mbox{by Prop.~\ref{p-omega}},\\
&=& \sum_{Z_j\in Z} \sum_{s\in Z_j} p_A(b,s\mid a,z_i)  \eta_{A}^{s}(y\mid x) \\
&=& \sum_{Z_j\in Z} p_B(b,Z_j\mid a,Z_i)  \eta_{A}^{z_j}(y\mid x) \\
&=& \sum_{Z_j\in Z} p_B(b,Z_j\mid a,Z_i)  \eta_{B}^{Z_j}(y\mid x),\quad \mbox{by induction}, \\
&=&  \eta_{B}^{Z_i}(by\mid ax), \quad \mbox{by Prop.~\ref{p-omega}}.
\end{eqnarray*}
Therefore, the states $z_i$ and $Z_i$ are $(k+1)$-equivalent. 
Thus the states $z_i$ and $Z_i$ are equivalent, $1\leq i\leq r$, and hence the automata $A$ and $B$ are S-equivalent.
\end{proof}

The equivalence of two state distributions can be decided by considering inputs and outputs of restricted length.
\begin{proposition}\label{p-SA-s-equiv}
Let $A$ be a stochastic automaton with $n$ states.
Two state distributions of $A$ are equivalent if and only if they are $(n-1)$-equivalent.
\end{proposition}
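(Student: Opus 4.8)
The plan is to recast $k$-equivalence as a linear orthogonality condition and then show that the relevant ascending chain of subspaces of $\RR^n$ stabilises after at most $n-1$ steps. Write $v=\pi-\pi'$ for the difference of the two state distributions, a row vector in $\RR^n$. Since $\eta^\pi(y\mid x)=\pi\cdot\eta(y\mid x)$, the distributions $\pi$ and $\pi'$ are $k$-equivalent exactly when $v\cdot\eta(y\mid x)=0$ for all $x\in\Sigma^*$, $y\in\Omega^*$ with $|x|=|y|\le k$ (the case $|x|\ne|y|$ being vacuous, as then $\eta(y\mid x)=0$), and they are equivalent exactly when this holds for all $x,y$. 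So, setting $W_k=\mathrm{span}\{\eta(y\mid x)\mid |x|=|y|\le k\}\subseteq\RR^n$ and $W_\infty=\bigcup_{k\ge 0}W_k$, and writing $v\perp W$ for $v\cdot w=0$ for all $w\in W$, the statement reduces to proving that $v\perp W_{n-1}$ implies $v\perp W_\infty$, for which it suffices to show $W_{n-1}=W_\infty$.

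The key tool is the recursion $\eta(by\mid ax)=P(b\mid a)\cdot\eta(y\mid x)$ for single symbols $a\in\Sigma$, $b\in\Omega$, a special case of Prop.~\ref{p-omega}, which says that result vectors of length $k+1$ arise by applying the substochastic matrices $P(b\mid a)$ to those of length $k$. From this I would first record that $W_0=\mathrm{span}\{\bone_n\}$, since $\eta(\epsilon\mid\epsilon)=\bone_n$, and that $W_0\subseteq W_1\subseteq W_2\subseteq\cdots$ is an ascending chain. The \emph{crucial lemma} is that the chain is constant as soon as it stops growing: if $W_k=W_{k+1}$, then $W_{k+1}=W_{k+2}$. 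To see this, any new generator of $W_{k+2}$ has the form $P(b\mid a)\,\eta(y'\mid x')$ with $|x'|=|y'|=k+1$; since $\eta(y'\mid x')\in W_{k+1}=W_k$, I can expand it as a linear combination of result vectors $\eta(y_i\mid x_i)$ with $|x_i|=|y_i|\le k$, and then each $P(b\mid a)\,\eta(y_i\mid x_i)=\eta(by_i\mid ax_i)$ has length $\le k+1$ and hence lies in $W_{k+1}$; thus $W_{k+2}\subseteq W_{k+1}$.

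Finally I would close with a dimension count. While the chain grows strictly each step raises the dimension by at least one, and starting from $\dim W_0=1$ with upper bound $\dim\RR^n=n$ at most $n-1$ strict increases are possible. Hence either $W_j=W_{j+1}$ for some $j\le n-1$, in which case the lemma and induction make the chain constant from index $j$ onward so that $W_{n-1}=W_j=W_\infty$, or the chain increases strictly through index $n-1$, forcing $\dim W_{n-1}=n$ and again $W_{n-1}=\RR^n=W_\infty$. Either way $W_{n-1}=W_\infty$, so orthogonality of $v$ to $W_{n-1}$ coincides with orthogonality to $W_\infty$; since equivalence trivially implies $(n-1)$-equivalence, the two notions agree. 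The main obstacle is the stabilisation lemma, namely getting the bookkeeping right so that applying $P(b\mid a)$ to a vector already known to lie in $W_k$ cannot escape $W_{k+1}$; once that holds, the dimension argument and the sharp bound $n-1$ follow immediately from $\dim W_0=1$.
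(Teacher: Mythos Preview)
Your proof is correct and follows essentially the same route as the paper: both arguments recast $k$-equivalence as orthogonality of $\pi-\pi'$ to the span $V_k$ (your $W_k$) of result vectors of length at most $k$, prove the stabilisation lemma $V_k=V_{k+1}\Rightarrow V_k=V_{k+l}$ via the recursion $\eta(by\mid ax)=P(b\mid a)\,\eta(y\mid x)$, and then use the dimension count starting from $\dim V_0=1$ to force stabilisation by step $n-1$. The only cosmetic difference is that you spell out the inductive step of the stabilisation lemma and treat the case $\dim W_{n-1}=n$ separately, whereas the paper absorbs both into a single chain-of-dimensions line.
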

\begin{proof}
Let $k\geq 0$.
Two state distributions $\pi$ and $\pi'$ of $A$ are $k$-equivalent if and only if
for all words $x\in \Sigma^*$ and $y\in\Omega^*$ with $|x|=|y|\leq k$,
$$\pi\cdot \eta(y\mid x) = \pi'\cdot \eta(y\mid x).$$
Equivalently,
$$(\pi-\pi')\cdot \eta(y\mid x) = 0.$$
Let $V_k$ be the vector subspace of $\RR^n$ generated by the result vectors
$ \eta(y\mid x)$, where $x\in \Sigma^*$ and $y\in\Omega^*$ with $|x|=|y|\leq k$.
We have $V_0 = \RR\cdot\bone_n$ and $V_k\subseteq V_{k+1}$ for each $k\geq 0$.
Thus $V_k$ is a subspace of $V_{k+1}$ for all $k\geq 0$.
Moreover, let $V_A$ be the vector subspace of $\RR^n$ generated by all the result vectors
$ \eta(y\mid x)$, where $x\in \Sigma^*$ and $y\in\Omega^*$.
Then $V_k$ is a subspace of $V_A$ for each $k\geq 0$.

Consider the ascending chain of subspaces of $\RR^n$:
$$V_0\subseteq V_1 \subseteq V_2\subseteq \ldots \subseteq V_A \subseteq \RR^n.$$ 
Since the subspaces are of finite dimension, there exists a smallest number $j\geq 0$ such that $V_j=V_{j+1}$. 
Then we have $V_j=V_{j+l}$ for each $l\geq 0$, 
since by Prop.~\ref{p-omega} the result vector $ \eta(ay\mid bx)$ is the image of the 
linear mapping $P(b\mid a)$ applied to the result vector $ \eta(y\mid x)$.
%Indeed, it is clear that $V_{j+1}\subseteq V_{j+2}$.
%Conversely, let $ p(ay\mid bx)$ be a result vector in $V_{j+2}$ with $a\in\Sigma$, $b\in\Omega$, $x\in\Sigma^*$ and $y\in\Sigma^*$ with $|x|=|y|=j+1$.
%Then by Prop.~\ref{p-omega}, the vector $ p(ay\mid bx)$ is the image of the linear mapping $P(b\mid a)$ applied to the 
%result vector $ p(y\mid x)$ in $V_{j+1}$.
%But $V_j=V_{j+1}$ and so the image space $V_{j+2}$ has dimension 
By construction, $V_j=V_A$ and thus by considering dimensions,
$$1=\dim V_0<\dim V_1< \dim V_2< \ldots <\dim V_j =\dim V_A \leq n=\dim \RR^n,$$ 
where $j\leq n-1$.

It follows that any two state distributions $\pi$ and $\pi'$ of $A$ are equivalent 
if and only if $(\pi-\pi')\cdot \eta(y\mid x) = 0$ for all $ \eta(y\mid x)\in V_{j}$.
Equivalently, the state distributions $\pi$ and $\pi'$ are $j$-equivalent with $j=n-1$.
\end{proof}

\begin{corollary}
Let $A$ and $B$ be stochastic automata with the same input and output alphabets as well as $m$ and $n$ states, respectively.
A state distribution of $A$ is equivalent to a state distribution of $B$ if and only if the state distributions are $(m+n-1)$-equivalent.
%Two state distributions of $A$ and state distributions of $B$ can be decided by a stochastic automaton with $m+n$ states.
\end{corollary}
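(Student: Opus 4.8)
The plan is to reduce this two-automaton statement to the single-automaton result already established in Proposition~\ref{p-SA-s-equiv} by passing to the disjoint union of $A$ and $B$. Write $A=(S_A,\Sigma,\Omega,p_A)$ with $S_A=\{s_1,\ldots,s_m\}$ and $B=(S_B,\Sigma,\Omega,p_B)$ with $S_B=\{t_1,\ldots,t_n\}$, relabelling if necessary so that $S_A$ and $S_B$ are disjoint. I define a new stochastic automaton $C=(S_A\cup S_B,\Sigma,\Omega,p_C)$ on the $m+n$ states by letting $p_C$ agree with $p_A$ on transitions between $S_A$-states, agree with $p_B$ on transitions between $S_B$-states, and assign probability $0$ to every transition crossing between the two blocks. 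First I would check that $C$ is a genuine stochastic automaton: for a fixed $(a,s)$ with $s\in S_A$ the cross-block entries vanish, so the marginal sum over $\Omega\times(S_A\cup S_B)$ equals the marginal sum of $p_A$ over $\Omega\times S_A$, which is $1$ by~(\ref{e-SA-marg0}); the case $s\in S_B$ is symmetric.

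Next I would record that, since the cross-block transition probabilities are zero, each matrix $P_C(b\mid a)$ is block diagonal with blocks $P_A(b\mid a)$ and $P_B(b\mid a)$; hence by~(\ref{e-SA-prob2}) every $P_C(y\mid x)$ is block diagonal with blocks $P_A(y\mid x)$ and $P_B(y\mid x)$. Consequently the result vectors of $C$ split along the two blocks: for a state distribution $\pi_A$ of $A$, the padded row vector $(\pi_A,0)$ of $C$ satisfies $\eta_C^{(\pi_A,0)}(y\mid x)=\eta_A^{\pi_A}(y\mid x)$ for all $x\in\Sigma^*$ and $y\in\Omega^*$, and symmetrically $\eta_C^{(0,\pi_B)}(y\mid x)=\eta_B^{\pi_B}(y\mid x)$ for a state distribution $\pi_B$ of $B$. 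Note that $(\pi_A,0)$ and $(0,\pi_B)$ are legitimate state distributions of $C$, since their entries are nonnegative and sum to $1$.

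With this dictionary in place, the cross-automaton relation to be analysed, namely $\eta_A^{\pi_A}(y\mid x)=\eta_B^{\pi_B}(y\mid x)$ for all $x,y$, is literally the statement that the two state distributions $(\pi_A,0)$ and $(0,\pi_B)$ of the single automaton $C$ are equivalent; likewise the corresponding bounded identity, restricted to $|x|=|y|\le k$, is literally $k$-equivalence of $(\pi_A,0)$ and $(0,\pi_B)$ in $C$. Since $C$ has $m+n$ states, Proposition~\ref{p-SA-s-equiv} applied to $C$ states that these two distributions are equivalent if and only if they are $(m+n-1)$-equivalent. Translating back through the dictionary yields exactly the assertion that $\pi_A$ and $\pi_B$ are equivalent if and only if they are $(m+n-1)$-equivalent.

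I expect the only genuine obstacle to be the routine but essential bookkeeping in the embedding step: verifying that $C$ is well-defined as a stochastic automaton and that the block-diagonal form of $P_C(y\mid x)$ makes its result vectors restrict to those of $A$ and $B$ on the respective blocks. Once that block structure is established, the rest is a direct transcription, and the bound $m+n-1$ is inherited verbatim from Proposition~\ref{p-SA-s-equiv} with $n$ replaced by $m+n$.
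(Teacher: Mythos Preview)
Your proposal is correct and follows essentially the same approach as the paper: form the disjoint-union automaton $C$ with block-diagonal transition matrices, embed $\pi_A$ and $\pi_B$ as the padded distributions $(\pi_A,0)$ and $(0,\pi_B)$, and invoke Proposition~\ref{p-SA-s-equiv} on the $(m+n)$-state automaton $C$. Your write-up is in fact slightly more careful than the paper's in verifying that $C$ is a legitimate stochastic automaton and that the block structure propagates to $P_C(y\mid x)$.
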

\begin{proof}
We may assume that the state sets of $A$ and $B$ are disjoint.
Define the stochastic automaton $C=(S,\Sigma,\Omega,p)$ with the same input and output alphabets, 
state set $S_C=S_A\cup S_B$, and substochastic matrices
$$P_C(b\mid a) = \left( \begin{array}{cc} P_A(b\mid a) & 0\\ 0 & P_B(b\mid a) \end{array} \right),\quad a\in\Sigma,\;b\in \Omega.$$
Then state distribution $\pi_A=(\pi_{A,1},\ldots,\pi_{A,m})$ of $A$ is equivalent to a state distribution $\pi_B=(\pi_{B,1},\ldots,\pi_{B,n})$ of $B$ if and only if
the length-$m+n$ state distributions 
$(\pi_{A,1},\ldots,\pi_{A,m},0\ldots,0)$ and $(0,\ldots,0,\pi_{B,1},\ldots,\pi_{B,n})$ of $C$ are equivalent.
By Prop.~\ref{p-SA-s-equiv}, the latter distributions are equivalent if and only if they are $(m+n-1)$-equivalent, since the automaton $C$ has $m+n$ states.
\end{proof}
The bound $n-1$ given in Prop.~\ref{p-SA-s-equiv} is tight as shown in the following.
\begin{example}
Let $n\geq 1$.
Consider the stochastic automaton 
$$A=(\{s_1,\ldots,s_n\},\{a,b\},\{c,d\},p)$$ with nonzero transition probabilities
$$p(c,s_{i+1}\mid a,s_i) = p(c,s_{i+1}\mid b,s_i) = 1,\quad 1\leq i\leq n-1,$$
and
$$p(c,s_1\mid a,s_n) = p(d,s_1\mid b,s_n) = 1.$$
Note that $A$ is a deterministic automaton (Fig.~\ref{fi-SA-An}).
If one starts in the state $s_1$ or $s_2$ and reads a word of length $l\leq n-2$, the word $c^l$ is emitted with certainty.
Thus the states $s_1$ and $s_2$ are $(n-2)$-equivalent.
However, if one begins in the state $s_1$ and inputs a word of length $n-1$, the word $c^{n-1}$ is emitted with certainty.
On the other hand, if one starts in the state $s_2$ and reads the word $b^{n-1}$, the word $c^{n-1}d$ is emitted with certainty.
Hence, the states $s_1$ and $s_2$ are not $(n-1)$-equivalent.
\EXX
\end{example}
\begin{figure}[hbt]
\begin{center}
\mbox{$
\xymatrix{
%\txt{start}\ar@{-->}[d] && \\
*++[o][F-]{s_1} \ar@{->}[rr]^{a,c:1;\; b,c:1} && *++[o][F-]{s_2} \ar@{->}[dd]^{a,c:1;\;b,c:1} \\
&&\\
*++[o][F-]{s_4} \ar@{->}[uu]^{a,c:1;\;b,d:1} && *++[o][F-]{s_3} \ar@{->}[ll]^{a,c:1;\;b,c:1} \\
}
$}
\end{center}
\caption{State diagram of stochastic automaton $A$ with $n=4$.}\label{fi-SA-An}
\end{figure}
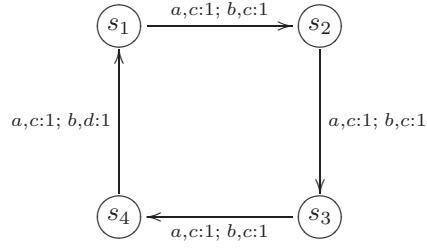

\begin{example}\label{e-sa-red}
Consider the stochastic automaton (Fig.~\ref{f-b34})
$$A=(\{s_1,s_2,s_3,s_4\},\{a\},\{b,c\},p)$$ with substochastic matrices
$$
P(b\mid a) =
\left(
\begin{array}{cccc}
0&0&\frac{1}{6}&\frac{1}{6}\\
0&0&\frac{1}{3}&\frac{1}{3}\\
\frac{1}{4}&\frac{1}{4}&0&0\\
0&0&0&\frac{1}{2}
\end{array}
\right)
\quad\mbox{and}\quad
P(c\mid a) =
\left(
\begin{array}{cccc}
0&0&\frac{1}{3}&\frac{1}{3}\\
0&0&\frac{1}{6}&\frac{1}{6}\\
\frac{1}{4}&\frac{1}{4}&0&0\\
0&0&0&\frac{1}{2}
\end{array}
\right).
$$
Then we have
$$
 \eta (b\mid a) =
\left(
\begin{array}{c}
\frac{1}{3}\\
\frac{2}{3}\\
\frac{1}{2}\\
\frac{1}{2}\\
\end{array}
\right)
\quad\mbox{and}\quad
 \eta (c\mid a) =
\left(
\begin{array}{c}
\frac{2}{3}\\
\frac{1}{3}\\
\frac{1}{2}\\
\frac{1}{2}\\
\end{array}
\right).
$$
By Prop.~\ref{p-omega}, we obtain 
$$
 \eta (bb\mid aa) =  \eta (bc\mid aa) =
\left(
\begin{array}{c}
\frac{1}{6}\\
\frac{1}{3}\\
\frac{1}{4}\\
\frac{1}{4}\\
\end{array}
\right)
=\frac{1}{2} \eta(b\mid a)
$$
and 
$$ \eta (cb\mid aa) =  \eta (cc\mid aa) =
\left(
\begin{array}{c}
\frac{1}{3}\\
\frac{1}{6}\\
\frac{1}{4}\\
\frac{1}{4}\\
\end{array}
\right)
=\frac{1}{2} \eta(c\mid a).
$$
Thus by the proof of Prop.~\ref{p-SA-s-equiv}, we have $V_0=\RR$ and $V_1=V_2=V_A$ with $\dim V_A=2$.
By Prop.~\ref{p-SA-s-equiv}, two states of $A$ are equivalent if and only if they are 1-equivalent.
The states $s_1$, $s_2$, and $s_3$ are pairwise non-equivalent, 
while the states $s_3$ and $s_4$ are 1-equivalent, since $ \eta^{s_3}(b\mid a) =  \eta^{s_4}(b\mid a)$ and $ \eta^{s_3}(c\mid a) =  \eta^{s_4}(c\mid a)$.
Thus the state set decomposed into equivalence classes gives $Z=\{Z_1=\{s_1\},Z_2=\{s_2\}, Z_3=\{s_3,s_4\}\}$.
\begin{figure}[hbt]
\begin{center}
\mbox{$
\xymatrix{
%\txt{start}\ar@{-->}[d] && \\
&& *++[o][F-]{s_1} \ar@/^/[lld]^{a,b:1/6;\; a,c:1/3}\ar@{->}[drr]^{a,b: 1/6;\; a,c:1/3} \\
*++[o][F-]{s_3} \ar@/^/[urr]^{a,b:1/4;\; a,c:1/4}\ar@/^/[drr]^{a,b:1/4;\; a,c:1/4} &&&& *++[o][F-]{s_4} \ar@(ur,dr)[]^{a,b:1/2;\;a:c: 1/2}\\
&& *++[o][F-]{s_2} \ar@/^/[ull]^{a,b:1/3;\;a,c:1/6}\ar@{->}[urr]_{a,b:1/3;\; a,c:1/6}\\
}
$}
\end{center}
\caption{State diagram of stochastic automaton $A$.}\label{f-b34}
\end{figure}

By~(\ref{e-SA-AB}), 
the corresponding reduced stochastic automaton $B$ (Fig.~\ref{f-b3}) 
with representative $s_3\in Z_3$ has the substochastic matrices
$$
P_B(b\mid a) =
\left(
\begin{array}{ccc}
0&0&\frac{1}{3}\\
0&0&\frac{2}{3}\\
\frac{1}{4}&\frac{1}{4}&0
\end{array}
\right)
\quad\mbox{and}\quad
P_B(c\mid a) =
\left(
\begin{array}{cccc}
0&0&\frac{2}{3}\\
0&0&\frac{1}{3}\\
\frac{1}{4}&\frac{1}{4}&0
\end{array}
\right),
$$
\begin{figure}[hbt]
\begin{center}
\mbox{$
\xymatrix{
%\txt{start}\ar@{-->}[d] && \\
&& *++[o][F-]{Z_1} \ar@/^/[lld]^{a,b:1/3;\; a,c:2/3} \\
*++[o][F-]{Z_3} \ar@/^/[urr]^{a,b:1/4;\; a,c:1/4}\ar@/^/[drr]^{a,b:1/4;\; a,c:1/4} \\
&& *++[o][F-]{Z_2} \ar@/^/[ull]^{a,b:2/3;\;a,c:1/3}\\
}
$}
\end{center}
\caption{State diagram of stochastic automaton $B$.}\label{f-b3}
\end{figure}

%\xymatrix{
%%\txt{start}\ar@{-->}[d] && \\
%*++[o][F-]{1} 
% \ar@(ul,dl)[]_{x}
%\ar@/^/[rr]^y
%%\ar@{-->}[d] 
%&&
%*++[o][F-]{2} 
% \ar@(ur,dr)[]^{y}
%\ar@/^/[ll]^x \\
%& *++[o][F-]{3} 
% \ar@(ur,dr)[]^{y}
%\ar@{->}[ul]^x
%%\ar@/^/[ul]^a
%& \\
%%\txt{end} &&
%}
%$}
%}}
On the other hand, 
the associated reduced stochastic automaton $B'$ (Fig.~\ref{f-b4}) with representative $s_4\in Z_3$ has the substochastic matrices
$$
P_{B'}(b\mid a) =
\left(
\begin{array}{ccc}
0&0&\frac{1}{3}\\
0&0&\frac{2}{3}\\
0&0&\frac{1}{2}
\end{array}
\right)
\quad\mbox{and}\quad
P_{B'}(c\mid a) =
\left(
\begin{array}{cccc}
0&0&\frac{2}{3}\\
0&0&\frac{1}{3}\\
0&0&\frac{1}{2}
\end{array}
\right).
$$
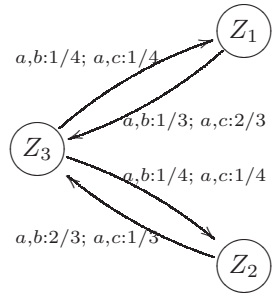
\begin{figure}[hbt]
\begin{center}
\mbox{$
\xymatrix{
%\txt{start}\ar@{-->}[d] && \\
&& *++[o][F-]{Z_1} \ar@{->}[lld]^{a,b:1/3;\; a,c:2/3} \\
*++[o][F-]{Z_3} \ar@(ul,dl)[]_{a,b:1/2;\; a,c:1/2}\\
&& *++[o][F-]{Z_2} \ar@{->}[ull]^{a,b:2/3;\;a,c:1/3}\\
}
$}
\end{center}
\caption{State diagram of stochastic automaton $B'$.}\label{f-b4}
\end{figure}
\EXX
\end{example}

Two stochastic automata $A=(S_A,\Sigma,\Omega,p_A)$ and $B=(S_B,\Sigma,\Omega,p_B)$ are
{\em isomorphic\/}\index{isomorphism} if there is a bijective mapping
$\phi:S_A\mapsto S_B$ such that
for all $s,s'\in S_A$, $a\in\Sigma$ and $b\in\Omega$,
\begin{eqnarray}
p_B(b,\phi(s')\mid a,\phi(s))  = p_A(b,s'\mid a,s).
\end{eqnarray}
\begin{example}
The reduced stochastic automata $B$ and $B'$ in Ex.~\ref{e-sa-red} are not isomorphic.
The only mapping that would make sense is 
$\phi:\{s_1,s_2,s_3\} \rightarrow\{s_1,s_2,s_4\}$ with 
$\phi(s_1)=s_1$, $\phi(s_2)=s_2$, and $\phi(s_3)=s_4$.
\EXX
\end{example}

The vector space $V_A$ associated with a stochastic automaton $A$ 
can be used to further investigate the problem of equivalence of state distributions.
For this, we need a vector-space basis of the space $V_A$.
To this end, let $\Sigma$ and $\Omega$ be alphabets.
The set 
\begin{eqnarray}\label{e-orderM}
M=\{(y\mid x)\mid x\in\Sigma^*,y\in\Omega^*,|x|=|y|\} 
\end{eqnarray}
together with the binary operation of concatenation, 
\begin{eqnarray}
(y\mid x)(y'\mid x') = (yy'\mid xx'), \quad x,x'\in\Sigma^*,\; y,y'\in\Omega^*,
\end{eqnarray}
is a free monoid generated by the set $N = \{(b\mid a)\mid a\in\Sigma,b\in\Omega\}$.
Note that the monoids $M$ and $(\Omega\times \Sigma)^*$ are isomorphic by the assignment
\begin{eqnarray}
(b_1\ldots b_n\mid a_1\ldots a_n)\mapsto (b_1,a_1)\ldots(b_n,a_n).
\end{eqnarray}
This isomorphism maps the generating set $N$ to the cartesian product $\Omega\times \Sigma$. 
In this way, we may assume that the elements of $N$ are lexicographically ordered.
This ordering can be extended to the monoid $M$.

Let $A$ be a stochastic automaton with $n$ states.
Consider the vector subspace $V_A$ of $\RR^n$ generated by the result vectors
$ \eta(y\mid x)$, where $x\in \Sigma^*$ and $y\in\Omega^*$.
Suppose the space $V_A$ has $\RR$-dimension $d$.
Then define the $n\times d$ matrix 
\begin{eqnarray}
H_A = \left(\begin{array}{ccc} \eta(y_1\mid x_1)&\ldots&  \eta(y_d\mid x_d)\end{array}\right),
\end{eqnarray}
where $x_1=y_1=\epsilon$ and
$(y_{i+1}\mid x_{i+1})\in M$ is the lexicographically smallest element of $M$ that
is linearly independent of the previous vectors $ \eta(y_1\mid x_1),\ldots,  \eta(y_i\mid x_i)$, $1\leq i\leq d-1$.
The columns of the matrix $H_A$ form a basis of the space $V_A$.
Since $d\leq n$, the matrix $H_A$ has rank $d$.
%The introduction of the space $V_A$ allows to incorporate linear algebra into the study of stochastic automata.

\begin{example}
Reconsider the stochastic automaton $A$ in Ex.~\ref{e-sa-red}.
The matrix $H_A$ has rank~2 and is given by
$$H_A
= \left(  \eta(\epsilon,\epsilon)\;   \eta(b\mid a)\right) 
= \left(
\begin{array}{cc}
1 & \frac{1}{3}\\ 
1 & \frac{2}{3}\\ 
1 & \frac{1}{2}\\ 
1 & \frac{1}{2} 
\end{array}
\right).$$
\EXX
\end{example}

The matrix $H_A$ can be used to characterize equivalent state distributions.
\begin{proposition}\label{p-HA-pi}
Let $A$ be a stochastic automaton.
Two state distributions $\pi$ and $\pi'$ of $A$ are equivalent if and only if 
$\pi H_A=\pi' H_A$.

In particular, two states $s_i$ and~$s_j$ of $A$ are equivalent if and only if 
the $i$th and $j$th rows of $H_A$ are equal.
\end{proposition}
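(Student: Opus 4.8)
The plan is to recast the equivalence condition as an orthogonality statement and then exploit that the columns of $H_A$ form a basis of $V_A$. First I would recall that $\pi$ and $\pi'$ are equivalent precisely when $\eta^\pi(y\mid x)=\eta^{\pi'}(y\mid x)$ for all $x\in\Sigma^*$ and $y\in\Omega^*$, i.e.\ when
$$(\pi-\pi')\cdot\eta(y\mid x)=0$$
for every result vector $\eta(y\mid x)$. Since these result vectors generate $V_A$, this is the same as demanding that the row vector $\pi-\pi'$ annihilate all of $V_A$.

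The key step is to replace the infinitely many result vectors by the finitely many basis columns $\eta(y_1\mid x_1),\ldots,\eta(y_d\mid x_d)$ of $H_A$. Because these columns form a basis of $V_A$, a vector annihilates every element of $V_A$ if and only if it annihilates each basis column. As the $k$th entry of the row vector $(\pi-\pi')H_A$ equals $(\pi-\pi')\cdot\eta(y_k\mid x_k)$, annihilating all $d$ columns is exactly the identity $(\pi-\pi')H_A=0$, that is, $\pi H_A=\pi'H_A$. Reading the chain of equivalences in both directions yields the first claim. The main point requiring care is the converse direction: from $\pi H_A=\pi'H_A$ one recovers full equivalence only because the columns genuinely \emph{span} $V_A$ (not merely lie in it), so that orthogonality to the basis propagates to every result vector; this spanning property is exactly what the construction of $H_A$ secures.

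For the final assertion I would specialize to the unit-vector state distributions. The state $s_i$ corresponds to $\pi=e_i$, and $e_iH_A$ is precisely the $i$th row of $H_A$. Applying the first part to $e_i$ and $e_j$ shows that $s_i$ and $s_j$ are equivalent if and only if $e_iH_A=e_jH_A$, i.e.\ if and only if the $i$th and $j$th rows of $H_A$ agree.
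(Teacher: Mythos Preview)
Your proof is correct and follows essentially the same approach as the paper: both argue that equivalence of $\pi$ and $\pi'$ amounts to $(\pi-\pi')\cdot\eta(y\mid x)=0$ for all result vectors, then use that the columns of $H_A$ form a basis of $V_A$ to reduce this to $(\pi-\pi')H_A=0$, and finally specialize to unit vectors for the statement about rows. Your phrasing in terms of annihilating $V_A$ is a minor cosmetic variation, but the logic is identical.
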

\begin{proof}
Let $\pi$ and $\pi'$ be equivalent state distributions.
Then $\pi\cdot \eta(y\mid x) = \pi'\cdot \eta(y\mid x)$ for all $x\in\Sigma^*$ and $y\in\Omega^*$.
This particularly holds for the columns of the matrix $H_A$ and therefore $\pi H_A=\pi' H_A$.

Conversely, let $\pi H_A=\pi' H_A$ for state distributions $\pi$ and $\pi'$.
The columns of the matrix $H_A$ form a basis of the space $V_A$ and so we have for each result vector $ \eta(y\mid x)\in V_A$, 
$$ \eta(y\mid x) = \sum_{i=1}^d \alpha_i \eta(y_i\mid x_i),\quad \alpha_i\in\RR, \;1\leq i\leq d.$$
By hypothesis, we have $\pi\cdot \eta(y_i\mid x_i) = \pi'\cdot \eta(y_i\mid x_i)$ for each $1\leq i\leq d$.
Thus $\pi \cdot \eta(y\mid x) = \pi'\cdot \eta(y\mid x)$.
Hence, $\pi$ and $\pi'$ are equivalent.

The second assertion is an immediate consequence of the first one.
\end{proof}

\begin{proposition}\label{p-H-pi}
Let $A$ and $B$ be reduced stochastic automata.
If $A$ and $B$ are S-equivalent, then $H_A = H_B$ up to a permutation of matrix rows.
\end{proposition}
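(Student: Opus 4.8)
The plan is to extract a single state permutation from the hypothesis and then to show that this one permutation simultaneously reorders the result vectors, the subspaces $V_A$ and $V_B$, and hence the entire greedy construction of $H_A$ and $H_B$.

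First I would exploit reducedness to build a bijection of states. Since $A$ is reduced, the assignment $s\mapsto\eta_A^s$ is injective, hence a bijection from $S_A$ onto $\cS_A$; likewise $t\mapsto\eta_B^t$ is a bijection from $S_B$ onto $\cS_B$. The hypothesis $A\equiv_S B$ says exactly $\cS_A=\cS_B$, so composing these two bijections yields a bijection $\sigma:S_A\to S_B$ characterized by $\eta_A^s=\eta_B^{\sigma(s)}$ for every $s\in S_A$. Writing $S_A=\{s_1,\ldots,s_n\}$ and $S_B=\{t_1,\ldots,t_n\}$ (the cardinalities agree because $\sigma$ is a bijection, cf.\ Prop.~\ref{l-red}), the map $\sigma$ is encoded by a permutation $\pi$ of $\{1,\ldots,n\}$ with $\eta_A^{s_i}=\eta_B^{t_{\pi(i)}}$ for each $i$.

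Next I would lift this to the level of result vectors. For any $(y\mid x)\in M$, the $i$th component of $\eta_A(y\mid x)$ is $\eta_A^{s_i}(y\mid x)$, and the $\pi(i)$th component of $\eta_B(y\mid x)$ is $\eta_B^{t_{\pi(i)}}(y\mid x)=\eta_A^{s_i}(y\mid x)$. Hence, letting $P$ be the permutation matrix associated with $\pi$, we get $\eta_B(y\mid x)=P\,\eta_A(y\mid x)$ for every $(y\mid x)$. Since $P$ is invertible, it follows that $V_B=P\,V_A$, so the two spaces share the same dimension $d$, and moreover a finite family of result vectors is linearly independent in $A$ if and only if the correspondingly-indexed family is linearly independent in $B$.

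Finally I would run the greedy definitions of $H_A$ and $H_B$ in parallel. Both begin with the column indexed by $(\epsilon\mid\epsilon)$. At each stage the next column is the lexicographically smallest $(y\mid x)\in M$ whose result vector is independent of those already chosen; because this tie-breaking order is defined purely on the index set $M$ and is the same for both automata, and because the invertible map $P$ preserves linear (in)dependence, the two constructions select the identical sequence $(y_1\mid x_1),\ldots,(y_d\mid x_d)$ and terminate together. Therefore column $j$ of $H_B$ equals $P$ times column $j$ of $H_A$, i.e.\ $H_B=P\,H_A$, which states precisely that $H_A$ and $H_B$ agree after the row permutation $\pi$. The main obstacle, and the step deserving the most care, is this last one: one must verify that the greedy selection makes identical choices at every stage, which hinges on the ordering of $M$ being fixed in advance, independent of the automaton, together with the fact that the invertible permutation $P$ preserves linear independence.
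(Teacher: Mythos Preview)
Your proof is correct and follows essentially the same approach as the paper's: use reducedness plus S-equivalence to get a bijection of states, observe that this makes all result vectors agree up to a fixed row permutation, and conclude $H_B=P\,H_A$. The paper's own argument is much terser—it simply relabels states so that equivalent states share the same name and then asserts $\eta_A(y\mid x)=\eta_B(y\mid x)$, hence $H_A=H_B$—whereas you spell out carefully why the greedy column-selection procedure picks the same sequence of indices in $M$ on both sides (because the ordering on $M$ is automaton-independent and the invertible permutation $P$ preserves linear independence); this last point is exactly the detail the paper glosses over, so your version is in fact the more complete of the two.
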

\begin{proof}
Suppose $A$ and $B$ have the same state set.
We may assume that the state $s$ in $A$ is equivalent to the state $s$ in $B$. 
Then $\eta_A(y\mid x) = \eta_B(y\mid x)$ for all $x\in\Sigma^*$ and $y\in\Omega^*$ by the ordering of these vectors.
Therefore, $H_A=H_B$.

Suppose $A$ and $B$ have the different state sets.
Then by Prop.~\ref{l-red}, the matrices can only differ by a row permutation.
\end{proof}

\begin{theorem}\label{th-4}
Let $A=(S,\Sigma,\Omega,p)$ be a stochastic automaton with $n$ states given by the collection of matrices 
$\{P_A(b\mid a)\mid a\in\Sigma,b\in\Omega\}$.

Let $\{P_B(b\mid a)\mid a\in\Sigma,b\in\Omega\}$ be a collection of non-negative $n\times n$ matrices 
such that
$$P_A(b\mid a) H_A = P_B(b\mid a)H_A$$
for all $a\in\Sigma$ and $b\in\Omega$. 
Then the collection
$\{P_B(b\mid a)\mid a\in\Sigma,b\in\Omega\}$ 
defines a stochastic automaton $B$ with $n$ states such that the state $s\in S_A$ is equivalent to the state
$s\in S_B$.
In particular, the automata $A$ and $B$ are S-equivalent.
\end{theorem}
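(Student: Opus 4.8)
The plan is to establish two claims: that the matrices $P_B(b\mid a)$ assemble into a genuine stochastic automaton on $n$ states, and that the identity correspondence $s_i\mapsto s_i$ pairs equivalent states of $A$ and $B$. Both rest on a single structural observation which I would record first: the all-one vector $\bone_n=\eta(\epsilon\mid\epsilon)$ is the first column of $H_A$, and every result vector $\eta_A(y\mid x)$ lies in the column space $V_A$ of $H_A$ (indeed the columns of $H_A$ were chosen as a basis of $V_A$). Since the hypothesis $P_A(b\mid a)H_A=P_B(b\mid a)H_A$ says precisely that the two matrices agree on each column of $H_A$, linearity upgrades this to agreement on all of $V_A$.

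To see that $B$ is well defined, I would evaluate $P_B(b\mid a)\bone_n$. As $\bone_n$ is a column of $H_A$, the hypothesis gives $P_B(b\mid a)\bone_n=P_A(b\mid a)\bone_n=\eta_A(b\mid a)$, so the row sums of $P_B(b\mid a)$ coincide with those of the substochastic matrix $P_A(b\mid a)$ and lie in $[0,1]$. Summing over $b$ yields $\bigl(\sum_{b}P_B(b\mid a)\bigr)\bone_n=\sum_b\eta_A(b\mid a)=P_A(a)\bone_n=\bone_n$, using that $P_A(a)=\sum_bP_A(b\mid a)$ is stochastic. Together with nonnegativity this makes each $P_B(b\mid a)$ substochastic and their sum stochastic, so the collection defines a stochastic automaton with $n$ states.

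For the equivalence I would show $\eta_A(y\mid x)=\eta_B(y\mid x)$ for all $x,y$ by induction on $|x|=|y|$ (when $|x|\neq|y|$ both sides vanish). The length-$0$ case is $\eta_A(\epsilon\mid\epsilon)=\bone_n=\eta_B(\epsilon\mid\epsilon)$. For the step, writing the words as $by$ over $ax$ with $a\in\Sigma$, $b\in\Omega$, and applying Prop.~\ref{p-omega} to each automaton, I obtain
\[
\eta_A(by\mid ax)=P_A(b\mid a)\,\eta_A(y\mid x)=P_B(b\mid a)\,\eta_A(y\mid x)=P_B(b\mid a)\,\eta_B(y\mid x)=\eta_B(by\mid ax),
\]
where the second equality uses $\eta_A(y\mid x)\in V_A$ and the agreement of the two matrices on $V_A$, and the third uses the induction hypothesis. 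Reading off the $i$th coordinate gives $\eta_A^{s_i}=\eta_B^{s_i}$ for every $i$, so the state $s_i$ of $A$ is equivalent to the state $s_i$ of $B$, whence $\cS_A=\cS_B$ and $A\equiv_S B$.

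The one point that needs care — the only real obstacle — is the passage from ``the matrices agree on the columns of $H_A$'' to ``they agree on the whole subspace $V_A$,'' together with the observation that the vector actually processed at the inductive step, namely $\eta_A(y\mid x)$, is one of the generators of $V_A$ and hence lies in it. Once these are in place, everything else is a direct application of Prop.~\ref{p-omega} and the stochasticity of $P_A(a)$.
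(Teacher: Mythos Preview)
Your proof is correct and follows essentially the same route as the paper: both verify stochasticity via $\bone_n$ being the first column of $H_A$, then prove $\eta_A(y\mid x)=\eta_B(y\mid x)$ by induction on the word length, the key step being that $\eta_A(y\mid x)$ lies in the column span of $H_A$ so the hypothesis applies. Your explicit framing of the passage from ``agreement on columns of $H_A$'' to ``agreement on all of $V_A$'' is exactly the point the paper invokes when it says the second equation uses that $\eta_A(y\mid x)$ is a linear combination of the columns of $H_A$.
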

\begin{proof}
Since $\bone_n$ is the first column of $H_A$, we have $P_A(b\mid a) \cdot\bone_n = P_B(b\mid a)\cdot\bone_n$.
Then 
$$\sum_b P_B(b\mid a)\cdot\bone_n = \sum_b P_A(b\mid a)\cdot\bone_n = P_A(a)\cdot\bone_n = \bone_n.$$ 
Thus the collection of matrices $\{P_B(b\mid a)\mid a\in\Sigma,b\in\Omega\}$ 
defines a stochastic automaton $B$ with $P_B(a) = \sum_b P_B(b\mid a)$.

Let $s\in S_A$.
Since for all $a\in\Sigma$ and $b\in\Omega$,
$$\eta_A(b\mid a) = P_A(b\mid a)\cdot\bone_n = P_B(b\mid a)\cdot\bone_n = \eta_B(b\mid a),$$
it follows that $s$ in $A$ is 1-equivalent to $s$ in $B$.

Suppose for some $k\geq 1$, $s$ in $A$ is $k$-equivalent to $s$ in $B$.
Then for all $x\in\Sigma^k$, $a\in\Sigma$, $y\in \Omega^k$ and $b\in\Omega$,
\begin{eqnarray*}
\eta_A(by\mid ax) 
&=& P_A(b\mid a) \eta_A(y\mid x),\quad \mbox{by Prop.~\ref{p-omega}},\\
&=& P_B(b\mid a) \eta_A(y\mid x) \\
&=& P_B(b\mid a) \eta_B(y\mid x),\quad \mbox{by induction},\\
&=& \eta_B(by\mid ax),\quad  \mbox{by Prop.~\ref{p-omega}},
\end{eqnarray*}
where the second equation uses the fact that $\eta_A(y\mid x)$ is a linear combination of the columns of $H_A$.
Thus $s$ in $A$ is $k+1$-equivalent to $s$ in $B$.
Hence, $A$ and $B$ are S-equivalent.
\end{proof}

The following result provides the reverse statement. % of Thm.~\ref{t-}.
\begin{theorem}
Let $A$ and $B$ be stochastic automata with the same state set such that
state $s$ in $A$ is equivalent to state $s$ in $B$ and
$B$ is defined by the collection of matrices
$\{P_B(b\mid a)\mid a\in\Sigma,b\in\Omega\}$. 
Then for all $a\in\Sigma$ and $b\in\Omega$,
$$P_A(b\mid a) H_A = P_B(b\mid a)H_A.$$
\end{theorem}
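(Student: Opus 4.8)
The plan is to read off the fact that the columns of $H_A$ are themselves result vectors and to use Prop.~\ref{p-omega}, which converts left-multiplication by $P(b\mid a)$ into prepending the symbol pair $(b\mid a)$ to the argument of the result vector.

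First I would translate the hypothesis into a statement about result vectors. Saying that state $s$ in $A$ is equivalent to state $s$ in $B$ for every state $s$ means $\eta_A^{s}(y\mid x) = \eta_B^{s}(y\mid x)$ for all $x\in\Sigma^*$ and $y\in\Omega^*$. Since the $i$th component of the result vector $\eta(y\mid x)$ is exactly $\eta^{s_i}(y\mid x)$, this is the same as the componentwise identity
$$\eta_A(y\mid x) = \eta_B(y\mid x)\qquad\text{for all }x\in\Sigma^*,\ y\in\Omega^*.$$
In particular the two automata generate the same subspace $V_A=V_B$, and because the lexicographic basis-selection procedure depends only on these vectors, the columns of $H_A$ and $H_B$ coincide; but all I actually need below is that each column of $H_A$ is a result vector shared by both automata.

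The core step would be a column-by-column computation. Let $(y_i\mid x_i)$ be the $i$th chosen basis pair, so that the $i$th column of $H_A$ is $\eta_A(y_i\mid x_i)$. Fixing $a\in\Sigma$ and $b\in\Omega$ and applying Prop.~\ref{p-omega} with the single-symbol prefix $(b\mid a)$ gives
$$P_A(b\mid a)\cdot\eta_A(y_i\mid x_i) = \eta_A(by_i\mid ax_i),$$
and likewise $P_B(b\mid a)\cdot\eta_B(y_i\mid x_i) = \eta_B(by_i\mid ax_i)$. By the result-vector identity above, both $\eta_A(y_i\mid x_i)=\eta_B(y_i\mid x_i)$ and $\eta_A(by_i\mid ax_i)=\eta_B(by_i\mid ax_i)$, so the two products agree:
$$P_A(b\mid a)\cdot\eta_A(y_i\mid x_i) = P_B(b\mid a)\cdot\eta_A(y_i\mid x_i).$$
Since this holds for every basis column, the matrices $P_A(b\mid a)$ and $P_B(b\mid a)$ act identically on all columns of $H_A$, which is precisely the assertion $P_A(b\mid a)H_A = P_B(b\mid a)H_A$.

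There is no serious obstacle here; the argument is essentially a one-line consequence of Prop.~\ref{p-omega} once the hypothesis is phrased in terms of result vectors. The only point worth flagging is that $P_A(b\mid a)$ and $P_B(b\mid a)$ need not agree as operators on all of $\RR^n$ — they may differ off the invariant subspace $V_A$ — so the statement is genuinely about their restriction to $V_A$, and the columns of $H_A$ are chosen exactly to span $V_A$. This is also why the converse direction needs the additional nonnegativity and row-sum bookkeeping of Theorem~\ref{th-4}, whereas this direction does not.
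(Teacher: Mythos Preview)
Your proof is correct and follows essentially the same route as the paper: translate the state-by-state equivalence into the identity $\eta_A(y\mid x)=\eta_B(y\mid x)$, apply Prop.~\ref{p-omega} to rewrite $P(b\mid a)\eta(y\mid x)$ as $\eta(by\mid ax)$, and conclude column by column since the columns of $H_A$ are result vectors. The paper presents this in a single chain of equalities; your additional remarks about $V_A=V_B$ and the contrast with Theorem~\ref{th-4} are accurate but not needed for the argument itself.
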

\begin{proof}
Since state $s$ in $A$ is equivalent to state $s$ in $B$, we have 
$\eta_A(y\mid x) = \eta_B(y\mid x)$ for all $x\in\Sigma^*$ and $y\in\Omega^*$.
Then by Prop.~\ref{p-omega},
$$P_A(b\mid a)\eta_A(y\mid x)  
= \eta_A(by\mid ax)  
= \eta_B(by\mid ax)  
= P_B(b\mid a)\eta_B(y\mid x) .$$
Since all columns of $H_A$ are of the form $\eta_A(y\mid x)$, it follows that
$P_A(b\mid a) H_A = P_B(b\mid a)H_A$ for all $a\in\Sigma$ and $b\in\Omega$.
\end{proof}

\begin{corollary}
Let $A$ be a stochastic automaton with $n$ states.
If the vector space $V_A$ has dimension $d=n$, then $A$ is reduced. 
\end{corollary}
\begin{proof}
Let $d=n$.
Then the $n\times d$ matrix $H_A$ which has independent columns has full rank and therefore is invertible.
However, by Prop.~\ref{p-HA-pi}, two states $s_i$ and~$s_j$ are equivalent if and only if 
the $i$th and $j$th rows of $H_A$ are equal.
Therefore, $A$ is reduced.
\end{proof}
The converse of this result is not valid.
\begin{example}
Take the stochastic automaton $A = (\{s_1,s_2,s_3\},\{a\},\{b,c\},p)$, where
$$ P(b\mid a) = \left( \begin{array}{ccc}
1 & 0 & 0\\
0 & 0 & 0\\
\frac{1}{2} & 0 & 0\\ 
\end{array} \right)
\quad\mbox{and}\quad
P(c\mid a) = \left( \begin{array}{ccc}
0 & 0 & 0\\
0 & 1 & 0\\
0 & \frac{1}{2} & 0\\ 
\end{array} \right).
$$
Then 
$$ H_A = \left( \begin{array}{cc}
1 & 1 \\
1 & 0 \\
1 & \frac{1}{2} \\
\end{array} \right).
$$
By Prop.~\ref{p-HA-pi}, the automaton $A$ is reduced, but the matrix $H_A$ does not have full rank.
%However, by Prop.~\ref{p-HA-pi}, two states $s_i$ and~$s_j$ are equivalent if and only if 
%but the automaton $A$ is not reduced by considering the third rows of the 
%substochastic matrices.
\EXX
\end{example}

%----------------------------------------------------------------------
\chapter{Minimality}
The objective is to further decrease the state set of a reduced stochastic automaton.
For this, a stochastic automaton $A$ is called {\em minimal\/}\index{stochastic automaton!minimal} if no state of $A$ is equivalent to another state distribution of $A$.
That is, if $S_A=\{s_1,\ldots,s_n\}$, 
there is no state $s_i$ and no state distribution $\pi$ different from $s_i$ such that $ \eta^{s_i}= \eta^\pi$.

\begin{example}\label{e-sa-red-min}
Reconsider the reduced stochastic automaton $B'$ in Ex.~\ref{e-sa-red}.
This automaton is not minimal.
To see this, note that 
$$H_{B'} 
= \left( \eta(\epsilon\mid\epsilon)\; \eta(b\mid a)\right)
= \left( \begin{array}{cc}
1 & \frac{1}{3}\\
1 & \frac{2}{3}\\
1 & \frac{1}{2}
\end{array} \right).$$
For the state distributions $e_3 = (0,0,1)$ and $\pi=(\frac{1}{2},\frac{1}{2},0)$,
we have $e_3 \cdot H_A=\pi\cdot H_A$.
Thus by Prop.~\ref{p-HA-pi}, the state distributions $s_3$ and $\pi$ are equivalent.
\EXX
\end{example}

A {\em permutation matrix\/}\index{permutation matrix} is a quadratic matrix with entries 0 and~1 which has exactly one entry~1 in each row and each column.
An $n\times n$ permutation matrix represents a bijection of a set of $n$ elements.

\begin{lemma}\label{l-perm}
Let $M$ be a stochastic $m\times n$ matrix and $N$ be a stochastic $n\times m$ matrix.
If $M\cdot N=I_m$ and $N\cdot M=I_n$, then $m=n$ and the matrices $M$ and $N$ are permutation matrices.
\end{lemma}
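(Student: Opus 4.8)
The plan is to first reduce to the square case and then exploit the fact that a row-stochastic matrix acts on the space of state distributions. For the equality $m=n$, I would argue by rank alone: from $M\cdot N=I_m$ the linear map $M\colon\RR^n\to\RR^m$ admits a right inverse, so $\rank M=m$ and hence $m\le n$; symmetrically, $N\cdot M=I_n$ forces $\rank N=n$ and $n\le m$. Thus $m=n$, and $M$ and $N$ are square matrices that are inverse to one another over $\RR$.

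Next I would regard each matrix as acting on row vectors. Writing $\Delta$ for the set of state distributions of length $n$ (nonnegative row vectors whose entries sum to~$1$), right multiplication $\pi\mapsto\pi M$ maps $\Delta$ into $\Delta$, since for stochastic $M$ the image $\pi M$ is again nonnegative and $\sum_j(\pi M)_j=\sum_i\pi_i\sum_j M_{ij}=\sum_i\pi_i=1$; likewise $\pi\mapsto\pi N$ maps $\Delta$ into $\Delta$. Because $N=M^{-1}$, these two maps, each the restriction of a linear map of $\RR^n$, are mutually inverse bijections of $\Delta$ onto itself.

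The crucial step is to show that this bijection carries the extreme points of $\Delta$, which are exactly the unit vectors $e_1,\dots,e_n$, to extreme points. Suppose $e_iM$ were a nontrivial convex combination $\tfrac12(u+w)$ with $u,w\in\Delta$ and $u\neq w$. Multiplying on the right by $N$ and using that this map is linear and sends $\Delta$ into $\Delta$, we obtain $e_i=\tfrac12(uN+wN)$ with $uN,wN\in\Delta$ and $uN\neq wN$ (by injectivity of right multiplication by the invertible $N$). This contradicts the extremality of $e_i$. Hence $e_iM$ is an extreme point of $\Delta$, that is, a unit vector; but $e_iM$ is precisely the $i$th row of $M$.

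It follows that every row of $M$ is a unit vector, so $M$ is a $0/1$ matrix with exactly one~$1$ in each row. Because $M$ is invertible its rows must be pairwise distinct, so the assignment sending $i$ to the position of the~$1$ in its $i$th row is a bijection of $\{1,\dots,n\}$; that is, $M$ is a permutation matrix, and then $N=M^{-1}=M^{\top}$ is the permutation matrix of the inverse permutation. I expect the only real subtlety to be the extreme-point argument, where one must be careful to invoke both the linearity and the simplex-preservation of the inverse map $\pi\mapsto\pi N$; the remaining steps are routine.
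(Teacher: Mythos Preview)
Your proof is correct. The rank argument for $m=n$ matches the paper's opening step, but from there the two proofs diverge.

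The paper proceeds by direct entry-wise manipulation: fixing a row index $i$, it picks a column $l$ with $a_{il}>0$ and $b_{li}>0$, then uses the off-diagonal zeros of $NM$ together with nonnegativity to force $b_{ji}=0$ for $j\ne l$, and finally deduces $a_{il}b_{li}=1$, hence $a_{il}=b_{li}=1$. Your argument is instead geometric: you observe that right multiplication by $M$ and by $N$ are mutually inverse affine maps carrying the probability simplex $\Delta$ to itself, and that any such bijection must send extreme points to extreme points. Since the extreme points of $\Delta$ are precisely the unit vectors, each row $e_iM$ is a unit vector, and invertibility forces distinct rows, giving a permutation matrix.

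Both approaches are short; yours is more conceptual and immediately explains \emph{why} the conclusion holds (an affine automorphism of a polytope permutes its vertices), while the paper's is more hands-on and avoids invoking convexity notions. Your remark about the ``only real subtlety'' is apt: the extreme-point step needs exactly that $\pi\mapsto\pi N$ is linear, maps $\Delta$ into $\Delta$, and is injective, all of which you verify.
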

\begin{proof}
Matrices can be viewed as linear mappings between vector spaces.
If $M\cdot N=I_m$, the linear mappings given by $M$ and $N$ are surjective and injective, respectively.
Similarly, if $N\cdot M=I_n$, the linear mappings given by $M$ and $N$ are injective and surjective, respectively.
Therefore, the linear mappings determined by $M$ and $N$ are bijective; i.e., the matrices $M$ and $N$ are invertible with $m=n$.

Let $M=(a_{ij})$ and $N=(b_{ij})$ be $n\times n$ matrices with $M\cdot N=I_n$ and $N\cdot M=I_n$.
Then for the $(i,i)$th diagonal element, $1\leq i\leq n$, we have $\sum_{j=1}^n a_{ij}b_{ji} = 1$.
Since $M$ and $N$ are nonnegative matrices, we have $a_{il}>0$ and $b_{li}>0$ for some $l$.
Moreover, for each $j\ne l$, we have $\sum_{k=1}^n b_{jk}a_{kl} = 0$.
Since $a_{il}\ne 0$, we have $b_{ji}=0$. % for $j\ne l$. 
Thus $1 = \sum_{j=1}^n a_{ij}b_{ji} = a_{il}b_{lj}$.
But $a_{il},b_{lj}\leq 1$ and so $a_{il}=b_{lj}=1$.
Therefore, each row of $M$ and each column of $N$ has exactly one entry~$1$ and the remaining entries are~$0$.
By exchanging the roles of $M$ and $N$,
each row of $N$ and each column of $M$ has exactly one entry~$1$ and the remaining entries are~$0$.
Hence, $M$ and $N$ are permutation matrices.
\end{proof}

\begin{proposition}\label{p-AB-equiv2}
Let $A$ and $B$ be minimal stochastic automata with the same input and output alphabets.
Then $A$ and $B$ are equivalent if and only if $A$ and $B$ are S-equivalent.
\end{proposition}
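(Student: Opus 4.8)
The plan is to treat the two implications separately. The implication ``$A\equiv_S B\Rightarrow A\equiv B$'' uses no minimality and is exactly Prop.~\ref{p-AB-equiv1}, so the whole content is the converse: assuming $A$ and $B$ minimal and $A\equiv B$, I must derive $A\equiv_S B$. The idea is to realize $\cD_A$ and $\cD_B$ as polytopes whose vertex sets are $\cS_A$ and $\cS_B$, and then invoke the fact that a polytope determines its vertices.

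First I would record the identity $\cD_A=\conv(\cS_A)$, the hull taken in the real vector space of functions $\Omega^*\times\Sigma^*\to\RR$ (equivalently in the finite-dimensional subspace they span). Every state distribution is $\pi=\sum_i\pi_i s_i$ with $\pi_i\ge 0$ and $\sum_i\pi_i=1$, and by linearity $\eta^\pi=\sum_i\pi_i\,\eta^{s_i}$; hence $\cD_A$ is exactly the set of convex combinations of the finitely many functions in $\cS_A$, and likewise $\cD_B=\conv(\cS_B)$. Both $\cD_A$ and $\cD_B$ are therefore polytopes.

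Next I would reinterpret minimality in terms of extreme points. By definition $A$ is minimal iff no $\eta^{s_i}$ equals $\eta^\pi$ for a state distribution $\pi\neq s_i$. If $\eta^{s_i}=\sum_j\pi_j\eta^{s_j}$ with $\pi\neq e_i$, then $\pi_i<1$, and dividing $(1-\pi_i)\eta^{s_i}=\sum_{j\neq i}\pi_j\eta^{s_j}$ by $1-\pi_i$ writes $\eta^{s_i}$ as a convex combination of the other state functions; the converse implication is immediate. Hence $A$ is minimal exactly when no element of $\cS_A$ lies in the convex hull of the others, i.e.\ when the pairwise distinct elements of $\cS_A$ are precisely the extreme points of the polytope $\cD_A=\conv(\cS_A)$; the same description applies to $B$.

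Finally, since $A\equiv B$ the two polytopes coincide, $\cD_A=\cD_B=:\cD$, and by the previous paragraph both $\cS_A$ and $\cS_B$ equal the extreme-point set of this single $\cD$; as the vertex set of a polytope is an intrinsic invariant, $\cS_A=\cS_B$, that is, $A\equiv_S B$. The only delicate point is the elementary convexity fact underlying this last step: for a finite set $T$ in a real vector space, the extreme points of $\conv(T)$ form a subset of $T$ and are the unique minimal subset with the same hull, so they depend on the hull alone. Once minimality of $A$ and $B$ has been matched with ``$\cS_A$ (resp.\ $\cS_B$) equals this intrinsic vertex set,'' the conclusion is forced, and no further computation with the transition matrices $P(b\mid a)$ is required.
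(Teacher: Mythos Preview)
Your proof is correct, and it takes a genuinely different route from the paper's. The paper argues computationally: from $A\equiv B$ it writes each state of $A$ as a convex combination of states of $B$ (coefficients forming a stochastic matrix $C$) and vice versa (matrix $D$); minimality of $A$ forces $CD=I$, minimality of $B$ forces $DC=I$, and then a separate lemma (Lemma~\ref{l-perm}) shows that mutually inverse stochastic matrices must be permutation matrices, so each state of $A$ is actually equivalent to a single state of $B$. Your argument replaces all of this by one sentence of convex geometry: $\cD_A=\conv(\cS_A)$ is a polytope, minimality is precisely the statement that $\cS_A$ is its vertex set (this is essentially what the paper later records as Prop.~\ref{p-sa-min-HA} and Prop.~\ref{p-simplex}), and equal polytopes have equal vertex sets. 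Your approach is cleaner and foreshadows the geometric viewpoint the paper develops afterwards; the paper's approach is more self-contained for a reader without convexity background and, as a by-product, yields the explicit bijection $S_A\to S_B$ via the permutation matrix $C$.
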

\begin{proof}
By Prop.~\ref{p-AB-equiv1}, $A$ and $B$ are equivalent if they are S-equivalent.

Conversely, let $A$ and $B$ be equivalent.
Let $S_A=\{s_1,\ldots,s_m\}$ and $S_B=\{t_1,\ldots,t_n\}$ be the state sets of $A$ and $B$, respectively.
Then for each state $s_i\in S_A$ there is an equivalent state distribution $\pi'_i$ of $B$, 
and for each state $t_j\in S_B$  there is an equivalent state distribution $\pi_j$ of $A$.
Write
$$
\pi'_i = \sum_{k=1}^n c_{ik}t_k
\quad\mbox{and}\quad 
\pi_j = \sum_{l=1}^m d_{jl}s_l.
$$
Thus the state $s_i$ is equivalent to the state distribution
$$
\sum_{k=1}^n \sum_{l=1}^m c_{ik}d_{kl}s_l =  \sum_{l=1}^m s_l \sum_{k=1}^n c_{ik}d_{kl}.
$$
Since the automaton $A$ is minimal, the right-hand of the equation must be equal to $s_i$;
that is,
$$
\sum_{k=1}^n c_{ik}d_{kl} = \left\{ \begin{array}{ll} 1 & \mbox{if } i=l,\\ 0 & \mbox{if } i\ne l.  \end{array} \right.
$$
Take the $n\times m$ matrix $C=(c_{ij})$ and the $m\times n$ matrix $D=(d_{jk})$.
Both matrices are stochastic, since their rows are state distributions.
Moreover, the above equation shows that $C\cdot D=I_n$ and $D\cdot C=I_m$.
Therefore, by Lemma~\ref{l-perm}, we have $m=n$ and the matrices $C$ and $D$ are permutation matrices.
In particular, the state distributions $\pi_j$ and $\pi'_i$ are unit vectors 
and therefore correspond to the states of $A$ and $B$, respectively.
Thus for each state in $A$ there is an equivalent state in $S_B$, and vice versa.
Hence, the automata $A$ and $B$ are S-equivalent.
\end{proof}

The construction of minimal stochastic automata is sketched in the next result.
\begin{theorem}\label{t-sa-min}
Each reduced stochastic automaton $A$ is equivalent to a minimal stochastic automaton.
\end{theorem}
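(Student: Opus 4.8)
The plan is to realize the minimal automaton as the convex hull of the rows of $H_A$. Write $n$ for the number of states of $A$, put $d=\dim V_A$, and let $h_1,\ldots,h_n\in\RR^d$ be the rows of the $n\times d$ matrix $H_A$, which has rank $d$. By Prop.~\ref{p-HA-pi} the behavior of a state distribution $\pi$ of $A$ is governed by $\pi H_A=\sum_i\pi_i h_i$, so the set of realizable behaviors is exactly the polytope $K=\conv\{h_1,\ldots,h_n\}$, and a state $s_i$ fails to be essential precisely when $h_i$ is a nontrivial convex combination of the rows. This suggests keeping only those states whose rows are vertices of $K$.

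First I would select the index set $T=\{i_1,\ldots,i_m\}$ of the states whose rows $v_k:=h_{i_k}$ are the vertices of $K$; since $A$ is reduced these rows are pairwise distinct, and every $h_i$ lies in $K=\conv\{v_1,\ldots,v_m\}$, hence admits barycentric coordinates $h_i=\sum_{k=1}^m\lambda_{ik}v_k$ with $\lambda_{ik}\ge 0$ and $\sum_k\lambda_{ik}=1$ (taking the $i_k$th row equal to $e_k$ at a vertex). I collect these into the $n\times m$ stochastic matrix $\Lambda=(\lambda_{ik})$ and let $E$ be the $m\times n$ selection matrix picking out the vertex rows, so that $H_B:=EH_A$ is the $m\times d$ matrix of vertices, $H_A=\Lambda H_B$, $E\Lambda=I_m$, and $\Lambda E H_A=H_A$. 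The last identity is the crucial one: since the columns of $H_A$ span $V_A$, the map $\Lambda E$ restricts to the identity on $V_A$ and so fixes every result vector of $A$. I then define $B$ on the $m$ states $t_1,\ldots,t_m$ by $P_B(b\mid a)=E\,P_A(b\mid a)\,\Lambda$. These matrices are nonnegative, and since $\Lambda\bone_m=\bone_n$, $P_A(a)\bone_n=\bone_n$ and $E\bone_n=\bone_m$, one checks $\sum_b P_B(b\mid a)\bone_m=\bone_m$, so $B$ is a genuine stochastic automaton with $m$ states.

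The heart of the argument is the identity $\eta_B(y\mid x)=E\,\eta_A(y\mid x)$, proved by induction on $|x|=|y|$ using Prop.~\ref{p-omega}: the base case reads $\bone_m=E\bone_n$, and the step gives $\eta_B(by\mid ax)=P_B(b\mid a)\eta_B(y\mid x)=E P_A(b\mid a)\,\Lambda E\,\eta_A(y\mid x)=E P_A(b\mid a)\eta_A(y\mid x)=E\eta_A(by\mid ax)$, where the middle equality uses that $\Lambda E$ fixes $\eta_A(y\mid x)\in V_A$. Consequently state $t_k$ of $B$ is equivalent to state $s_{i_k}$ of $A$, and conversely state $s_i$ of $A$ is equivalent to the state distribution $\sum_k\lambda_{ik}t_k$ of $B$ (again because $\Lambda E$ acts as the identity on $V_A$). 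By Lemma~\ref{l-SA-equiv2} this yields $A\ge B$ and $B\ge A$, whence $A\equiv B$ by Lemma~\ref{l-SA-equiv}.

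It remains to verify that $B$ is minimal, and here the convex geometry does the work. The columns of $H_B=EH_A$ are the result vectors $\eta_B(y_j\mid x_j)$, and since $H_A=\Lambda H_B$ forces $\rank H_B=d$ they form a basis of $V_B$; hence the analogue of Prop.~\ref{p-HA-pi} for $B$ says that $t_k$ is equivalent to a state distribution $\rho$ if and only if $v_k=\sum_l\rho_l v_l$. As $v_k$ is a vertex of $K=\conv\{v_1,\ldots,v_m\}$ and the $v_l$ are distinct, an extreme point cannot be a nontrivial convex combination of the generators, so $\rho=e_k=t_k$; no state of $B$ is equivalent to a different state distribution, i.e.\ $B$ is minimal. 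The main obstacle is the bookkeeping behind the two identities $\Lambda E H_A=H_A$ and $\eta_B=E\eta_A$, which must be married to the elementary but decisive fact that a vertex of a polytope is never a nontrivial convex combination of its generators.
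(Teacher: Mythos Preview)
Your proof is correct and structurally parallel to the paper's: both select a subset of states, build selection and injection matrices ($E,\Lambda$ versus the paper's $R,S$), verify that the injection--selection composite fixes $V_A$, define $P_B=E\,P_A\,\Lambda$, and establish $\eta_B=E\,\eta_A$ by induction via Prop.~\ref{p-omega}. The difference lies in how the surviving states are identified. The paper defines $Z_A$ as the set of states equivalent to some other distribution, removes them, and runs a separate induction to show each removed state is equivalent to a distribution supported only on $S_A\setminus Z_A$; minimality of $B$ is then asserted rather tersely. You instead take the vertices of $\conv\{h_1,\ldots,h_n\}$ directly, which immediately supplies the barycentric matrix $\Lambda$ and makes minimality transparent through the extreme--point property. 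The two constructions select the same state set (since, for a reduced automaton, $s_i\in Z_A$ if and only if $h_i$ is not a vertex), and your geometric packaging in fact anticipates the paper's subsequent Prop.~\ref{p-sa-min-HA} and Prop.~\ref{p-simplex}, which recast minimality in exactly these polytope terms only after the theorem. One small point worth making explicit in a final write-up: your $H_B=EH_A$ need not coincide with the paper's lexicographically constructed $H_B$, but since its columns form a basis of $V_B$ the proof of Prop.~\ref{p-HA-pi} goes through verbatim, which is all you use.
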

\begin{proof}
%Let $S_A=\{s_1,\ldots,s_n\}$ be the state set of $A$.
Consider the set $Z_A$ of all states $s\in S_A$ for which there is a state distribution $\pi$ of $A$ such that 
$s$ and $\pi$ are distinct and equivalent.
After rearranging the state set, write $S_A=\{s_1,\ldots,s_n\}$ and $Z_A=\{s_1,\ldots,s_k\}$ with $k\leq n$. % and put $S'_A = \{s_{k+1},\ldots,s_n\}$.
If $Z_A$ is the empty set, the automaton $A$ is already minimal.

Suppose $Z_A\ne \emptyset$.
Consider the state $s_1\in Z_A$.
There is a state distribution $\pi'_1$ of $A$ such that $s_1$ and $\pi'_1$ are distinct and equivalent.
Write $\pi'_1 = \sum_{i=1}^n\pi'_{1i}s_i$.
Since $s_1$ and $\pi'_1$ are different, we have $\pi'_{11}\ne 1$.
Then by Prop.~\ref{p-HA-pi}, the state $s_1$ is equivalent to the state distribution 
$$\pi_1=\sum_{i=2}^n \frac{\pi'_{1i}}{1-\pi'_{11}} s_i,$$
where $\pi_{11}=0$ and $\pi_i = \frac{\pi'_{1i}}{1-\pi'_{11}}$ for each $2\leq i\leq n$.

More generally, one can show by induction that for each $1\leq l \leq k$, the following holds:
For each state $s_i$, $1\leq i\leq l$, there is a state distribution $\pi_i$ of $A$ 
which is equivalent to $s_i$ and has the property that $\pi_{i1}=\ldots=\pi_{il}=0$.
In particular, the case $l=k$ shows that for each state $s_i$, $1\leq i\leq k$, 
there is a state distribution $\pi_i$ of $A$ which is equivalent to $s_i$ and has the property that 
$\pi_{i1}=\ldots=\pi_{ik}=0$.

Take the stochastic automaton $B$ with state set $S_B=S_A\setminus Z_A = \{s_{k+1},\ldots,s_n\}$ 
and conditional probabilities
$$p_B(b,s_j\mid a,s) = p_A(b,s_j\mid a,s) + \sum_{i=1}^k\pi_{ij}\cdot p_A(b,s_i\mid a,s)$$
for all $a\in\Sigma$, $b\in\Omega$, and $s,s_j\in S_B$.
By definition, the automaton $B$ is minimal.
Claim that the automata $A$ and $B$ are equivalent.
Indeed, define the $k\times (n-k)$ matrix
$$\Pi = \left( \begin{array}{ccc}
\pi_{1,k+1} & \ldots & \pi_{1,n}\\
\vdots      &  \ddots& \vdots\\
\pi_{k,k+1} & \ldots & \pi_{k,n}\\
\end{array} \right),$$
the $(n-k)\times n$ matrix 
$$R = \left( O_{n-k,k} \; I_{n-k}\right)$$
where $O_{n-k,k}$ denotes $(n-k)\times k$ zero matrix and $I_{n-k}$ the $(n-k)\times (n-k)$ unit matrix,
and the $n\times (n-k)$ matrix 
$$S = \left( \begin{array}{c}
\Pi \\ I_{n-k}
\end{array} \right).$$
This gives the $n\times n$ matrix
$$SR = \left( \begin{array}{cc}
O_{k,k} & \Pi \\ O_{n-k,k} & I_{n-k}
\end{array} \right).$$
Since the state $s_i$ is equivalent to the state distribution $\pi_i$ for $1\leq i\leq k$, we obtain
\begin{eqnarray}\label{e-SR}
SR \cdot \eta_A(y\mid x) =  \eta_A(y\mid x),\quad x\in\Sigma^*,\;y\in\Omega^*.
\end{eqnarray}
Moreover, the substochastic matrices defining $A$ and $B$ satisfy
\begin{eqnarray}\label{e-RS}
R \cdot P_A(b\mid a)\cdot S = P_B(b\mid a),\quad a\in\Sigma,\;b\in\Omega.
\end{eqnarray}
Claim that for all $x\in\Sigma^*$ and $y\in\Omega^*$ with $|x|=|y|$, 
$$ \eta_B(y\mid x) = R \cdot  \eta_A(y\mid x).$$
Indeed, the equation holds for $x=y=\epsilon$.
Suppose the equation is valid for all $x\in\Sigma^*$ and $y\in\Omega^*$ with $|x|=|y|\leq l$ for some $l\geq 0$. 
Then we obtain for $a\in\Sigma$, $b\in\Omega$, $x\in\Sigma^*$, and $y\in\Omega^*$ with $|x|=|y|=l$,
\begin{eqnarray*}
 \eta_B(by\mid ax) 
&=& P_B(b\mid a)\cdot \eta_B(y\mid x),\quad\mbox{by Prop.~\ref{p-omega}},\\
&=& R \cdot P_A(b\mid a)\cdot S \cdot \eta_B(y\mid x),\quad \mbox{by~(\ref{e-RS})},\\
&=& R \cdot P_A(b\mid a)\cdot SR \cdot  \eta_A(y\mid x), \quad \mbox{by induction},\\
&=& R \cdot P_A(b\mid a)\cdot  \eta_A(y\mid x),\quad \mbox{by~(\ref{e-SR})},\\
&=& R \cdot  \eta_A(by\mid ax),\quad \mbox{by Prop.~\ref{p-omega}}.
\end{eqnarray*}
By definition of the matrix $R$,
the vector $\eta_B(y\mid x) = R \cdot  \eta_A(y\mid x)$ consists of the last $n-k$ components of the vector $ \eta_A(y\mid x)$.
It follows that state $s_i$ of $A$ is equivalent to the state $s_i$ of $B$ for each $k+1\leq i\leq n$.
Moreover, for each $1\leq i\leq k$, the state $s_i$ of $A$ is equivalent to the state distribution $\pi_i$ of $B$.
Thus for each state of $A$ there is an equivalent state distribution of $B$ and 
for each state of $B$ there is an equivalent state distribution of $A$.
Hence, by the proof of Prop.~\ref{p-AB-equiv1}, the automata $A$ and $B$ are equivalent.
\end{proof}

Note that the minimal stochastic automaton $B$ can be successively constructed from the given stochastic automaton $A$.
Suppose the state $s_1\in S_A$ is equivalent to the state distribution $\pi=(0,\pi_2,\ldots,\pi_n)$.
Note that this distribution can be obtained as in the above proof.
Then take the stochastic automaton $B_1$ with state set $S_{B_1} = S_A\setminus\{s_1\}$ and conditional probabilities
\begin{eqnarray}
p_{B_1}(b,s_j\mid a,s) = p_A(b,s_j\mid a,s) + \pi_j\cdot p_A(b,s_1\mid a,s)
\end{eqnarray}
for all $a\in\Sigma$, $b\in\Omega$, and $s,s_j\in S_{B_1}$.
The iteration of this process leads to a minimal stochastic automaton equivalent to $A$. % (Alg.~\ref{a-SA-min}).

\begin{example}
Take the stochastic automaton $A = (\{s_1,s_2,s_3\}, \{a\},\{b,c\},p)$ with substochastic matrices 
$$
P_A(b\mid a) =
\left(
\begin{array}{ccc}
0&0&\frac{1}{2}\\
0&0&\frac{1}{3}\\
0&0&\frac{2}{3}
\end{array}
\right)
\quad\mbox{and}\quad
P_A(c\mid a) =
\left(
\begin{array}{ccc}
0&0&\frac{1}{2}\\
0&0&\frac{2}{3}\\
0&0&\frac{1}{3}
\end{array}
\right).
$$
This gives the matrix
$$ H_A = 
\begin{pmatrix} 
1& \frac{1}{2}\\
1& \frac{1}{3}\\
1& \frac{2}{3}
\end{pmatrix}.$$
%This automaton is equivalent to the reduced automaton $B'$ in Ex.~\ref{e-sa-red-min}.
This automaton is reduced by Prop.~\ref{p-HA-pi} 
but not minimal, since the state $s_1$ and the state distribution $\pi =(0,\frac{1}{2},\frac{1}{2})$ are equivalent.
Therefore, define the stochastic automaton $B_1$ with state set $S_{B_1} =\{s_2,s_3\}$ and substochastic matrices
$$
P_{B_1}(b\mid a) =
\left(
\begin{array}{cc}
0&\frac{1}{3}\\
0&\frac{2}{3}
\end{array}
\right)
\quad\mbox{and}\quad
P_{B_1}(c\mid a) =
\left(
\begin{array}{cc}
0&\frac{2}{3}\\
0&\frac{1}{3}
\end{array}
\right).
$$
This gives the matrix
$$ H_{B_1} = 
\begin{pmatrix} 
1& \frac{1}{3}\\
1& \frac{2}{3}
\end{pmatrix}.$$
This automaton is already minimal.
\EXX
\end{example}

A vector $u\in\RR^n$ is a {\em convex combination\/}\index{convex combination} of vectors $v_1,\ldots,v_k\in\RR^n$ 
if there are nonnegative real numbers $\lambda_1,\ldots,\lambda_k$ such that
$$\sum_{i=1}^k \lambda_i v_i= u\quad\mbox{and}\quad \sum_{i=1}^k \lambda_i = 1.$$
The minimality of a stochastic automaton can be tested by using the $H$-matrices.
\begin{proposition}\label{p-sa-min-HA}
A stochastic automaton $A$ is minimal if and only if no row of $H_A$ is a convex combination of the other rows.
\end{proposition}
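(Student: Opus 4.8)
The plan is to translate the definition of minimality entirely into a statement about the rows of $H_A$ by way of Proposition~\ref{p-HA-pi}, and then to observe that the resulting condition is exactly a statement about convex combinations. Recall that a state $s_i$ is identified with the unit row vector $e_i$, so that $e_i H_A$ is precisely the $i$th row $h_i$ of $H_A$, and that for any state distribution $\pi=(\pi_1,\ldots,\pi_n)$ we have $\pi H_A = \sum_{j=1}^n \pi_j h_j$. By Proposition~\ref{p-HA-pi}, the state $s_i$ is equivalent to $\pi$ if and only if $h_i = \sum_{j=1}^n \pi_j h_j$. Hence $A$ fails to be minimal precisely when, for some $i$, there is a state distribution $\pi\ne e_i$ with $h_i = \sum_{j=1}^n \pi_j h_j$. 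The whole proof then amounts to showing that this last condition is equivalent to: $h_i$ is a convex combination of the remaining rows $\{h_j \mid j\ne i\}$. I would then finish by taking the contrapositive.

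For the backward direction I would simply start from a convex combination $h_i = \sum_{j\ne i}\lambda_j h_j$ with $\lambda_j\ge 0$ and $\sum_{j\ne i}\lambda_j = 1$, and set $\pi_i = 0$ and $\pi_j = \lambda_j$ for $j\ne i$; this $\pi$ is a state distribution, differs from $e_i$ because $\pi_i = 0$, and satisfies $\pi H_A = h_i = e_i H_A$, so $s_i$ is equivalent to $\pi$ and $A$ is not minimal. The forward direction starts from $h_i = \sum_{j=1}^n \pi_j h_j$ with $\pi\ne e_i$. The only point that needs care --- and the one place where the argument is not purely formal --- is that $\pi$ may assign positive weight $\pi_i$ to the state $s_i$ itself, so that $h_i$ appears on both sides. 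Here I would note that $\pi\ne e_i$ forces $\pi_i<1$, rearrange to $(1-\pi_i)h_i = \sum_{j\ne i}\pi_j h_j$, and divide by $1-\pi_i>0$ to obtain $h_i = \sum_{j\ne i}\lambda_j h_j$ with $\lambda_j = \pi_j/(1-\pi_i)$; these coefficients are nonnegative and sum to $\sum_{j\ne i}\pi_j/(1-\pi_i) = (1-\pi_i)/(1-\pi_i) = 1$, exhibiting $h_i$ as a convex combination of the other rows. This rescaling is exactly the normalization step already used in the proof of Theorem~\ref{t-sa-min}.

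The hard part, such as it is, is only this bookkeeping with the coefficient $\pi_i$; once the rescaling is in place both implications are immediate, and no structural facts about $H_A$ beyond Proposition~\ref{p-HA-pi} are required. I would close by combining the two directions: $A$ is not minimal if and only if some row of $H_A$ is a convex combination of the other rows, which by contraposition is the claimed characterization of minimality.
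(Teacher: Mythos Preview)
Your proof is correct and follows essentially the same approach as the paper: both reduce the question to Proposition~\ref{p-HA-pi} and use the rescaling trick from the proof of Theorem~\ref{t-sa-min} to eliminate the coefficient $\pi_i$. You are in fact more careful than the paper, which only spells out the direction ``not minimal $\Rightarrow$ some row is a convex combination'' and leaves the converse implicit.
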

\begin{proof}
Suppose the automaton $A$ is not minimal.
Then there is a state $s_i$ and a state distribution $\pi$ such that $s_i$ and $\pi$ are distinct and equivalent.
As in the proof of Thm.~\ref{t-sa-min}, the state distribution $\pi$ can be chosen such that the $i$th component is $\pi_i=0$.
By Prop.~\ref{p-HA-pi}, we have $s_i H_A = \pi H_A$.
This equation says that the $i$th row of $H_A$ given by $s_i H_A$ 
is a convex combination of the remaining rows of $H_A$ given by $\pi H_A$.
\end{proof}

\begin{example}(Shimon~Even, 1965)
Let $\Sigma=\{a,b\}$, $\Omega=\{c,d,e\}$, and $S=\{s_1,\ldots,s_5\}$.
Consider the stochastic automaton $A = (S,\Sigma,\Omega,p_A)$ with substochastic matrices
$$
\begin{array}{ll}
P_A(c\mid a) = \left( \begin{array}{ccccc}
0 & \frac{1}{2} & 0 & 0 & 0 \\
0 & \frac{1}{2} & 0 & 0 & 0 \\
0 & 0 & 0 & 0 & 0 \\
0 & 0 & 0 & 0 & 0 \\
0 & 0 & 0 & 0 & 0 
\end{array} \right), &	
P_A(d\mid a) = \left( \begin{array}{ccccc}
0 & 0 & 0 & 0 & 0 \\
0 & 0 & 0 & 0 & 0 \\
0 & 0 & 0 & \frac{1}{2} & 0 \\
0 & 0 & 0 & \frac{1}{2} & 0 \\
0 & 0 & 0 & 0 & 0 
\end{array} \right),\\
P_A(e\mid a) = \left( \begin{array}{ccccc}
0 & 0 & 0 & 0 & \frac{1}{2}  \\
0 & 0 & 0 & 0 & \frac{1}{2}  \\
0 & 0 & 0 & 0 & \frac{1}{2}  \\
0 & 0 & 0 & 0 & \frac{1}{2}  \\
\frac{1}{2}  & 0  & \frac{1}{2}  & 0 & 0
\end{array} \right), & 
P_A(c\mid b) = \left( \begin{array}{ccccc}
\frac{1}{2} & 0 & 0 & 0 & 0 \\
0 & 0 & 0 & 0 & 0 \\
0 & 0 & 0 & 0 & 0 \\
\frac{1}{2} & 0 & 0 & 0 & 0 \\
0 & 0 & 0 & 0 & 0 \\
\end{array} \right),\\
P_A(d\mid b) = \left( \begin{array}{ccccc}
0 & 0 & 0 & 0 & 0 \\
0 & 0 & \frac{1}{2} & 0 & 0 \\
0 & 0 & \frac{1}{2} & 0 & 0 \\
0 & 0 & 0 & 0 & 0 \\
0 & 0 & 0 & 0 & 0 \\
\end{array} \right), &
P_A(e\mid b) = \left( \begin{array}{ccccc}
0 & 0 & 0 & 0 & \frac{1}{2}  \\
0 & 0 & 0 & 0 & \frac{1}{2}  \\
0 & 0 & 0 & 0 & \frac{1}{2}  \\
0 & 0 & 0 & 0 & \frac{1}{2}  \\
\frac{1}{2}  & 0  & \frac{1}{2}  & 0 & 0
\end{array} 
\right),
\end{array} 
$$
and the stochastic automaton $B = (S,\Sigma,\Omega,p_B)$ with substochastic matrices
$P_B(c\mid a) =P_A(c\mid a)$,
$P_B(d\mid a) =P_A(d\mid a)$,
$P_B(c\mid b) =P_A(c\mid b)$,
$P_B(d\mid b) =P_A(d\mid b)$, and
$$P_B(e\mid a) = \left( \begin{array}{ccccc}
0 & 0 & 0 & 0 & \frac{1}{2} \\
0 & 0 & 0 & 0 & \frac{1}{2} \\
0 & 0 & 0 & 0 & \frac{1}{2} \\
0 & 0 & 0 & 0 & \frac{1}{2} \\
\frac{1}{4} & \frac{1}{4} & \frac{1}{4} & \frac{1}{4} & 0
\end{array} \right) = P_B(e\mid b).
$$
The vector space $V_A$ has the basis elements
$ \eta_A(\epsilon\mid \epsilon)$,
$ \eta_A(c\mid a)$, 
$ \eta_A(c\mid b)$, 
and
$ \eta_A(e\mid a)$;
that is, 
$$H_A = \left( \begin{array}{cccc}
1 & \frac{1}{2} & \frac{1}{2} & \frac{1}{2} \\ 
1 & \frac{1}{2} & 0 & \frac{1}{2} \\
1 & 0 & 0 & \frac{1}{2} \\
1 & 0 & \frac{1}{2} & \frac{1}{2} \\
1 & 0 & 0 & 1
\end{array} \right).$$
Furthermore, we have $H_A=H_B$.
It is easy to check that no row of $H_A$ is a convex combination of the other rows.
Thus both automata are minimal by Prop.~\ref{p-sa-min-HA} and both automata are S-equivalent by~Thm.~\ref{th-4}.
Hence, the automata are equivalent by Prop.~\ref{p-AB-equiv2}.
However, the automata $A$ and $B$ are not isomorphic.
\EXX
\end{example}
This example shows that there are stochastic automata which are minimal and equivalent, but not isomorphic.

%Two stochastic automata $A = (S_A,\Sigma,\Omega,p_A)$ and $B = (S_B,\Sigma,\Omega,p_B)$ are {\em isomorphic\/}
%if there is a bijection $\phi:S_A\rightarrow S_B$ such that for all conditional probabilities
%$$ p_B(b,\phi(s')\mid a,\phi(s)) = p_A(b,s'\mid a,s)$$
%for all $a\in\Sigma$, $b\in\Omega$, and $s,s'\in S_A$.
%
%\begin{example}
%In the above example of Evans, the stochastic automata $A$ and $B$ are obviously not isomorphic.
%\EXX
%\end{example}
%The example of Evans shows that there are stochastic automata which are minimal and equivalent but not isomorphic.

%\section{Strong Reduction}
Finally, the aim is to further reduce the state set of a minimal stochastic automaton. 
For this, a stochastic automaton $A$ is called {\em strongly reduced\/}\index{stochastic automaton!strongly reduced} 
if equivalent state distributions of $A$ are equal.
By definition, each strongly reduced stochastic automaton is minimal.  
However, not every minimal stochastic automaton is strongly reduced.
\begin{example}
In the example of Even, the state distributions 
$\pi=(\frac{1}{2},0,\frac{1}{2},0,0)$ and $\pi'=(0,\frac{1}{2},0,\frac{1}{2},0)$ are equivalent, since
$\pi H_A = (1, \frac{1}{4}, \frac{1}{4}, \frac{1}{2}) =  \pi' H_A$.
It follows that the minimal automaton $A$ is not strongly reduced.
\EXX
\end{example}

Strong reduction of a stochastic automaton can be tested by using the $H$-matrices.
\begin{proposition}\label{p-strongr}
Let $A$ be a stochastic automaton with $n$ states.
Then $A$ is strongly reduced if and only if the matrix $H_A$ has rank $n$.
\end{proposition}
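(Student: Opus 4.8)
The plan is to reduce the statement entirely to linear algebra by means of Proposition~\ref{p-HA-pi}, which tells us that two state distributions $\pi$ and $\pi'$ are equivalent exactly when $\pi H_A = \pi' H_A$. Consequently $A$ is strongly reduced if and only if the map $\pi \mapsto \pi H_A$ is injective on the set of state distributions, i.e.\ the only difference $\delta = \pi - \pi'$ of two state distributions satisfying $\delta H_A = 0$ is $\delta = 0$. Throughout I would use the fact that the first column of $H_A$ is $\eta(\epsilon\mid\epsilon)=\bone_n$, so that $\delta H_A = 0$ automatically forces $\sum_i \delta_i = 0$; and the fact, recorded in the definition of $H_A$, that $H_A$ is $n\times d$ with linearly independent columns, hence of rank $d$.

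For the direction ``$\rank H_A = n$ implies strongly reduced'', I would note that if $H_A$ has rank $n$ its $n$ rows are linearly independent, so the left null space $\{\delta\in\RR^n : \delta H_A = 0\}$ is trivial. Then, given any two equivalent state distributions $\pi$ and $\pi'$, Proposition~\ref{p-HA-pi} yields $(\pi-\pi')H_A = 0$, whence $\pi - \pi' = 0$; so equivalent state distributions coincide and $A$ is strongly reduced.

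For the converse I would argue by contraposition. Suppose $\rank H_A = d < n$. Since $H_A$ has rank $d$, its left null space is nontrivial, so there is a nonzero row vector $\delta$ with $\delta H_A = 0$, and the $\bone_n$ column gives $\sum_i \delta_i = 0$. I then manufacture two distinct but equivalent state distributions by perturbing the uniform distribution $u = \frac{1}{n}\bone_n$: set $\pi = u + s\delta$ and $\pi' = u$ for a sufficiently small $s > 0$. Because $u$ is strictly positive, $\pi$ has nonnegative entries for small $s$, and $\sum_i \pi_i = 1$ since $\sum_i \delta_i = 0$; thus both are genuine state distributions, they are distinct as $s\delta \neq 0$, and $\pi H_A = u H_A = \pi' H_A$, so they are equivalent by Proposition~\ref{p-HA-pi}. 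Hence $A$ is not strongly reduced.

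The one step needing care is this converse construction: one must pass from an abstract left null vector $\delta$ to an honest pair of probability vectors. The perturbation-of-the-uniform trick handles it cleanly, using strict positivity of $u$ to absorb the sign of $\delta$ and using $\sum_i \delta_i = 0$ (guaranteed by the $\bone_n$ column) to preserve total mass $1$. Everything else is a direct application of Proposition~\ref{p-HA-pi} together with the standard equivalence between linear independence of the $n$ rows of $H_A$ and the condition $\rank H_A = n$.
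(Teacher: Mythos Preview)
Your proof is correct. The forward direction (``rank $n$ implies strongly reduced'') is identical to the paper's. For the converse you take a genuinely different route: the paper picks a nonzero left null vector $u$ of $H_A$, uses $\sum_i u_i=0$ to split it as $u=u^+-u^-$ into its positive and (sign-flipped) negative parts, and normalises these to obtain two distinct equivalent state distributions $\pi^+=\frac{1}{\alpha}u^+$ and $\pi^-=\frac{1}{\alpha}u^-$ with disjoint supports. You instead perturb the uniform distribution by a small multiple of the null vector, producing two distinct equivalent state distributions in the interior of the probability simplex. Your construction is slightly slicker and avoids the case analysis implicit in forming $u^+$ and $u^-$; the paper's construction has the minor extra feature that the two distributions it produces have disjoint supports, though that is not needed here. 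Both arguments hinge on exactly the same two ingredients you identify: Proposition~\ref{p-HA-pi} and the fact that the first column of $H_A$ is $\bone_n$.
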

\begin{proof}
Suppose the $n\times d$ matrix $H_A$ has rank $d=n$.
Then $H_A$ is an $n\times n$ matrix which determines a bijective linear mapping.
Let $\pi$ and $\pi'$ be equivalent state distributions of $A$.
Then by Prop.~\ref{p-HA-pi}, we have $\pi H_A=\pi' H_A$; i.e., $(\pi-\pi')H_A=0$.
But the kernel of the bijective linear mapping provided by $H_A$ is the zero space $\{0\}$.
Thus we have $\pi=\pi'$ and hence the automaton $A$ is strongly reduced.

Conversely, suppose the $n\times d$ matrix $H_A$ has rank $d<n$.
Then the kernel of the linear mapping given by $H_A$ has nonzero dimension.
That is, there is a nonzero vector $u\in\RR^n$ such that $u H_A=0$.
Since the first column of the matrix $H_A$ is the all-one vector, 
we have $\sum_{i=1}^nu_i=0$.
Moreover, since the vector $u$ is nonzero and the matrix $H_A$ is nonnegative, 
the vector $u$ must have nonzero positive and nonzero negative components.
Let $u^+$ be the vector, which results from $u$ by setting all negative entries to~0.
Moreover, put $u^- = u^+-u$; i.e., $u^-$ results from $u$ by setting all positive entries to~0 and then changing signs. 
Then the vectors $u^+$ and $u^-$ are nonnegative with $u = u^+-u^-$.
For instance, if $u=(1,-4,3,0)$, then $u^+=(1,0,3,0)$ and $u^-=(0,4,0,0)$.
Since $\sum_{i=1}^nu_i=0$, we have $\alpha = \sum_i u_i^+  = \sum_i u_i^-$ 
and so $\pi^+=\frac{1}{\alpha}u^+$ and $\pi^-=\frac{1}{\alpha}u^-$ are state distributions of $A$ with 
$(\pi^+-\pi^-)H_A=0$; i.e., $\pi^+ H_A=\pi^- H_A$.
Thus by Prop.~\ref{p-HA-pi}, the distributions $\pi^+$ and $\pi^-$ are equivalent and 
hence the automaton $A$ is not strongly reduced.
\end{proof}

\begin{example}
Each reduced stochastic automaton $A$ with two states is strongly reduced.
To see this, consider the corresponding matrix
$$H_A = \begin{pmatrix} 1& a_1\\ 1& a_2 \end{pmatrix}.$$
Suppose the state distributions $(b,1-b)$ and $(c,1-c)$ are equivalent.
Then equivalently $(b,1-b) H_A = (c,1-c)H_A$ and so 
%Then $a_1b+a_2(1-b) = a_1c+a_2(1-c)$ and so 
$a_1(b-c) = a_2(b-c)$.
If the distributions are distinct, then $b\ne c$ and thus $a_1=a_2$.
It follows that the matrix has rank $1$ and so $A$ is not reduced contradicting the hypothesis.
\EXX
\end{example}

A geometric interpretation can help to tackle the question 
whether a stochastic automaton $A$ is reduced, minimal or strongly reduced.
These properties only depend on the matrix $H_A$ and so its obvious to further analyse the vectors of this matrix.
For this, let $H_A$ be an $n\times d$ matrix with row vectors $h_1,\ldots,h_n$.
These vectors span an $(n-1)$-dimensional simplex\index{simplex}
\begin{eqnarray}
C_A = \left\{ \sum_{i=1}^n c_ih_i\mid \sum_{i=1}^n c_i = 1\mbox{ and } c_i\geq 0 \mbox{ for }1\leq i\leq n\right\}.
\end{eqnarray}
Note that an $(n-1)$-simplex is an $(n-1)$-dimensional polytope which is given by the convex hull of its $n$ vertices.

\begin{proposition}\label{p-simplex}
Let $A$ be a stochastic automaton.
\begin{itemize}
\item $A$ is not reduced if and only if two generating vectors of\/ $C_A$ are equal.
\item $A$ is not minimal if and only if one of the generating vectors of\/ $C_A$ is not a vertex.
\item $A$ is not strongly reduced if and only if\/ $C_A$ has dimension less than $n-1$.
\end{itemize}
\end{proposition}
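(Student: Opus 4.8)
The plan is to reduce each of the three equivalences to a statement about the matrix $H_A$ that has already been proved, and then to rephrase that statement geometrically in terms of the simplex $C_A$, whose generating vectors are the rows $h_1,\ldots,h_n$ of $H_A$. Two facts will be used throughout: by the construction of $H_A$ its first column is the all-one vector $\bone_n$, so every row $h_i$ has first coordinate $1$; and for a finite point set the vertices of its convex hull are exactly those points that are not convex combinations of the remaining ones.

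For the first item I would invoke the second assertion of Prop.~\ref{p-HA-pi}: two states $s_i$ and $s_j$ are equivalent if and only if the rows $h_i$ and $h_j$ coincide. Since $A$ fails to be reduced exactly when it has two distinct equivalent states, this is precisely the statement that two generating vectors of $C_A$ are equal.

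For the second item the tool is Prop.~\ref{p-sa-min-HA}: $A$ is minimal if and only if no row of $H_A$ is a convex combination of the other rows. Negating, $A$ is not minimal precisely when some $h_i$ lies in the convex hull of $\{h_j\mid j\neq i\}$, which by the vertex characterization above is exactly the assertion that $h_i$ is not a vertex of $C_A$. As a consistency check, if two generating vectors coincide then each is a non-vertex, recovering the fact that a non-reduced automaton is a fortiori non-minimal.

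The third item is the only one requiring genuine work rather than a restatement, and this is where I expect the one real obstacle. By Prop.~\ref{p-strongr}, $A$ is strongly reduced if and only if $\rank H_A=n$, so I must show that $\rank H_A=n$ is equivalent to $\dim C_A=n-1$; that is, that $\rank H_A=\dim C_A+1$ in general. Now $\dim C_A$ is the dimension of the affine hull of $h_1,\ldots,h_n$, equal to $\dim\mathrm{span}(h_2-h_1,\ldots,h_n-h_1)$. The key observation is that each difference $h_i-h_1$ has first coordinate $0$ while $h_1$ has first coordinate $1$, so $h_1\notin\mathrm{span}(h_2-h_1,\ldots,h_n-h_1)$; combining this with $\mathrm{span}(h_1,\ldots,h_n)=\mathrm{span}(h_1,h_2-h_1,\ldots,h_n-h_1)$ yields
$$\rank H_A=1+\dim\mathrm{span}(h_2-h_1,\ldots,h_n-h_1)=\dim C_A+1.$$
Hence $\rank H_A<n$ if and only if $\dim C_A<n-1$, which together with Prop.~\ref{p-strongr} completes the third equivalence. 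The main difficulty is thus the bookkeeping that relates linear rank to affine dimension, and the all-one first column of $H_A$ is exactly what makes that relation clean.
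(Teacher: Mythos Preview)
Your proposal is correct and follows exactly the paper's approach: the paper's entire proof is the single sentence ``The proof follows from the Props.~\ref{p-HA-pi},~\ref{p-sa-min-HA}, and~\ref{p-strongr},'' and you invoke precisely these three results for the three items. Your treatment of the third item, deriving $\rank H_A=\dim C_A+1$ from the all-one first column of $H_A$, makes explicit the one step the paper leaves to the reader.
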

The proof follows from the Props.~\ref{p-HA-pi},~\ref{p-sa-min-HA}, and~\ref{p-strongr}.

\begin{example}
The stochastic automaton $A$ in Fig.~\ref{f-b34} has the matrix
$$H_A = \begin{pmatrix}
1 & \frac{1}{3}\\
1 & \frac{2}{3}\\
1 & \frac{1}{2}\\
1 & \frac{1}{2}\\
\end{pmatrix}$$
and thus is not reduced.
The stochastic automata $B$ in Fig.~\ref{f-b3} has the matrix
$$H_B = \begin{pmatrix}
1 & \frac{1}{3}\\
1 & \frac{2}{3}\\
1 & \frac{1}{2}\\
\end{pmatrix}.$$
The vectors
$h_1= (1, \frac{1}{3})$,
$h_2= (1, \frac{2}{3})$,
and
$h_3= (1, \frac{1}{2})$ span a line segment or $1$-simplex with vertices $h_1$ and $h_2$, while $h_3$ lies in-between.
Therefore, $B$ is not minimal.
\EXX
\end{example}

\begin{example}
The matrix $H_A$ of the stochastic automaton $A$ in Even's example is spanned by the vectors
\begin{eqnarray*}
h_1 &=& ( 1, \frac{1}{2}, \frac{1}{2}, \frac{1}{2}), \\ 
h_2 &=& ( 1, \frac{1}{2}, 0, \frac{1}{2}), \\
h_3 &=& ( 1, 0, 0, \frac{1}{2}), \\
h_4 &=& ( 1, 0, \frac{1}{2}, \frac{1}{2}), \\
h_5 &=& ( 1, 0, 0, 1).
\end{eqnarray*}
These vectors span a pyramid with a quadratic base, where the apex is perpendicular above one of the vertices.
Therefore, the automaton $A$ is minimal.
Since the vectors span a $3$-dimensional simplex, the automaton is not strongly reduced.
\EXX
\end{example}

%\begin{example}
%The stochastic automaton $A=(\{s_1,s_2\},\{0,1\},\{0,1\},p)$ corresponding to the binary symmetric channel 
%in Ex.~\ref{e-BSC} has matrices
%$$P(0\mid 0) = \begin{pmatrix}
%1 & \frac{1}{3}\\
%1 & \frac{2}{3}\\
%1 & \frac{1}{2}\\
%\end{pmatrix}.$$
%\EXX
%\end{example}

%---------------------------------------------
\chapter{Coverings}

The previous chapters have shown that the state set of a stochastic automaton can eventually be decreased 
without diminishing its performance.
Further reductions are possible if coverings of stochastic automata are considered.

For this, let $A=(S_A,\Sigma,\Omega,p_A)$ and $B=(S_B,\Sigma,\Omega,p_B)$ be stochastic automata, and
let $S_A$ and $S_B$ have $n$ and $m$ states, respectively.
%Note that the matrix $Q$ defines a linear mapping $\phi:\RR^n\rightarrow\RR^m:v\mapsto Qv$ such that $\phi$ maps the result vectors
%$\eta_A(y\mid x)$ to the result vectors $\eta_B(y\mid x)$.
%Since $Q\eta_A(\epsilon\mid \epsilon) = \eta_B(\epsilon\mid\epsilon)$, the row sums of $Q$ must be~1
%and so $Q$ is a stochastic matrix.
The automaton $B$ is a {\em stochastic image\/}\index{stochastic image} 
of automaton $A$, written $A\rightarrow_s B$, if 
there is a stochastic $m\times n$ matrix $Q$ such that for all $x\in\Sigma^*$ and $y\in\Omega^*$,
\begin{eqnarray}
\eta_B(y\mid x) = Q \cdot \eta_A(y\mid x).
\end{eqnarray}
Note that the matrix $Q$ defines a linear mapping $\phi:\RR^n\rightarrow\RR^m:v\mapsto Qv$ such that $\phi$ maps the result vectors
$\eta_A(y\mid x)$ to the result vectors $\eta_B(y\mid x)$.

\begin{proposition}\label{p-sh-c}
Let $A$ and $B$ be stochastic automata with the same input and output alphabets.
Then $A$ covers $B$ if and only if $B$ is a stochastic image of $A$. %$A\rightarrow_s B$.
\end{proposition}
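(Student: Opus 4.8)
The plan is to prove both implications by passing through the state-level characterization of covering supplied by Lemma~\ref{l-SA-equiv2}, which reduces the condition $\cD_A\supseteq\cD_B$ to the requirement that every state of $B$ be matched by an equivalent state distribution of $A$. The bridge to the defining matrix $Q$ is a single observation: a nonnegative matrix is stochastic \emph{exactly} when each of its rows is a state distribution, and the $j$th component of $\eta_B(y\mid x)$ is precisely $\eta_B^{t_j}(y\mid x)$, the result of the $j$th state $t_j$ of $B$ viewed as the unit-vector distribution $e_j$. So the rows of $Q$ will play the role of the matching state distributions.

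For the direction $A\rightarrow_s B \Rightarrow A\geq B$, I would take the stochastic $m\times n$ matrix $Q$ with $\eta_B(y\mid x) = Q\cdot\eta_A(y\mid x)$ and let $\pi_j$ denote its $j$th row. Since $Q$ is stochastic, each $\pi_j$ is a state distribution of $A$, and reading off the $j$th coordinate of the defining identity gives $\eta_B^{t_j}(y\mid x) = \pi_j\cdot\eta_A(y\mid x) = \eta_A^{\pi_j}(y\mid x)$ for all $x\in\Sigma^*$ and $y\in\Omega^*$. Thus $t_j$ is equivalent to $\pi_j$ for every $j$, and Lemma~\ref{l-SA-equiv2} then yields that $A$ covers $B$.

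For the converse $A\geq B \Rightarrow A\rightarrow_s B$, I would run the same correspondence backwards. By Lemma~\ref{l-SA-equiv2}, each state $t_j$ of $B$ admits an equivalent state distribution $\pi(t_j)$ of $A$, that is $\eta_B^{t_j} = \eta_A^{\pi(t_j)}$. Assembling these distributions as the rows of an $m\times n$ matrix $Q$ produces a matrix that is nonnegative with each row summing to $1$, hence stochastic, and the componentwise equalities $\eta_B^{t_j}(y\mid x) = \pi(t_j)\cdot\eta_A(y\mid x)$ reassemble into the single vector identity $\eta_B(y\mid x) = Q\cdot\eta_A(y\mid x)$. This is exactly the condition $A\rightarrow_s B$.

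Both directions are short once this identification is made, so there is no analytic difficulty to grind through. The only point that needs care is the bookkeeping that matches the row index of $Q$ with the state $t_j$ and with the $j$th coordinate of the result vector, together with the routine check that ``$Q$ stochastic'' and ``each row of $Q$ is a state distribution of $A$'' are interchangeable. I expect this indexing, rather than any genuine obstacle, to be the main thing to state carefully.
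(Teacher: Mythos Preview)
Your proposal is correct and follows essentially the same argument as the paper: both directions pass through Lemma~\ref{l-SA-equiv2}, and the key observation is precisely the one you identify, that the rows of the stochastic matrix $Q$ are exactly the state distributions of $A$ matching the states $t_j$ of $B$, so that the $j$th coordinate of $\eta_B(y\mid x)=Q\cdot\eta_A(y\mid x)$ reads as $\eta_B^{t_j}=\eta_A^{q_j}$. The paper's proof is just a terser write-up of the same bookkeeping.
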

\begin{proof}
Let $S_A = \{s_1,\ldots,s_n\}$ and $S_B = \{t_1,\ldots,t_m\}$.

Suppose $A$ covers $B$.
Then by Lemma~\ref{l-SA-equiv2}, for each state $t\in S_B$ there is an equivalent state distribution $q_j$ of $A$.
Let $Q$ be the matrix with rows $q_1,\ldots,q_m$.
Then $\eta_{B}^{t_j}(y\mid x) = q_j\cdot\eta_A(y\mid x)$ for all $x\in\Sigma^*$, $y\in\Omega^*$ and $1\leq j\leq m$.
Thus
$\eta_{B}(y\mid x) = Q\cdot\eta_A(y\mid x)$ for all $x\in\Sigma^*$, $y\in\Omega^*$.
Hence, $B$ is a stochastic image of $A$.

Conversely, let $B$ be a stochastic image of $A$.
The $j$th row $q_i$ of the matrix $Q$ is a state distribution of $A$.
Then $\eta_{B}^{t_j}(y\mid x) = q_j\cdot\eta_A(y\mid x)$ for all $x\in\Sigma^*$, $y\in\Omega^*$ and $1\leq j\leq m$.
Thus the state $t_j$ is equivalent to the state distribution $q_j$ of $A$.
By Lemma~\ref{l-SA-equiv2}, it follows that $A$ covers $B$.
\end{proof}

Stochastic homomorphy is reflected by the corresponding $H$-matrices.
\begin{proposition}
Let $A$ and $B$ be stochastic automata with the same input and output alphabets.
If there is an $m\times n$ matrix $Q$ such that $B$ is a stochastic image of $A$, 
then $H_B$ is obtained from $Q\cdot H_A$ by deleting linearly dependent columns.
If the columns of $Q\cdot H_A$ are linearly independent, then $H_B = Q\cdot H_A$.
\end{proposition}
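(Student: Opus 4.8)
The plan is to exploit the two defining features of the matrix $H_B$: its columns are result vectors $\eta_B(y\mid x)$, and they are chosen greedily in the lexicographic order on $M$ so that, at every stage, the selected vectors form a basis of the span of all result vectors encountered so far. First I would record that the columns of $Q\cdot H_A$ are themselves result vectors of $B$: the $i$th column of $H_A$ is $\eta_A(y_i\mid x_i)$, and since $B$ is a stochastic image of $A$ we have $Q\cdot\eta_A(y_i\mid x_i)=\eta_B(y_i\mid x_i)$. Because the columns of $H_A$ span $V_A$ and $\eta_B=Q\cdot\eta_A$ on every pair $(y\mid x)$, the image $V_B=Q(V_A)$ is spanned by the columns of $Q\cdot H_A$. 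Hence, after discarding linearly dependent columns, what remains is a basis of $V_B$; the only thing left to verify is that this particular basis is exactly the greedy lexicographic basis defining $H_B$.

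The main point, and the step I expect to be the real obstacle, is to show that the pairs $(y\mid x)$ selected to build $H_B$ form a subset of those selected to build $H_A$, occurring in the same lexicographic order. For $e=(y\mid x)\in M$, write $V_A^{<e}$ for the span of $\{\eta_A(f)\mid f<e\}$ and $V_B^{<e}$ for the analogous span for $B$; by linearity of $Q$ one has $V_B^{<e}=Q(V_A^{<e})$. A pair $e$ is chosen for $H_A$ precisely when $\eta_A(e)\notin V_A^{<e}$, and the same criterion governs $H_B$. Now if $e$ is \emph{not} chosen for $A$, then $\eta_A(e)\in V_A^{<e}$, so $\eta_B(e)=Q\cdot\eta_A(e)\in Q(V_A^{<e})=V_B^{<e}$, and therefore $e$ is not chosen for $B$ either. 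Thus the selection set $P_B$ of $B$ is contained in the selection set $P_A$ of $A$. The crucial technical ingredient here is the standard ``prefix-span'' property of greedy selection, namely that the chosen pairs below $e$ form a basis of the whole space spanned by all pairs below $e$; this is what lets me compare $V_A^{<e}$ and $V_B^{<e}$ rather than merely comparing spans of the already-selected vectors.

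Finally I would run the column-deletion process directly on $Q\cdot H_A$, whose columns are indexed by $P_A$ listed in increasing lexicographic order, and check by induction along this order that the retained columns are exactly those indexed by $P_B$. At the column indexed by $e\in P_A$, the induction hypothesis gives that the previously retained columns are indexed by $P_B\cap\{f<e\}$ and hence span $V_B^{<e}$; consequently $e$ is retained if and only if $\eta_B(e)\notin V_B^{<e}$, that is, if and only if $e\in P_B$. So deleting the linearly dependent columns of $Q\cdot H_A$ leaves precisely the columns $\eta_B(e)$ with $e\in P_B$, in lexicographic order, which is the definition of $H_B$. In particular, if no column of $Q\cdot H_A$ is linearly dependent, then $P_B=P_A$, nothing is deleted, and $H_B=Q\cdot H_A$, giving the second assertion.
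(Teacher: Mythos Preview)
Your proposal is correct and follows the same approach as the paper's proof. The paper observes that the columns of $Q\cdot H_A$ are result vectors of $B$ spanning $V_B$ and then simply asserts that ``it is easy to check'' that deleting dependent columns in lexicographic order yields $H_B$; you have supplied exactly this check, via the containment $P_B\subseteq P_A$ and the inductive tracking of $V_B^{<e}$, so your argument is a rigorous expansion of what the paper leaves implicit.
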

\begin{proof}
Let $\eta_1,\ldots,\eta_d$ be the columns of the $n\times d$ matrix $H_A$.
Then $\eta'_i = Q\cdot\eta_i$ for $1\leq i\leq d$ are the columns of the $m\times d$ matrix $Q\cdot H_A$.
Since each result vector $\eta_A(y\mid x)$ of $A$ is a linear combination of the vectors $\eta_1,\ldots,\eta_d$,
the vector $\eta_B(y\mid x)  =  Q\cdot\eta_A(y\mid x)$ is a linear combination of the vectors
$\eta'_1=Q\cdot\eta_1,\ldots,\eta'_d=Q\cdot\eta_d$.

Put $C = QH_A$.  
We may assume that the columns of $C$ are totally ordered as given by the set $M$ in~(\ref{e-orderM}).
It is easy to check that the matrix $D$ obtained from $C$ by deleting in turn the columns 
which are linear combinations of previous columns is exactly the matrix $H_B$.
Moreover, if the columns of $QH_A$ are linearly independent, then $C=D=QH_A$.
\end{proof}

\begin{example}
The stochastic automata $A$ and $B'$ in Figs.~\ref{f-b34} and~\ref{f-b4} have the respective matrices
$$
H_A = 
\begin{pmatrix}
1 & \frac{1}{3}\\
1 & \frac{2}{3}\\
1 & \frac{1}{2}\\
1 & \frac{1}{2}\\
\end{pmatrix}
\quad\mbox{and}\quad
H_{B'} = 
\begin{pmatrix}
1 & \frac{1}{3}\\
1 & \frac{2}{3}\\
1 & \frac{1}{2}\\
\end{pmatrix}.
$$
Then $B'$ is a stochastic image of $A$ by using the matrix $Q = \begin{pmatrix}  & 0\\ I_3 & 0 \\ &0\end{pmatrix}$.
\EXX
\end{example}

\begin{proposition}
Let $A$ and $B$ be stochastic automata with the same input and output alphabets.
Suppose $A$ and $B$ have $n$ and $m$ states, respectively. 
Then $B$ is a stochastic image of $A$ if and only if
there exists a stochastic $m\times n$ matrix~$Q$ such that for all $a\in\Sigma$ and $b\in\Omega$,
$$Q\cdot P_A(b\mid a)\cdot H_A = P_B(b\mid a) \cdot Q\cdot H_A.$$ 
\end{proposition}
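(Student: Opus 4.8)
The plan is to prove both implications by exploiting two facts about $H_A$: its columns are themselves result vectors $\eta_A(y_i\mid x_i)$ of $A$, and they span the space $V_A$ that contains \emph{every} result vector $\eta_A(y\mid x)$. The single recurring tool is Prop.~\ref{p-omega}, which lets me peel a leading symbol pair off a result vector.

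For the forward direction I would assume $B$ is a stochastic image of $A$ via the stochastic matrix $Q$, so $\eta_B(y\mid x) = Q\cdot\eta_A(y\mid x)$ for all $x,y$. Each column of $H_A$ has the form $\eta_A(y_i\mid x_i)$, and for a fixed pair $(b\mid a)$ I would compute, column by column,
$$Q\cdot P_A(b\mid a)\cdot\eta_A(y_i\mid x_i) = Q\cdot\eta_A(by_i\mid ax_i) = \eta_B(by_i\mid ax_i) = P_B(b\mid a)\cdot\eta_B(y_i\mid x_i) = P_B(b\mid a)\cdot Q\cdot\eta_A(y_i\mid x_i).$$
Reassembling these identities across all columns yields $Q\cdot P_A(b\mid a)\cdot H_A = P_B(b\mid a)\cdot Q\cdot H_A$ for every $a\in\Sigma$ and $b\in\Omega$, which is the asserted relation.

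For the converse I would take a stochastic $Q$ satisfying the displayed relation and show $\eta_B(y\mid x) = Q\cdot\eta_A(y\mid x)$ for all $x\in\Sigma^*$, $y\in\Omega^*$ by induction on the common length $|x|=|y|$; the case $|x|\ne|y|$ is immediate because both result vectors vanish. In the base case $x=y=\epsilon$, stochasticity of $Q$ gives $Q\cdot\bone_n=\bone_m$, i.e. $Q\cdot\eta_A(\epsilon\mid\epsilon)=\eta_B(\epsilon\mid\epsilon)$. For the inductive step I would peel off a leading pair via Prop.~\ref{p-omega} and apply the induction hypothesis: $\eta_B(by\mid ax) = P_B(b\mid a)\cdot\eta_B(y\mid x) = P_B(b\mid a)\cdot Q\cdot\eta_A(y\mid x)$.

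The crux, and the one place care is needed, is that the hypothesis supplies the intertwining relation only after right-multiplication by $H_A$, that is, only on the subspace $V_A$ spanned by its columns, and not as an identity of matrices on all of $\RR^n$. The resolution is that every result vector lies in $V_A$: writing $\eta_A(y\mid x) = H_A\alpha$ for a coefficient vector $\alpha\in\RR^d$, the relation $Q\cdot P_A(b\mid a)\cdot H_A = P_B(b\mid a)\cdot Q\cdot H_A$ gives $P_B(b\mid a)\cdot Q\cdot\eta_A(y\mid x) = P_B(b\mid a)\cdot Q\cdot H_A\alpha = Q\cdot P_A(b\mid a)\cdot H_A\alpha = Q\cdot P_A(b\mid a)\cdot\eta_A(y\mid x) = Q\cdot\eta_A(by\mid ax)$. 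Combining this with the previous display shows $\eta_B(by\mid ax) = Q\cdot\eta_A(by\mid ax)$, closing the induction, and hence $B$ is a stochastic image of $A$.
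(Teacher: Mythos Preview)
Your proof is correct and follows essentially the same route as the paper: both directions rest on Prop.~\ref{p-omega} to peel off a leading symbol pair, on the stochasticity of $Q$ to handle the empty word, and on the fact that the columns of $H_A$ span $V_A$ so that the intertwining relation on $H_A$ extends to every result vector. The paper compresses the argument into a chain of ``equivalently'' steps, while you make the converse direction explicit via induction on $|x|=|y|$; the substance is the same.
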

\begin{proof}
Let $B$ be a stochastic image of $A$.
That is, there is a stochastic $m\times n$ matrix $Q$ such that
$Q\cdot \eta_A(y\mid x) = \eta_B(y\mid x)$ for all $x\in\Sigma^*$ and $y\in\Omega^*$. 

Equivalently,
$Q\cdot \eta_A(y\mid x) = \eta_B(y\mid x)$ for all $x\in\Sigma^+$ and $y\in\Omega^+$, 
since $Q\cdot \eta_A(\epsilon\mid\epsilon) = \eta_B(\epsilon\mid \epsilon)$. 

Equivalently,
$Q\cdot \eta_A(by\mid ax) =  \eta_B(by\mid ax)$ 
for all $a\in\Sigma$, $b\in \Omega$, $x\in\Sigma^*$ and $y\in\Omega^*$. 

Equivalently, by Prop.~\ref{p-omega},
$Q\cdot P_A(b\mid a) \cdot \eta_A(y\mid x) = P_B(b\mid a) \cdot \eta_B(y\mid x)$ 
for all $a\in\Sigma$, $b\in \Omega$, $x\in\Sigma^*$ and $y\in\Omega^*$. 

Equivalently, 
$Q\cdot P_A(b\mid a) \cdot \eta_A(y\mid x) = P_B(b\mid a) \cdot Q\cdot \eta_A(y\mid x)$ 
for all $a\in\Sigma$, $b\in \Omega$, $x\in\Sigma^*$ and $y\in\Omega^*$. 

Equivalently,
$Q\cdot P_A(b\mid a) \cdot H_A  = P_B(b\mid a) \cdot Q \cdot H_A$ for all $a\in\Sigma$ and $b\in\Omega$, since
the columns of $H_A$ form a basis of the set of result vectors in $A$. 
\end{proof}

General homomorphisms between stochastic automata are introduced.
For this, let $A = (S_A,\Sigma,\Omega,p_A)$ and $B = (S_B,\Sigma,\Omega,p_B)$ be stochastic automata.
An {\em S-homomorphism}\index{S-homomorphism} from $A$ to $B$ is given by a mapping $\phi:S_A\rightarrow S_B$
such that for all $a\in\Sigma$, $b\in\Omega$ and $s,s'\in S_A$,
\begin{eqnarray} 
p_B(b,\phi(s')\mid a,\phi(s)) 
&=& p_A(b,\phi^{-1}\phi(s')\mid a,s)\\
&=& \sum_{s''\in S_A\atop \phi(s'')=\phi(s')} p_A(b,s''\mid a,s),\nonumber
\end{eqnarray}
where $\phi^{-1}\phi(s')= \{s''\in S_A\mid\phi(s'')=\phi(s')\}$. %is the {\em fiber}\index{fiber} of $\phi$ over $s'$.

An S-homomorphism from $A$ to $B$ is also written as $\phi:A\rightarrow B$.
In particular, an {\em S-homomorphism} $\phi:A\rightarrow B$
is called {\em S-epimorphism} if the mapping $\phi$ is surjective.
Each stochastic automaton $A$ can be mapped by an S-epimomorphism 
to the {\em trivial stochastic automaton\/}\index{stochastic automaton!trivial} $B=(\{a\},\{b\},\{s_0\},p)$
with $p(b,s_0\mid a,s_0)= 1$.

\begin{proposition}\label{p-Shomxy}
Let $A = (S_A,\Sigma,\Omega,p_A)$ and $B = (S_B,\Sigma,\Omega,p_B)$ be stochastic automata.
Each S-homomorphism $\phi:A\rightarrow B$ satisfies
\begin{eqnarray} 
p_B(y,\phi(s')\mid x,\phi(s)) = p_A(y,\phi^{-1}\phi(s')\mid x,s),
\end{eqnarray}
where $x\in\Sigma^*$, $y\in\Omega^*$ and $s,s'\in S_A$.
\end{proposition}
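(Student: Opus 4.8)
The plan is to prove the identity by induction on the common length $n=|x|=|y|$, after first disposing of the case $|x|\ne|y|$, where both sides vanish by~(\ref{e-SA-phat2}) (the right-hand side being a finite sum of zeros). For the base case $x=y=\epsilon$, equation~(\ref{e-SA-phat1}) makes the left-hand side equal to $1$ exactly when $\phi(s)=\phi(s')$, while the right-hand side $\sum_{s''\in\phi^{-1}\phi(s')}p_A(\epsilon,s''\mid\epsilon,s)$ equals $1$ exactly when $s\in\phi^{-1}\phi(s')$, i.e. when $\phi(s)=\phi(s')$; so the two agree. Throughout I read the right-hand side as $p_A(y,\phi^{-1}\phi(s')\mid x,s)=\sum_{s''\in\phi^{-1}\phi(s')}p_A(y,s''\mid x,s)$, consistent with the single-symbol definition of an S-homomorphism.

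Before the inductive step I would isolate what I expect to be the main obstacle: starting from an image state, $B$ must never leak probability to states outside $\phi(S_A)$. Concretely, for fixed $a\in\Sigma$ and $s\in S_A$, summing the defining property of an S-homomorphism over all outputs $b$ and over the distinct states $v\in\phi(S_A)$ gives $\sum_{b}\sum_{v\in\phi(S_A)}p_B(b,v\mid a,\phi(s))=\sum_b\sum_{s''\in S_A}p_A(b,s''\mid a,s)=1$, since the fibres $\phi^{-1}(v)$ partition $S_A$ and $p_A(\cdot,\cdot\mid a,s)$ is a distribution. As $p_B(\cdot,\cdot\mid a,\phi(s))$ is also a distribution, the total mass on $S_B\setminus\phi(S_A)$ is zero, and nonnegativity forces $p_B(b,u\mid a,\phi(s))=0$ for every $u\notin\phi(S_A)$. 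A short secondary induction on word length upgrades this to $p_B(y,u\mid x,\phi(s))=0$ for all $u\notin\phi(S_A)$: in the recursion~(\ref{e-SA-phat3}) each intermediate state $v$ either lies outside $\phi(S_A)$, killing the first factor by the inductive hypothesis, or lies in $\phi(S_A)$, killing the second factor by the single-symbol statement just proved.

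With this invariance in hand, the main inductive step is a direct computation. Starting from the right-hand side for words $xa$ and $yb$ of length $n+1$, I would expand $p_A(yb,s'''\mid xa,s)$ by~(\ref{e-SA-phat3}), interchange the finite sums over $s'''\in\phi^{-1}\phi(s')$ and $t\in S_A$, and recognize the inner sum as $p_A(b,\phi^{-1}\phi(s')\mid a,t)=p_B(b,\phi(s')\mid a,\phi(t))$ by the S-homomorphism property. Grouping the remaining sum over $t$ according to the value $u=\phi(t)$ turns $\sum_{t\in\phi^{-1}(u)}p_A(y,t\mid x,s)$ into $p_A(y,\phi^{-1}(u)\mid x,s)$, which equals $p_B(y,u\mid x,\phi(s))$ by the inductive hypothesis, applied with any representative of the fibre $u$. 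This produces $\sum_{u\in\phi(S_A)}p_B(y,u\mid x,\phi(s))\cdot p_B(b,\phi(s')\mid a,u)$; the invariance lemma lets me adjoin the vanishing terms for $u\notin\phi(S_A)$, so the sum runs over all of $S_B$ and collapses via~(\ref{e-SA-phat3}) to $p_B(yb,\phi(s')\mid xa,\phi(s))$, completing the induction.

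The one genuinely nontrivial ingredient is the invariance lemma; everything else is bookkeeping with the recursion~(\ref{e-SA-phat3}) and the fibre partition. If one assumes $\phi$ surjective (the case of an S-epimorphism), then $\phi(S_A)=S_B$ and the lemma is vacuous, so the main induction alone suffices.
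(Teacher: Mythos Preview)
Your proof is correct and follows the same route the paper indicates---the paper's entire proof is the single sentence ``This assertion can be proved by induction on the length of the input words,'' and you have carried out exactly that induction. In fact you have done more: you correctly isolate and prove the invariance lemma (no probability in $B$ leaks from $\phi(S_A)$ to $S_B\setminus\phi(S_A)$), which is genuinely needed for the inductive step when $\phi$ is not surjective and which the paper does not mention.
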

This assertion can be proved by induction on the length of the input words.

\begin{proposition}\label{p-S-hom}
Let $\phi:A\rightarrow B$ be an S-homomorphism.
Then each state $s$ in $A$ is equivalent to the state $\phi(s)$ in $B$.
In particular, if $\phi$ is an S-epimorphism, then $A$ and $B$ are S-equivalent.
\end{proposition}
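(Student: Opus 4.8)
The plan is to verify directly that the marginal output behaviors of $s$ in $A$ and of $\phi(s)$ in $B$ coincide, using the word-level transfer identity of Prop.~\ref{p-Shomxy}. Recall that the state $s$ of $A$ is equivalent to the state $\phi(s)$ of $B$ precisely when $\eta_A^s(y\mid x) = \eta_B^{\phi(s)}(y\mid x)$ for all $x\in\Sigma^*$ and $y\in\Omega^*$, where $\eta_A^s(y\mid x) = \sum_{s'\in S_A} p_A(y,s'\mid x,s)$ and analogously for $B$. So I would fix $x$ and $y$ and compute $\eta_B^{\phi(s)}(y\mid x)=\sum_{t'\in S_B} p_B(y,t'\mid x,\phi(s))$ by splitting the state set $S_B$ into the image $\im\phi$ and its complement.

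For the image part, every $t'\in\im\phi$ may be written as $\phi(s')$, and Prop.~\ref{p-Shomxy} gives $p_B(y,\phi(s')\mid x,\phi(s)) = p_A(y,\phi^{-1}\phi(s')\mid x,s) = \sum_{s'':\phi(s'')=t'} p_A(y,s''\mid x,s)$. Since the fibers $\phi^{-1}(t')$ with $t'\in\im\phi$ partition $S_A$, summing over $t'\in\im\phi$ collapses the double sum into $\sum_{s''\in S_A} p_A(y,s''\mid x,s) = \eta_A^s(y\mid x)$.

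The one point that needs care is the complement $S_B\setminus\im\phi$: I must show $p_B(y,t'\mid x,\phi(s)) = 0$ for every such $t'$. This follows from normalization. By~(\ref{e-SA-phat4}) the total mass $\sum_{y}\sum_{t'\in S_B} p_B(y,t'\mid x,\phi(s))$ equals $1$; on the other hand, the computation of the previous paragraph, summed over all $y$, shows that the mass carried by $\im\phi$ alone is $\sum_y \eta_A^s(y\mid x) = 1$ (again by~(\ref{e-SA-phat4}), now for $A$). As all entries are nonnegative, the remaining mass on $S_B\setminus\im\phi$ must vanish termwise. Hence the complement contributes nothing, $\eta_B^{\phi(s)}(y\mid x) = \eta_A^s(y\mid x)$, and $s$ and $\phi(s)$ are equivalent.

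Finally, for the S-epimorphism claim I would read off both inclusions defining S-equivalence. The inclusion $\cS_A\subseteq\cS_B$ is immediate, since every $\eta_A^s$ equals $\eta_B^{\phi(s)}\in\cS_B$. For $\cS_B\subseteq\cS_A$, surjectivity of $\phi$ lets me write any state $t\in S_B$ as $t=\phi(s)$, whence $\eta_B^t = \eta_B^{\phi(s)} = \eta_A^s\in\cS_A$. Thus $\cS_A=\cS_B$, i.e.\ $A$ and $B$ are S-equivalent. I expect the only genuine subtlety to be the vanishing of the transition mass onto states outside $\im\phi$; conceptually these states are simply unreachable from $\phi(s)$, and everything else is a direct unwinding of Prop.~\ref{p-Shomxy}.
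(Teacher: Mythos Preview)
Your argument is correct and follows the same route as the paper: both proofs rest on Prop.~\ref{p-Shomxy} to identify $\eta_B^{\phi(s)}(y\mid x)$ with $\eta_A^s(y\mid x)$. The paper's proof is a one-liner that simply asserts this identity and says ``the results follow,'' whereas you spell out the fiber decomposition over $\im\phi$ and, more importantly, give the normalization argument showing that states in $S_B\setminus\im\phi$ receive zero mass from $\phi(s)$ --- a point the paper leaves entirely implicit but which is genuinely needed in the non-surjective case.
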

\begin{proof}
For all $x\in\Sigma^*$, $y\in\Omega^*$, and $s\in S_A$, we have by Prop.~\ref{p-Shomxy}, 
$$\eta_B^{\phi(s)}(y\mid x) = \eta_A^s(y\mid x).$$
The results follow.
\end{proof}

\begin{example}
Consider the stochastic automaton (Fig.~\ref{f-f24})
$$A=(\{a\},\{b,c\},\{s_1,s_2,s_3,s_4\},p_A),$$ 
where
$$
P_A(b\mid a) = \left( \begin{array}{cccc}
0 & 0 & \frac{1}{6} & \frac{1}{6} \\ 
0 & 0 &\frac{2}{3} & 0 \\
0 & 0 & 0 & \frac{1}{2} \\
0 & 0 & 0 & \frac{1}{2} \\
\end{array} \right)
\quad\mbox{and}\quad
P_A(c\mid a) = \left( \begin{array}{cccc}
0 & 0 & \frac{1}{3} & \frac{1}{3} \\ 
0 & 0 &\frac{1}{3} & 0 \\
0 & 0 &  \frac{1}{2} & 0\\
0 & 0 & 0 & \frac{1}{2} \\
\end{array} \right)
$$
and the stochastic automaton (Fig.~\ref{f-b40})
$$B=(\{a\},\{b,c\},\{t_1,t_2,t_3\},p_B),$$ 
where
$$
P_B(b\mid a) = \left( \begin{array}{ccc}
0 & 0 & \frac{1}{3} \\ 
0 & 0 &\frac{2}{3} \\
0 & 0 & \frac{1}{2} \\
\end{array} \right)
\quad\mbox{and}\quad
P_B(c\mid a) = \left( \begin{array}{ccc}
0 & 0 & \frac{2}{3} \\ 
0 & 0 &\frac{1}{3}  \\
0 & 0 &  \frac{1}{2} \\
\end{array} \right).
$$
The mapping $\phi:S_A\rightarrow S_B$ given by $\phi(s_1)=t_1$, $\phi(s_2)=t_2$ and $\phi(s_3)=\phi(s_4)=t_3$ is
an S-epimorphism.
For instance,
$p_B(b,t_3\mid a,t_1) = p_A(b,s_3\mid a,s_1) + p_A(b,s_4\mid a,s_1) = \frac{1}{6} + \frac{1}{6} = \frac{1}{3}$
and
$p_B(b,t_3\mid a,t_2) = p_A(b,s_3\mid a,s_2) + p_A(b,s_4\mid a,s_2) = \frac{2}{3} $.
\EXX
\end{example}
\begin{figure}[hbt]
\begin{center}
\mbox{$
\xymatrix{
%\txt{start}\ar@{-->}[d] && \\
&& *++[o][F-]{s_1} \ar@{->}[lld]_{a,b:1/6;\; a,c:1/3} \ar@{->}[rrd]^{a,b:1/6;\; a,c:1/3} \\
*++[o][F-]{s_3} \ar@(ul,dl)[]_{a,c:1/2}\ar@{->}[rrrr]^{a,b:\;1/2} &&&& *++[o][F-]{s_3} \ar@(ur,dr)[]^{a,b:1/2;\; a,c:1/2}\\
&& *++[o][F-]{s_2} \ar@{->}[ull]^{a,b:2/3;\;a,c:1/3}\\
}
$}
\end{center}
\caption{State diagram of stochastic automaton $A$.}\label{f-f24}
\end{figure}
\begin{figure}[hbt]
\begin{center}
\mbox{$
\xymatrix{
%\txt{start}\ar@{-->}[d] && \\
&& *++[o][F-]{t_1} \ar@{->}[lld]^{a,b:1/3;\; a,c:2/3} \\
*++[o][F-]{t_3} \ar@(ul,dl)[]_{a,b:1/2;\; a,c:1/2}\\
&& *++[o][F-]{t_2} \ar@{->}[ull]^{a,b:2/3;\;a,c:1/3}\\
}
$}
\end{center}
\caption{State diagram of stochastic automaton $B$.}\label{f-b40}
\end{figure}

%%%%%%%%%%%%%%%%%%%%%%%%%%%%%%%%%%%%%%%%%%%%%%%%%
\chapter{Observability and Determinism}

This chapter considers stochastic automata with restricted transition probability distributions.

A stochastic automaton $A=(S,\Sigma,\Omega,p)$ is called {\em observable\/}\index{stochastic automaton!observable}
if there is a (partial) mapping $\gamma: \Omega\times \Sigma\times S \rightarrow S$ such that for all
$a\in\Sigma$, $b\in\Omega$, and $s,s'\in S$,
\begin{eqnarray}
p(b,s'\mid a,s)\ne 0\quad\Rightarrow\quad s' = \gamma(b,a,s).
\end{eqnarray}
If in an observable stochastic automaton the current state, the input and the output are known, 
the next state is uniquely determined.

The (partial) mapping $\gamma: \Omega\times \Sigma\times S \rightarrow S$ can be extended 
to the (partial) mapping $\gamma: \Omega^*\times \Sigma^*\times S \rightarrow S$ such that
for all $a\in\Sigma$, $x\in\Sigma^*$, $b\in\Omega$, $y\in\Omega^*$, $s,s'\in S$ with $|x|=|y|$, 
\begin{eqnarray}
\gamma(\epsilon,\epsilon,s) &=& s,\label{e-gamma-p0}\\
\gamma(by,ax,s) &=& \gamma(y,x,\gamma(b,a,s)). \label{e-gamma-p}
\end{eqnarray}

\begin{lemma}\label{l-Ao}
A stochastic automaton $A$ is observable if and only if
for all $a\in\Sigma$ and $b\in\Omega$, 
the matrices $P(b\mid a)$ contain in each row at most one nonzero entry.
\end{lemma}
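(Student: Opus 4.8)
The plan is to exploit the exact bookkeeping between the substochastic matrices and the conditional distribution: by definition the $(i,j)$ entry of $P(b\mid a)$ is $p(b,s_j\mid a,s_i)$, so a given row $i$ records, for the fixed starting state $s_i$, input symbol $a$ and output symbol $b$, the transition probabilities into each possible successor state $s_j$. Under this dictionary the matrix condition ``each row of $P(b\mid a)$ has at most one nonzero entry'' is literally the statement ``for each triple $(b,a,s_i)$ there is at most one state $s_j$ with $p(b,s_j\mid a,s_i)\ne 0$''. So the whole lemma is a translation between a functional property of $\gamma$ and a sparsity property of the matrices, and both directions will be short.

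For the forward direction I would assume $A$ is observable, fix $a\in\Sigma$, $b\in\Omega$ and a row index $i$, and suppose two entries $p(b,s_j\mid a,s_i)$ and $p(b,s_{j'}\mid a,s_i)$ in that row were both nonzero. Observability then forces $s_j=\gamma(b,a,s_i)=s_{j'}$; since the states are distinct this gives $j=j'$, so at most one entry in row $i$ is nonzero. As $i$, $a$ and $b$ were arbitrary, every row of every $P(b\mid a)$ has at most one nonzero entry.

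For the converse I would assume every row of every $P(b\mid a)$ has at most one nonzero entry and use this to build the witnessing map $\gamma$. Define $\gamma(b,a,s_i)=s_j$ if row $i$ of $P(b\mid a)$ has its unique nonzero entry in column $j$, and leave $\gamma(b,a,s_i)$ undefined when row $i$ is the zero row. The ``at most one'' hypothesis is precisely what makes this assignment single-valued, hence a well-defined partial map $\gamma:\Omega\times\Sigma\times S\rightarrow S$. By construction, $p(b,s'\mid a,s_i)\ne 0$ means the nonzero entry of row $i$ sits in the column indexed by $s'$, so $s'=\gamma(b,a,s_i)$, and $A$ is observable.

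There is no real obstacle here; the only point requiring care is the partiality of $\gamma$. A row of $P(b\mid a)$ may be entirely zero, i.e.\ the automaton may assign probability $0$ to emitting $b$ on input $a$ from $s_i$, and in that case $\gamma(b,a,s_i)$ must simply be left undefined rather than forced to some arbitrary state. Keeping $\gamma$ partial, exactly as in the definition, makes the equivalence tight and avoids inventing spurious successor states.
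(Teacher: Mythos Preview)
Your proof is correct and follows essentially the same approach as the paper's own proof: both directions are handled by directly translating between the definition of observability via the partial map $\gamma$ and the sparsity condition on the rows of $P(b\mid a)$. Your explicit discussion of the partiality of $\gamma$ (leaving it undefined on zero rows) is a small but welcome clarification that the paper's proof leaves implicit.
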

\begin{proof}
Let $A$ be observable and let $S=\{s_1,\ldots,s_n\}$.
Consider the $s_i$th row of the matrix $P(b\mid a)$.
Since $p(b,s_j\mid a,s_i)\ne 0$ implies $s_j=\gamma(b,a,s_i)$, the $s_i$th row contains 
$p(b,s_j\mid a,s_i)$ as the only nonzero element.

Let the matrices $P(b\mid a)$ with $a\in\Sigma$ and $b\in\Omega$ contain in each row at most one nonzero entry,
say $p(b,s_j\mid a,s_i)$. 
Then put $s_j=\gamma(b,a,s_i)$.
This defines a mapping $\gamma:\Omega\times\Sigma \times S\rightarrow S$ such that $A$ is observable. 
\end{proof}

\begin{proposition}
Let $A$ be a stochastic automaton and $\gamma:\Omega^*\times \Sigma^* \times S\rightarrow S$ be the mapping as above.
Then for all $s,s'\in S$, $x\in\Sigma^*$ and $y\in\Omega^*$ with $|x|=|y|$, 
\begin{eqnarray}
p(y,s'\mid x,s)\ne 0\quad\Rightarrow\quad s' = \gamma(y,x,s).
\end{eqnarray}
\end{proposition}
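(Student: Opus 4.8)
The statement is exactly the word-level analogue of the defining property of observability, so the natural strategy is induction on the common length $n=|x|=|y|$. The one design decision that makes the argument frictionless is \emph{where} to split the words: the recursive definition (\ref{e-gamma-p}) of the extended map $\gamma$ prepends a symbol pair, $\gamma(by,ax,s)=\gamma(y,x,\gamma(b,a,s))$, so I will peel off the \emph{leading} input/output symbols rather than the trailing ones. This lets the split of $p$ line up termwise with the split of $\gamma$, and no separate concatenation lemma for $\gamma$ is needed.

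\textbf{Base case and inductive step.} For $n=0$ we have $x=y=\epsilon$, and $p(\epsilon,s'\mid\epsilon,s)\neq 0$ forces $s=s'$ by~(\ref{e-SA-phat1}); since $\gamma(\epsilon,\epsilon,s)=s$ by~(\ref{e-gamma-p0}), the implication holds. For the inductive step, assume the claim for all words of length $n$ and take words of length $n+1$, written as $ax'$ and $by'$ with $a\in\Sigma$, $b\in\Omega$, and $|x'|=|y'|=n$. Applying Prop.~\ref{p-SA-prob0} to split off the single leading pair gives $p(by',s'\mid ax',s)=\sum_{t\in S} p(b,t\mid a,s)\cdot p(y',s'\mid x',t)$. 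If the left-hand side is nonzero, then at least one summand is nonzero, so there exists $t\in S$ with $p(b,t\mid a,s)\neq 0$ and $p(y',s'\mid x',t)\neq 0$. The first inequality together with the definition of an observable automaton yields $t=\gamma(b,a,s)$, and the second inequality together with the induction hypothesis yields $s'=\gamma(y',x',t)$. Substituting, $s'=\gamma(y',x',\gamma(b,a,s))=\gamma(by',ax',s)$ by~(\ref{e-gamma-p}), which is the desired conclusion $s'=\gamma(y,x,s)$.

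\textbf{Where the difficulty (mild) lies.} There is no deep obstacle here; the only point that requires care is the alignment between the decomposition of $p$ and the recursion for $\gamma$. Had I instead decomposed at the back via~(\ref{e-SA-phat3}), I would have obtained $s'=\gamma(b,a,\gamma(y',x',s))$ and would then need the auxiliary fact that $\gamma$ also satisfies the appended form $\gamma(y'b,x'a,s)=\gamma(b,a,\gamma(y',x',s))$, i.e.\ that the extension of $\gamma$ is compatible with concatenation; proving that would itself be a short induction. Splitting off the leading pair sidesteps this entirely, so the front decomposition is the clean route. Throughout, the well-definedness of the witness $t$ and the nonnegativity of all entries guarantee that ``nonzero product'' really does produce a single consistent successor state, matching the intuition that in an observable automaton the current state together with the read input and emitted output determine the next state.
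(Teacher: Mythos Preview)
Your proof is correct and follows essentially the same route as the paper: induction on the common length, splitting off the leading input/output pair so that the decomposition of $p$ (via Prop.~\ref{p-SA-prob0}) aligns directly with the recursion~(\ref{e-gamma-p}) for $\gamma$. The only cosmetic difference is that the paper takes single symbols as the base case whereas you start at the empty word; your version is arguably tidier since it covers the $|x|=|y|=0$ case explicitly.
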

\begin{proof}
Let $y=by'$ and $x=ax'$ with $a\in \Sigma$, $b\in\Omega$, $x'\in\Sigma^*$ and $y'\in\Omega^*$ with $k=|x'|=|y'|$. 

For $k=0$ the assertion is clear. %by definition clear.
Let $k\geq 1$.
Suppose $p(by',s'\mid ax',s)\ne 0$.
Then there is a state $s''\in S$ such that 
$p(b,s''\mid a,s)\ne 0$ and  $p(y',s'\mid x',s'')\ne 0$.
By induction,
$s'' = \gamma(b,a,s)$ and $s' = \gamma(y',x',s'')$.
Hence, $s' = \gamma(y',x',\gamma(b,a,s)) ) =\gamma(by',ax',s)$.
\end{proof}

\begin{proposition}
Let $A$ and $B$ be observable stochastic automata with mappings $\gamma_A$ and $\gamma_B$, respectively.
For all $s\in S_A$, $s'\in S_B$, $a\in\Sigma$, and $b\in\Omega$,
if $p_A(b,s'\mid a,s)\ne 0$ and the states $s$ and $s'$ are equivalent, 
the states $\gamma_A(b,a,s)$ and $\gamma_B(b,a,s')$ are also equivalent.
\end{proposition}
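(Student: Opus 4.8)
The plan is to prove the assertion directly from the definition of equivalence: I must show that $\gamma_A(b,a,s)$ and $\gamma_B(b,a,s')$ are equivalent, i.e. that
$$\eta_A^{\gamma_A(b,a,s)}(y\mid x)=\eta_B^{\gamma_B(b,a,s')}(y\mid x)$$
for all $x\in\Sigma^*$ and $y\in\Omega^*$. The engine of the argument is a factorization of result vectors that is special to observable automata: since Lemma~\ref{l-Ao} forces each row of $P(b\mid a)$ to contain at most one nonzero entry, the one-step recursion collapses to a single summand, and the successor can be peeled off cleanly.

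First I would record this factorization. By Prop.~\ref{p-omega} applied with a single leading input-output pair, $\eta_A^s(by\mid ax)$ is the $s$-component of $P_A(b\mid a)\cdot\eta_A(y\mid x)$, namely
$$\eta_A^s(by\mid ax)=\sum_{t\in S_A}p_A(b,t\mid a,s)\,\eta_A^t(y\mid x).$$
By observability the only possibly-nonzero term is the one with $t=\gamma_A(b,a,s)$, and its coefficient equals the marginal $\eta_A^s(b\mid a)=\sum_{t}p_A(b,t\mid a,s)$ because the remaining summands vanish. Hence
$$\eta_A^s(by\mid ax)=\eta_A^s(b\mid a)\cdot\eta_A^{\gamma_A(b,a,s)}(y\mid x),$$
and the identical computation in $B$ yields $\eta_B^{s'}(by\mid ax)=\eta_B^{s'}(b\mid a)\cdot\eta_B^{\gamma_B(b,a,s')}(y\mid x)$.

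Next I would invoke the equivalence of $s$ and $s'$, which gives $\eta_A^s(z\mid w)=\eta_B^{s'}(z\mid w)$ for all $w,z$. In particular the scalar factors coincide, $\alpha:=\eta_A^s(b\mid a)=\eta_B^{s'}(b\mid a)$, and the left-hand sides coincide, $\eta_A^s(by\mid ax)=\eta_B^{s'}(by\mid ax)$ for every $x,y$. Substituting the two factorizations produces
$$\alpha\cdot\eta_A^{\gamma_A(b,a,s)}(y\mid x)=\alpha\cdot\eta_B^{\gamma_B(b,a,s')}(y\mid x)$$
for all $x,y$. Reading the hypothesis as the statement that the $(a,b)$-transition out of $s$ carries positive probability, i.e. $\alpha=\eta_A^s(b\mid a)\neq 0$ (so that $\gamma_A(b,a,s)$ is defined, and by $s\equiv s'$ so is $\gamma_B(b,a,s')$), I may cancel $\alpha$ and obtain $\eta_A^{\gamma_A(b,a,s)}(y\mid x)=\eta_B^{\gamma_B(b,a,s')}(y\mid x)$ for all $x,y$, which is precisely the desired equivalence.

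The main obstacle is the cancellation step: the whole argument rests on the common scalar $\alpha$ being strictly positive so that the probability mass on the single outgoing edge can be divided out. This is exactly what the nonvanishing hypothesis supplies, and it is also why observability is essential; without the at-most-one-entry-per-row property of Lemma~\ref{l-Ao} the one-step relation would blend several successor states and no such factorization—hence no cancellation—would be available. A secondary, purely bookkeeping point is to interpret the written condition $p_A(b,s'\mid a,s)\neq 0$ as asserting that this transition is realizable (equivalently that $\gamma_A(b,a,s)$ exists); the equivalence $s\equiv s'$ then transfers positivity of the marginal to $B$, so both $\gamma$-values are well defined and the argument is symmetric in $A$ and $B$.
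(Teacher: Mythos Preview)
Your proof is correct and follows essentially the same approach as the paper: both arguments use observability (Lemma~\ref{l-Ao}) to factor $\eta^s(by\mid ax)$ as the scalar $\eta^s(b\mid a)$ times $\eta^{\gamma(b,a,s)}(y\mid x)$, then invoke the equivalence $s\equiv s'$ to match the factorizations in $A$ and $B$ and cancel the common nonzero scalar. The paper phrases the factorization in row-vector form, writing $\gamma_A(b,a,s)=\frac{1}{s\cdot\eta_A(b\mid a)}\,s\cdot P_A(b\mid a)$ as a unit vector, but the content is identical to yours; your remark about how to read the hypothesis $p_A(b,s'\mid a,s)\ne0$ is also well taken.
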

\begin{proof}
%Let $s$ and $s'$ be equivalent.
Let $a\in\Sigma$ and $b\in \Omega$.
By Lemma~\ref{l-Ao},
the rows $s\cdot P_A(b\mid a)$ and $s'\cdot P_B(b\mid a)$ contain a single nonzero element
$p_A(b,\gamma_A(b,a,s)\mid a,s)$ and
$p_B(b,\gamma_B(b,a,s')\mid a,s')$, respectively.
These are exactly the entries of $s\cdot\eta_A(b\mid a)$ and $s'\cdot\eta_B(b\mid a)$, respectively.
Thus by the convention that single states represent unit vectors in the state space, we obtain
$$\frac{1}{s\cdot \eta_A(b\mid a)} s\cdot P_A(b\mid a) =\gamma_A(b,a,s),$$
where $s\cdot \eta_A(b\mid a) \ne0$, and
$$\frac{1}{s'\cdot \eta_B(b\mid a)} s'\cdot P_B(b\mid a) =\gamma_B(b,a,s'),$$
where $s'\cdot \eta_B(b\mid a) \ne0$.
Then for all $a\in\Sigma$, $x\in\Sigma^*$, $b\in \Omega$, and $y\in\Omega^*$, 
\begin{eqnarray*}
\gamma_A(b,a,s)\cdot\eta_A(y\mid x)
&=& \frac{1}{s\cdot \eta_A(b\mid a)} s\cdot P_A(b\mid a)\cdot\eta_A(y\mid x)\\
&=& \frac{1}{s\cdot \eta_A(b\mid a)} s\cdot \eta_A(by\mid ax), \; \mbox{by Prop.~\ref{p-omega}},\\
&=& \frac{1}{s'\cdot \eta_B(b\mid a)} s'\cdot \eta_B(by\mid ax), \; \mbox{by equivalence of $s,s'$},\\
&=& \frac{1}{s'\cdot \eta_B(b\mid a)} s'\cdot P_B(b\mid a)\; \eta_B(y\mid x), \quad \mbox{by Prop.~\ref{p-omega}},\\
&=& \gamma_B(b,a,s')\cdot\eta_B(y\mid x).
\end{eqnarray*}
Hence,  $\gamma_A(b,a,s)$ and $\gamma_B(b,a,s')$ are equivalent.
\end{proof}
%The above assertion also holds for arbitrary state distributions.

\begin{example}
Take the stochastic automaton $A=(\{s_1,s_2,s_3\},\{a\},\{b,c\},p)$  in Fig.~\ref{f-Ao} with
$$
P(b\mid a) = \begin{pmatrix}
0 & \frac{1}{2} & 0\\
0 & \frac{1}{3} & 0\\
0 & \frac{5}{12} & 0\\
\end{pmatrix}
\quad\mbox{and}\quad
P(c\mid a) = \begin{pmatrix}
0 & \frac{1}{2} & 0\\
0 & \frac{2}{3} & 0\\
0 & \frac{7}{12} & 0\\
\end{pmatrix}.
$$
This automaton is observable with $\gamma(b,a,s) = \gamma(c,a,s) = s_2$ for all $s\in S$.
It is reduced, since the row vectors in the corresponding $H$-matrix are distinct (Prop.~\ref{p-simplex}),
$$H_A = \begin{pmatrix}
1 & \frac{1}{2} \\
1 & \frac{1}{3} \\
1 & \frac{5}{12}\\
\end{pmatrix}.
$$
However, the automaton is not minimal by Prop.~\ref{p-HA-pi},
since the state $s_3$ is equivalent to the state distribution $(\frac{1}{2},\frac{1}{2},0)$.
\EXX
\end{example}
\begin{figure}[hbt]
\begin{center}
\mbox{$
\xymatrix{
 *++[o][F-]{s_1} \ar@{->}[rrd]_{a,b:1/2;\; a,c:1/2} \\
&& *++[o][F-]{s_2} \ar@(ur,dr)[]^{a,b:1/3;\; a,c:2/3}\\
*++[o][F-]{s_3} \ar@{->}[urr]^{a,b:5/12;\;a,c:7/12}\\
}
$}
\end{center}
\caption{State diagram of observable stochastic automaton $A$.}\label{f-Ao}
\end{figure}
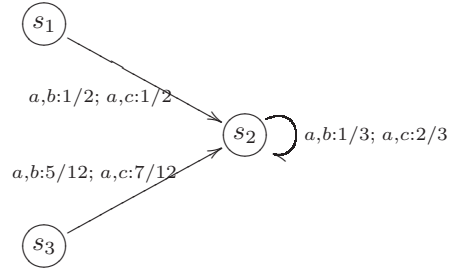

The last part of this chapter is devoted to deterministic automata.

A stochastic automaton $A$ is called {\em state-determined\/}\index{stochastic automaton!state-determined}
if there is a mapping $\delta:\Sigma\times S\rightarrow S$ such that for all
$a\in\Sigma$ and $s\in S$,
\begin{eqnarray}
\sum_{b\in \Omega} p(b,\delta(a,s)\mid a,s) = 1.
\end{eqnarray}
If a state-determined stochastic automaton $A$ is in state $s$ and $a$ is the input,
then $A$ transits into the state  $\delta(a,s)$ with certainty.

\begin{lemma}\label{l-As}
A stochastic automaton $A$ is state-determined if and only if
for each $a\in\Sigma$,
the matrix $P(a) = \sum_{b\in\Omega} P(b\mid a)$ contains in each row exactly one entry~$1$.
\end{lemma}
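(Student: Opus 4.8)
The plan is to mirror the proof of Lemma~\ref{l-Ao}, this time exploiting the fact that $P(a)$ is a stochastic matrix. First I would record that the $(i,j)$ entry of $P(a)=\sum_{b\in\Omega}P(b\mid a)$ equals $\sum_{b\in\Omega}p(b,s_j\mid a,s_i)$, i.e. the marginal probability of passing from $s_i$ to $s_j$ on input $a$, irrespective of the emitted symbol. Since the matrix $P(a)$ is stochastic, each of its rows consists of nonnegative entries summing to exactly $1$; this is the structural fact that the whole argument turns on.

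For the forward implication I would assume $A$ is state-determined with witnessing map $\delta$, fix an input $a$ and a state $s_i$, and read the defining identity $\sum_{b\in\Omega}p(b,\delta(a,s_i)\mid a,s_i)=1$ as the statement that the entry of $P(a)$ in row $i$ and column indexed by $\delta(a,s_i)$ is $1$. Because this row is nonnegative and sums to $1$, every remaining entry must vanish, so row $i$ contains exactly one entry equal to $1$.

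For the converse I would assume that each row of $P(a)$ has one entry equal to $1$; stochasticity then forces the other entries in that row to be $0$. Defining $\delta(a,s_i)=s_j$, where $j$ is the unique column carrying the $1$ in row $i$, immediately gives $\sum_{b\in\Omega}p(b,s_j\mid a,s_i)=(P(a))_{ij}=1$, so $A$ is state-determined with this $\delta$.

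The point deserving the most care is less an obstacle than the single observation underpinning both directions: the equivalence must be phrased through the \emph{stochastic} matrix $P(a)$ rather than the individual substochastic matrices $P(b\mid a)$. Nonnegativity together with the row-sum-one property is exactly what makes ``one entry equal to $1$'' force all the remaining row entries to $0$, and conversely forbids two distinct entries in a row from both equalling $1$; once this is isolated, both implications are immediate.
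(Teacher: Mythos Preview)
Your argument is correct. The paper actually states Lemma~\ref{l-As} without proof, so there is nothing to compare against; your write-up, which parallels the proof of Lemma~\ref{l-Ao} and uses the stochasticity of $P(a)$ to pass from ``one entry equals $1$'' to ``all other entries vanish,'' is exactly the intended verification.
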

A state-determined stochastic automaton is a specialized observable stochastic automaton in which the
mapping $\gamma$ is independent of the output.
\begin{example}
Consider the stochastic automaton $A=(\{s_1,s_2\},\{a\},\{b,c\},p)$  in Fig.~\ref{f-Astate} with
$$
P(b\mid a) = \begin{pmatrix}
0 & \frac{1}{3} \\
0 & \frac{2}{3} \\
\end{pmatrix}
\quad\mbox{and}\quad
P(c\mid a) = \begin{pmatrix}
0 & \frac{2}{3} \\
0 & \frac{1}{3} \\
\end{pmatrix}.
$$
This automaton is state-determined and strongly reduced by Prop.~\ref{p-strongr}, 
since the $H$-matrix has full rank,
$$H_A = \begin{pmatrix}
1 & \frac{1}{3} \\
1 & \frac{2}{3} \\
\end{pmatrix}.
$$
\EXX
\end{example}
\begin{figure}[hbt]
\begin{center}
\mbox{$
\xymatrix{
%*++[o][F-]{s_1} \ar@(ul,dl)[]_{a,b:1/3;\;a,c:2/3} %\ar@/^/[rr]^{a,c:2/3} 
*++[o][F-]{s_1} \ar@/^/[rr]^{a,b:1/3;\;a,c:2/3} 
%&& *++[o][F-]{s_2} \ar@(ur,dr)[]^{a,b:2/3;\;a,c:1/3} %\ar@/^/[ll]^{a,c:1/3} 
&& *++[o][F-]{s_2} \ar@/^/[ll]^{a,b:2/3;\;a,c:1/3} 
% *++[o][F-]{s_1} \ar@{->}[rrd]_{a,b:1/2;\; a,c:1/2} \\
%&& *++[o][F-]{s_2} \ar@(ur,dr)[]^{a,b:1/3;\; a,c:2/3}\\
%*++[o][F-]{s_3} \ar@{->}[urr]^{a,b:5/12;\;a,c:7/12}\\
}
$}
\end{center}
\caption{State diagram of state-determined stochastic automaton $A$.}\label{f-Astate}
\end{figure}
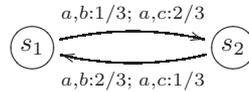

A stochastic automaton $A$ is called {\em output-determined\/}\index{stochastic automaton!output-determined}
if there is a mapping $\lambda:\Sigma\times S\rightarrow \Omega$ such that for all
$a\in\Sigma$ and $s\in S$,
\begin{eqnarray}
\sum_{s'\in S} p(\lambda(a,s),s'\mid a,s) = 1.
\end{eqnarray}
If an output-determined stochastic automaton $A$ is in state $s$ and $a$ is the input,
then $A$ outputs $\lambda(a,s)$ with certainty.

\begin{lemma}\label{l-Aoput}
A stochastic automaton $A$ is output-determined if and only if
for all $a\in\Sigma$ and $s\in S$, 
the matrix $P(\lambda(a,s)\mid a)$ contains a nonzero row labeled by~$s$ whose entries sum up to~$1$.
\end{lemma}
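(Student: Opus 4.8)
The plan is to prove this by directly unwinding the definitions, in exactly the same spirit as the observability criterion in Lemma~\ref{l-Ao}. The key observation is that the sum of the entries in the row of $P(b\mid a)$ labeled by a state $s=s_i$ is, by the defining relation $p_{ij}(b\mid a)=p(b,s_j\mid a,s_i)$, precisely the output marginal
$$\sum_{j=1}^n p(b,s_j\mid a,s_i) = \sum_{s'\in S} p(b,s'\mid a,s) = \eta_i(b\mid a).$$
Hence the defining condition for output-determinism, namely $\sum_{s'\in S} p(\lambda(a,s),s'\mid a,s)=1$, is literally the statement that the row of $P(\lambda(a,s)\mid a)$ labeled by $s$ has entries summing to~$1$.

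For the forward direction I would assume $A$ is output-determined with witnessing map $\lambda$. Fixing $a\in\Sigma$ and $s\in S$, the row of $P(\lambda(a,s)\mid a)$ labeled by $s$ then has entry-sum $\sum_{s'\in S}p(\lambda(a,s),s'\mid a,s)=1$ by definition; since this is a sum of nonnegative numbers equal to~$1$, the row is automatically nonzero. Conversely, given a map $\lambda$ for which each such row is nonzero with entry-sum~$1$, reading the same equation backwards yields $\sum_{s'\in S}p(\lambda(a,s),s'\mid a,s)=1$ for all $a$ and $s$, which is exactly the defining property of output-determinism witnessed by $\lambda$.

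Because both directions use the same map $\lambda$, there is essentially no hard step here: the entire content is the identification of the row-sum of $P(b\mid a)$ with the output marginal $\eta_i(b\mid a)$. The only point worth an explicit remark is that the qualifier ``nonzero row'' in the statement is redundant, being forced by the row summing to~$1$; I would include that observation so that the equivalence with the bare definition of output-determinism is transparent.
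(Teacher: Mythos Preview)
Your proposal is correct: the lemma is a direct restatement of the definition once one identifies the row-sum of $P(b\mid a)$ in the row labeled $s$ with $\sum_{s'\in S}p(b,s'\mid a,s)$, and your remark that the ``nonzero'' qualifier is redundant is apt. The paper in fact states this lemma without proof, treating it as immediate from the definitions, so your approach is exactly in line with the intended reading.
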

%An output-determined stochastic automaton is a specialized observable stochastic automaton in which the
%mapping $\gamma$ is independent of the output.

\begin{proposition}
For each output-determined stochastic automaton there exists an output-determined reduced stochastic automaton.
\end{proposition}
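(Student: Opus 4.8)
The plan is to reuse the reduction construction from the proof of Theorem~\ref{t-redA} and to verify that it preserves output-determinism, so that the reduced automaton it produces is again output-determined (and, as in Theorem~\ref{t-redA}, S-equivalent to the original). Concretely, given an output-determined stochastic automaton $A=(S,\Sigma,\Omega,p_A)$ with witnessing map $\lambda_A:\Sigma\times S\rightarrow\Omega$, I would form the equivalence classes $Z=\{Z_1,\ldots,Z_r\}$ of the state-equivalence relation on $S$, fix representatives $z_i\in Z_i$, and let $B=(Z,\Sigma,\Omega,p_B)$ be the reduced automaton defined by~(\ref{e-SA-AB}). By Theorem~\ref{t-redA}, $B$ is reduced and S-equivalent to $A$, with the state $z_i$ in $A$ equivalent to the state $Z_i$ in $B$. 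It then remains only to exhibit a map $\lambda_B$ witnessing that $B$ is output-determined.

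The key observation is that output-determinism can be read off the single-symbol result vectors. By Lemma~\ref{l-Aoput} together with~(\ref{e-SA-marg0}), a state $s$ is output-determined under input $a$ precisely when $\eta_A^s(\lambda_A(a,s)\mid a)=1$, which forces $\eta_A^s(b\mid a)=0$ for every other $b\in\Omega$. First I would check that $\lambda_A(a,\cdot)$ is constant on each equivalence class: if $s$ and $t$ are equivalent, then $\eta_A^s(b\mid a)=\eta_A^t(b\mid a)$ for all $a$ and $b$, so the unique symbol carrying the full mass agrees, i.e.\ $\lambda_A(a,s)=\lambda_A(a,t)$. This makes the definition $\lambda_B(a,Z_i):=\lambda_A(a,z_i)$ independent of the chosen representative and hence a well-defined map $\Sigma\times Z\rightarrow\Omega$.

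Next I would invoke the equivalence of $z_i$ and $Z_i$ at length one. Since these states are in particular $1$-equivalent, $\eta_B^{Z_i}(b\mid a)=\eta_A^{z_i}(b\mid a)$ for all $a\in\Sigma$ and $b\in\Omega$. Taking $b=\lambda_B(a,Z_i)=\lambda_A(a,z_i)$ yields
\begin{eqnarray*}
\sum_{Z_j\in Z} p_B(\lambda_B(a,Z_i),Z_j\mid a,Z_i)
&=& \eta_B^{Z_i}(\lambda_B(a,Z_i)\mid a) \\
&=& \eta_A^{z_i}(\lambda_A(a,z_i)\mid a) = 1,
\end{eqnarray*}
which is exactly the condition of Lemma~\ref{l-Aoput} for $B$ to be output-determined with witnessing map $\lambda_B$.

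I do not expect a serious obstacle, since the reduction construction preserves the single-symbol result vectors by S-equivalence, and output-determinism is a property of precisely those vectors. The only point requiring a little care is the well-definedness noted above, namely that the determined output symbol is shared by all states within a class so that $\lambda_B$ descends to the quotient; this follows at once from the fact that equivalent states have identical result vectors.
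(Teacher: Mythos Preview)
Your proposal is correct and follows essentially the same approach as the paper: apply the reduction construction of Theorem~\ref{t-redA}, observe that equivalent states share the same single-symbol result vectors and hence the same determined output symbol (so $\lambda_B$ is well-defined on classes), and then verify that $B$ is output-determined. Your write-up is in fact more explicit than the paper's, spelling out the use of $1$-equivalence between $z_i$ and $Z_i$ to transfer $\eta_A^{z_i}(\lambda_A(a,z_i)\mid a)=1$ to the corresponding identity for $B$.
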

\begin{proof}
Construct for a given output-determined stochastic automaton $A$ a reduced stochastic automaton $B$ by
the powerset method described in the proof of Thm.~\ref{t-redA}.

Using the notation in the proof, two states $s$ and $t$ in $A$ are equivalent if and only if they lie an 
a state $Z_i$ of $B$.
In particular,
$s\cdot \eta_A(b\mid a) = t\cdot \eta_A(b\mid a)$.
If this value is nonzero, then the output is $b=\lambda_A(a,s)=\lambda_A(a,t)$, since $A$ is output-determined.
Therefore, the mapping $\lambda_B:\Sigma\times Z\rightarrow Z$ given by $\lambda_B(a,Z_i)=b$ if
$\lambda_A(a,s)=b$ for some $s\in Z_i$ is well-defined.
Moreover, $\sum_{Z_j\in Z}p_B(\lambda(a,Z_i),Z_j\mid a,Z_i) = 1$ for all $a\in\Sigma$ and $Z_i\in Z$.
Hence, $B$ is also output-determined.
\end{proof}

\begin{example}
Consider the stochastic automaton $A=(\{s_1,s_2\},\{a\},\{b,c\},p)$  %in Fig.~\ref{f-Aoutput} 
with $\lambda(a,s_1)=b$, $\lambda(a,s_2)=c$, and
$$
P(b\mid a) = \begin{pmatrix}
\frac{1}{3} & \frac{2}{3} \\
0 & 0\\
\end{pmatrix}
\quad\mbox{and}\quad
P(c\mid a) = \begin{pmatrix}
0 & 0 \\
\frac{1}{2} & \frac{1}{2} \\
\end{pmatrix}.
$$
This automaton is output-determined and strongly reduced by Prop.~\ref{p-strongr}, 
since the $H$-matrix has full rank,
$$H_A = \begin{pmatrix}
1 & 1 \\
1 & 0 \\
\end{pmatrix}.
$$
\EXX
\end{example}
%\begin{figure}[hbt]
%\begin{center}
%\mbox{$
%\xymatrix{
%*++[o][F-]{s_1} \ar@(ul,dl)[]_{a,b:1/3} \ar@/^/[rr]^{a,b:2/3} \\
%%*++[o][F-]{s_1} \ar@/^/[rr]^{a,b:1/3;\;a,c:2/3} 
%&& *++[o][F-]{s_2} \ar@(ur,dr)[]^{a,c:1/3} \ar@/^/[ll]^{a,c:1/3} \\
%%&& *++[o][F-]{s_2} \ar@/^/[ll]^{a,b:2/3;\;a,c:1/3} 
% *++[o][F-]{s_1} \ar@{->}[rrd]_{a,b:1/2;\; a,c:1/2} \\
%&& *++[o][F-]{s_2} \ar@(ur,dr)[]^{a,b:1/3;\; a,c:2/3}\\
%*++[o][F-]{s_3} \ar@{->}[urr]^{a,b:5/12;\;a,c:7/12}\\
%}
%$}
%\end{center}
%\caption{State diagram of output-determined stochastic automaton $A$.}\label{f-Aoutput}
%\end{figure}

A stochastic automaton $A$ is called {\em determined\/}\index{stochastic automaton!determined}
if there are mappings $\delta:\Sigma\times S\rightarrow S$ and $\lambda:\Sigma\times S\rightarrow \Omega$ 
such that for all $a\in\Sigma$ and $s\in S$,
\begin{eqnarray}
p(\lambda(x,s),\delta(a,s)\mid a,s) = 1.
\end{eqnarray}
If a determined stochastic automaton $A$ is in state $s$ and $a$ is the input, then
$A$ enters state $\delta(a,s)$ and outputs $\lambda(a,s)$ with certainty.
\begin{lemma}\label{l-Ad}
A stochastic automaton $A$ is determined if and only if for all $a\in\Sigma$ and $b\in \Omega$,
the matrix $P(b\mid a)$ has in each row at most one entry~$1$ and all other entries are~$0$.
\end{lemma}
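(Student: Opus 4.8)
The plan is to prove both directions by writing the matrix entries explicitly as $p_{ij}(b\mid a) = p(b,s_j\mid a,s_i)$ and combining the prescribed $0/1$ structure of the rows with the normalization~(\ref{e-SA-marg0}). Throughout I would fix $S=\{s_1,\ldots,s_n\}$, so that a row index $i$ corresponds to the current state $s_i$ and a column index $j$ to a possible successor $s_j$.

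For the forward direction I would assume $A$ is determined, with witnessing maps $\delta:\Sigma\times S\rightarrow S$ and $\lambda:\Sigma\times S\rightarrow\Omega$. Fixing an input $a\in\Sigma$ and a state $s_i$, set $b=\lambda(a,s_i)$ and $s_j=\delta(a,s_i)$, so that $p(b,s_j\mid a,s_i)=1$; that is, the $(i,j)$ entry of $P(b\mid a)$ equals~$1$. Since by~(\ref{e-SA-marg0}) we have $\sum_{b'\in\Omega}\sum_{s'\in S}p(b',s'\mid a,s_i)=1$ and the single summand $p(b,s_j\mid a,s_i)$ already exhausts this total, every remaining summand vanishes; in particular every other entry of row~$i$ in every matrix $P(b'\mid a)$ is~$0$. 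Thus in each $P(b\mid a)$ the row~$i$ is a unit row when $b=\lambda(a,s_i)$ and all zeros otherwise, which is exactly the claimed condition.

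For the converse I would assume each row of each $P(b\mid a)$ has at most one entry~$1$ and all remaining entries~$0$, so that every entry $p_{ij}(b\mid a)$ lies in $\{0,1\}$. The decisive step is to upgrade this to a \emph{unique} nonzero entry taken across all output symbols at once: fixing $a$ and $s_i$, the normalization $\sum_{b\in\Omega}\sum_{j}p_{ij}(b\mid a)=1$ together with the fact that each summand is $0$ or~$1$ forces exactly one pair $(b,j)$ with $p_{ij}(b\mid a)=1$, all others being $0$. I then define $\lambda(a,s_i)=b$ and $\delta(a,s_i)=s_j$; these are well defined by uniqueness, and by construction $p(\lambda(a,s_i),\delta(a,s_i)\mid a,s_i)=1$, so $A$ is determined.

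The only genuine obstacle is the well-definedness in the converse: I must rule out two distinct matrices $P(b_1\mid a)$ and $P(b_2\mid a)$ each carrying a~$1$ in the same row~$i$, which would give that row a total mass $\geq 2$ and leave $\delta$ and $\lambda$ ambiguous. This is precisely where the normalization~(\ref{e-SA-marg0}) does the work, and it is the same mechanism already exploited for Lemmas~\ref{l-As} and~\ref{l-Aoput}; the remaining bookkeeping is routine.
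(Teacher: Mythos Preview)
Your proof is correct. The paper states this lemma without proof (as it does for Lemmas~\ref{l-As} and~\ref{l-Aoput}), treating it as an immediate consequence of the definition together with the normalization~(\ref{e-SA-marg0}); your argument supplies exactly those routine details and is the natural way to fill them in.
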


\begin{proposition}
A stochastic automaton $A$ is determined if and only if $A$ is observable and output-determined.
\end{proposition}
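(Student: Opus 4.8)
The plan is to prove both implications by translating each of the three properties into the equivalent row conditions on the substochastic matrices $P(b\mid a)$ furnished by Lemmas~\ref{l-Ad},~\ref{l-Ao}, and~\ref{l-Aoput}, and then comparing these conditions directly. Throughout I would lean on the fact that each $p(\cdot,\cdot\mid a,s)$ is a genuine probability distribution on $\Omega\times S$, i.e.\ on the normalization~(\ref{e-SA-marg0}).

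First, for the forward implication, I would assume $A$ is determined, so by definition there are mappings $\delta$ and $\lambda$ with $p(\lambda(a,s),\delta(a,s)\mid a,s)=1$ for all $a\in\Sigma$ and $s\in S$. Since $p(\cdot,\cdot\mid a,s)$ sums to $1$ by~(\ref{e-SA-marg0}), a single value equal to $1$ forces every other entry $p(b,s'\mid a,s)$ with $(b,s')\neq(\lambda(a,s),\delta(a,s))$ to vanish. Reading this row by row: for $b=\lambda(a,s)$ the row $s$ of $P(b\mid a)$ has a single $1$ in column $\delta(a,s)$ and is otherwise zero, while for $b\neq\lambda(a,s)$ the row $s$ of $P(b\mid a)$ is identically zero. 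Hence each row of each $P(b\mid a)$ has at most one nonzero entry, so $A$ is observable by Lemma~\ref{l-Ao} (with $\gamma(b,a,s)=\delta(a,s)$ whenever that entry is nonzero); and the row $s$ of $P(\lambda(a,s)\mid a)$ sums to $1$, so $A$ is output-determined by Lemma~\ref{l-Aoput} with the same $\lambda$.

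Second, for the converse, I would assume $A$ is observable and output-determined, with output map $\lambda$. Fixing $a\in\Sigma$ and $s\in S$, output-determinedness gives that the row $s$ of $P(\lambda(a,s)\mid a)$ is nonnegative and sums to $1$, while observability via Lemma~\ref{l-Ao} gives that this same row has at most one nonzero entry. A nonnegative row with at most one nonzero entry whose entries sum to $1$ must consist of a single $1$ with all other entries $0$; I would then define $\delta(a,s)$ to be the column index of that entry. This yields $p(\lambda(a,s),\delta(a,s)\mid a,s)=1$ for all $a$ and $s$, which is exactly the definition of $A$ being determined.

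The argument is essentially a bookkeeping exercise in reading the normalization row by row, so there is no deep obstacle. The one point to handle with care is the short observation in the converse—that a single nonnegative entry in a row summing to $1$ is forced to equal $1$—since this is precisely where observability and output-determinedness combine to upgrade ``$A$ transits and emits with total probability $1$'' into ``$A$ transits and emits a specific pair with probability~$1$.'' I would likewise make sure to invoke~(\ref{e-SA-marg0}) explicitly in the forward direction, so that the vanishing of all competing transitions is justified rather than merely assumed.
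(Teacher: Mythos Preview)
Your proposal is correct and follows the same approach as the paper: both reduce the equivalence to comparing the row-level characterizations of the matrices $P(b\mid a)$ given by Lemmas~\ref{l-Ad},~\ref{l-Ao}, and~\ref{l-Aoput}. The paper's proof is a one-line appeal to those lemmas, whereas you spell out the bookkeeping explicitly, including the normalization~(\ref{e-SA-marg0}) and the key observation that a nonnegative row with at most one nonzero entry summing to~$1$ forces that entry to equal~$1$.
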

\begin{proof}
The matrices $P(b\mid a)$ of a determined stochastic automaton have by Lemma~\ref{l-Ad} the same form as the
matrices of an observable and output-deter\-mined stochastic automaton by Lemmata~\ref{l-Ao} and~\ref{l-Aoput}.
\end{proof}

\begin{proposition}
A determined stochastic automaton $A$ is reduced if and only if\/ $A$ is minimal.
\end{proposition}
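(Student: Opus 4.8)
The plan is to prove the two implications separately, noting that only one of them actually requires the determinism hypothesis.

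For the implication ``$A$ minimal $\Rightarrow$ $A$ reduced'' I would observe that this holds for \emph{every} stochastic automaton. Indeed, a single state $s_j$ is itself a state distribution (the $j$-th unit vector), so if two distinct states $s_i$ and $s_j$ were equivalent, then $s_i$ would be equivalent to the state distribution $s_j\neq s_i$, contradicting minimality. Hence a minimal automaton has no two distinct equivalent states, i.e.\ it is reduced; determinism is not needed here.

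For the converse ``$A$ reduced $\Rightarrow$ $A$ minimal'' determinism is essential, and the key preliminary step is to describe the result vectors of a determined automaton explicitly. By Lemma~\ref{l-Ad}, each matrix $P(b\mid a)$ is a $0$--$1$ matrix with at most one $1$ per row, so reading an input word $x\in\Sigma^*$ from a fixed state produces a single output word with certainty. Consequently, by~(\ref{e-SA-prob2}) and~(\ref{e-etaP}), every result vector $\eta(y\mid x)$ has entries in $\{0,1\}$, and for each fixed $x$ the coordinate $\eta_i(y\mid x)$ equals $1$ for exactly one output word $y=y(x,i)$, namely the word emitted when $x$ is read from $s_i$. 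I would record this indicator-vector description as the workhorse of the argument.

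Then I would argue by contraposition. Suppose $A$ is not minimal, so some state $s_i$ is equivalent to a state distribution $\pi=(\pi_1,\ldots,\pi_n)\neq s_i$, giving $\eta_i(y\mid x)=\sum_j \pi_j\,\eta_j(y\mid x)$ for all $x,y$. Fixing $x$ and choosing $y=y(x,i)$ yields $1=\sum_{j:\,y(x,j)=y(x,i)}\pi_j$; since the $\pi_j$ are nonnegative and sum to $1$, this forces $\pi_j=0$ whenever $y(x,j)\neq y(x,i)$. Letting $x$ range over $\Sigma^*$, every index $j$ with $\pi_j>0$ satisfies $y(x,j)=y(x,i)$ for all $x$, hence $\eta^{s_j}=\eta^{s_i}$, i.e.\ $s_j\equiv s_i$. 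Because $\pi\neq s_i$ its support cannot be contained in $\{i\}$ (otherwise $\pi$ would be the unit vector $s_i$), so it contains some $j\neq i$, producing two distinct equivalent states and contradicting reducedness. I expect the only delicate point to be this final bookkeeping on the support of $\pi$; everything else follows mechanically once determinism has been used to make the result vectors the indicator vectors of the single emitted output word.
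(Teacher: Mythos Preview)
Your proof is correct and follows essentially the same approach as the paper: the trivial direction (minimal $\Rightarrow$ reduced) is noted to hold for arbitrary stochastic automata, and for the converse you exploit that in a determined automaton each $\eta_i(y\mid x)\in\{0,1\}$, so equivalence of a state $s_i$ with a distribution $\pi$ forces every state in the support of $\pi$ to be equivalent to $s_i$, contradicting reducedness. Your bookkeeping on the support of $\pi$ is in fact a bit more explicit than the paper's version.
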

\begin{proof}
Let $A$ be determined and reduced and let $S=\{s_1,\ldots,s_n\}$.
Suppose the state $s\in S$ and the state distribution $\pi$ are equivalent.
Since $A$ is determined,
for each input $x\in\Sigma^*$ there is exactly one state $s'\in S$ and one output $y\in\Omega^*$ 
such that $p(y,s'\mid x,s)=1$.
Moreover, since $s$ and $\pi$ are equivalent,
$1=p(y,s'\mid x,s)=s\cdot \eta(y\mid x) = \pi\cdot\eta(y\mid x)$.
But $\pi\cdot\eta(y\mid x) = 1$ only holds if all components $\eta_i(y\mid x)$ are equal to~$1$ if $\pi_i\ne 0$.
Thus $\eta_i(y\mid x) = s_i\cdot \eta(y\mid x) = 1 = s\cdot\eta(y\mid x)$ and hence $s_i$ and $s$ are equivalent.
However, $A$ is reduced and so $s$ and $s'$ cannot be equivalent.
Hence, $A$ is minimal.

Conversely, each minimal stochastic automaton is also reduced.
\end{proof}

\begin{example}
Take the stochastic automaton $A=(\{s_1,s_2,s_3\},\{a\},\{b,c\},p)$  %in Fig.~\ref{f-Aoutput} 
with %$\lambda(a,s_1)=b$, $\lambda(a,s_2)=c$, and
$$
P(b\mid a) = \begin{pmatrix}
1 & 0 & 0 \\
0 & 0 & 0 \\
\frac{1}{2} & 0 & 0 \\
\end{pmatrix}
\quad\mbox{and}\quad
P(c\mid a) = \begin{pmatrix}
0 & 0 & 0\\
0 & 1 & 0\\
0 & \frac{1}{2} & 0 \\
\end{pmatrix}.
$$
This automaton is observable by Lemma~\ref{l-Ao}.
The associated reduced stochastic automaton is $B =(\{t_1,t_2\},\{a\},\{b,c\},p)$  %in Fig.~\ref{f-Aoutput} 
with
$$
P(b\mid a) = \begin{pmatrix}
1 & 0  \\
0 & 0  \\
\end{pmatrix}
\quad\mbox{and}\quad
P(c\mid a) = \begin{pmatrix}
0 & 0 \\
0 & 1 \\
\end{pmatrix}.
$$
Moreover, $B$ is determined and therefore minimal.
\EXX
\end{example}

%---------------------------------------------------------------------------------
\chapter{Stochastic Mealy and Moore Automata}

Stochastic versions of the well-known Mealy and Moore automata are presented and
the generality of stochastic Moore automata is shown.

A stochastic automaton $A = (S,\Sigma,\Omega,p)$ is called 
{\em stochastic Mealy automaton}\index{stochastic Mealy automaton}
if there are conditional probabilities $p_1(\cdot\mid a,s)$ and $p_2(\cdot\mid a,s)$ over $\Omega$ and $S$, 
respectively, such that for all $a\in\Sigma$, $b\in\Omega$ and $s,s'\in S$,
\begin{eqnarray}
p(b,s'\mid a,s) = p_1(b\mid a,s)\cdot p_2(s'\mid a,s).
\end{eqnarray}
%In a Mealy automaton the state transition and the output emission are independent processes.
Stochastic independence of state transition and output emission makes stochastic Mealy automata a 
restricted class of stochastic automata. 

\begin{proposition}
Each state-determined stochastic automaton $A$ is a stochastic Mealy automaton.
Each output-determined stochastic automaton $A$ is a stochastic Mealy automaton.
\end{proposition}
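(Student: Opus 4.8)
The plan is to exhibit, in each case, the required factorization $p(b,s'\mid a,s) = p_1(b\mid a,s)\cdot p_2(s'\mid a,s)$ by taking one of the two factors to be a deterministic point mass and the other to be the appropriate marginal of $p$. The single nontrivial observation, shared by both cases, is that determinism of one coordinate forces the joint distribution to vanish whenever that coordinate deviates from its prescribed value; once this is established, the factorization is immediate, and the remaining work is only a routine verification that the proposed factors are genuine conditional distributions.

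For the \emph{state-determined} case, I would fix $a\in\Sigma$ and $s\in S$ and first argue that $p(b,s'\mid a,s)=0$ for every $b\in\Omega$ whenever $s'\ne\delta(a,s)$. This follows because the hypothesis $\sum_{b} p(b,\delta(a,s)\mid a,s)=1$, combined with the normalization~(\ref{e-SA-marg0}), forces $\sum_b p(b,s'\mid a,s)=0$ for each $s'\ne\delta(a,s)$; since the entries are nonnegative, every such term vanishes. I would then set $p_2(s'\mid a,s)=1$ if $s'=\delta(a,s)$ and $0$ otherwise, which is a point mass and hence a distribution on $S$, and $p_1(b\mid a,s)=\sum_{s'} p(b,s'\mid a,s)=p(b,\delta(a,s)\mid a,s)$, which is a distribution on $\Omega$ precisely by state-determinedness. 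The factorization is then checked by cases: for $s'\ne\delta(a,s)$ both sides are $0$, and for $s'=\delta(a,s)$ the right-hand side reduces to $p_1(b\mid a,s)=p(b,\delta(a,s)\mid a,s)$.

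For the \emph{output-determined} case, I would run the symmetric argument with the roles of output and next state interchanged. Here determinism of the output yields $p(b,s'\mid a,s)=0$ whenever $b\ne\lambda(a,s)$, by the same mass-balance reasoning applied to $\sum_{s'} p(\lambda(a,s),s'\mid a,s)=1$. I would take $p_1(b\mid a,s)$ to be the point mass at $\lambda(a,s)$ and $p_2(s'\mid a,s)=\sum_{b} p(b,s'\mid a,s)$ the next-state marginal, and verify the factorization by the same case analysis.

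I do not anticipate a genuine obstacle: the entire content is the vanishing-off-the-prescribed-value claim, which rests only on nonnegativity of the transition entries together with the normalization~(\ref{e-SA-marg0}). The one point requiring minor care is to confirm that the chosen factors are bona fide conditional probability distributions, that is, nonnegative and summing to one, and this is exactly what the two determinism hypotheses supply.
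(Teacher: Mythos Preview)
Your proposal is correct and follows essentially the same approach as the paper: the paper defines exactly the same factors $p_1$ and $p_2$ (point mass on the determined coordinate, marginal on the other) in both cases. Your write-up is in fact more complete, since the paper omits the vanishing-off-the-prescribed-value argument and the verification that the factors are genuine distributions, both of which you spell out.
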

\begin{proof}
Let $A$ be state-determined.
Define
$p_1(b\mid a,s) =  p(b,\delta(a,s)\mid a,s)$ and 
$p_2(s'\mid a,s) =  1$ if $\delta(a,s)=s'$ and $0$ otherwise.
Then $p(b,s'\mid a,s) = p_1(b\mid a,s) \cdot p_2(s'\mid a,s)$ as required.

Let $A$ be output-determined.
Define
$p_1(b\mid a,s) =  1$ if $b=\lambda(a,s)$ and $0$ otherwise, and
$p_2(s'\mid a,s) =  p(\lambda(a,s),s'\mid a,s)$.
Then $p(b,s'\mid a,s) = p_1(b\mid a,s) \cdot p_2(s'\mid a,s)$ as required.
\end{proof}

\begin{proposition}
A stochastic Mealy automaton $A$ is observable if and only if $A$ is state-determined.
\end{proposition}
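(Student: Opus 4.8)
The plan is to prove the two implications separately, using the row-criteria of Lemmas~\ref{l-Ao} and~\ref{l-As} together with the product form $p(b,s'\mid a,s)=p_1(b\mid a,s)\cdot p_2(s'\mid a,s)$ of a stochastic Mealy automaton. First I would dispose of the direction from state-determined to observable, which in fact requires no Mealy hypothesis. If $A$ is state-determined, then by Lemma~\ref{l-As} each row $s$ of $P(a)=\sum_{b\in\Omega}P(b\mid a)$ has a single entry~$1$, in the column $\delta(a,s)$, and all other entries of that row vanish. Since the matrices $P(b\mid a)$ are nonnegative and sum to $P(a)$, every entry $p(b,s'\mid a,s)$ with $s'\ne\delta(a,s)$ must be zero. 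Hence row $s$ of each $P(b\mid a)$ has at most one nonzero entry (at column $\delta(a,s)$), and $A$ is observable by Lemma~\ref{l-Ao}.

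For the converse I would exploit the Mealy factorization to identify $P(a)$. Summing over the output, $\sum_{b\in\Omega}p(b,s'\mid a,s)=\bigl(\sum_{b\in\Omega}p_1(b\mid a,s)\bigr)\,p_2(s'\mid a,s)=p_2(s'\mid a,s)$, because $p_1(\cdot\mid a,s)$ is a probability distribution on $\Omega$. Thus the $(s,s')$ entry of $P(a)$ equals $p_2(s'\mid a,s)$, and by Lemma~\ref{l-As} the automaton is state-determined precisely when each distribution $p_2(\cdot\mid a,s)$ is a point mass. The goal therefore reduces to showing that observability forces the state-transition factor $p_2(\cdot\mid a,s)$ to be deterministic.

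To close this, I would argue by contradiction. Suppose some $p_2(\cdot\mid a,s)$ had two distinct states $s_1'\ne s_2'$ in its support. Since $p_1(\cdot\mid a,s)$ is a probability distribution on $\Omega$, it has nonzero mass at some output $b$, and then both $p(b,s_1'\mid a,s)=p_1(b\mid a,s)\,p_2(s_1'\mid a,s)$ and $p(b,s_2'\mid a,s)=p_1(b\mid a,s)\,p_2(s_2'\mid a,s)$ are strictly positive. This gives row $s$ of $P(b\mid a)$ two nonzero entries, contradicting observability by Lemma~\ref{l-Ao}. Hence each $p_2(\cdot\mid a,s)$ is a point mass; setting $\delta(a,s)$ to be its unique support state yields $\sum_{b}p(b,\delta(a,s)\mid a,s)=p_2(\delta(a,s)\mid a,s)=1$, so $A$ is state-determined.

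The only delicate point, and the crux of the argument, is the selection of a witness output $b$ with $p_1(b\mid a,s)>0$, which is guaranteed because $p_1(\cdot\mid a,s)$ normalizes to~$1$. This is exactly where the Mealy independence of output and next state is essential: it lets a single output $b$ expose the entire support of the transition distribution $p_2(\cdot\mid a,s)$ inside one matrix $P(b\mid a)$, so that a genuinely stochastic transition would be visible as two nonzero entries in a single row. I expect no substantial computational obstacle beyond making this witness-selection step precise.
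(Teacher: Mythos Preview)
Your proof is correct and follows essentially the same approach as the paper's: both exploit the Mealy factorization $p(b,s'\mid a,s)=p_1(b\mid a,s)\,p_2(s'\mid a,s)$ to show that observability forces the transition factor $p_2(\cdot\mid a,s)$ to be a point mass, and both note (the paper only implicitly) that the converse direction needs no Mealy hypothesis. Your presentation via the row-criteria Lemmas~\ref{l-Ao} and~\ref{l-As} and the explicit witness output $b$ is more detailed than the paper's, which simply asserts that $\gamma(b,a,s)$ is independent of $b$ and leaves the converse as ``similar''; but the underlying idea is the same.
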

\begin{proof}
Let $A$ be observable.
Then for all $a\in\Sigma$, $b\in\Omega$, and $s\in S$,
$$p(b,\gamma(b,a,s)\mid a,s) = p_1(b\mid a,s)\cdot p_2(\gamma(b,a,s)\mid a,s).$$
Thus $\gamma(b,a,s)$ is independent of $b$ and therefore we put $\delta(a,s)= \gamma(b,a,s)$.
Then
\begin{eqnarray*}
\sum_{b\in \Omega} p(b,\delta(a,s)\mid a,s) 
&=& \sum_{b\in \Omega} p_1(b\mid a,s) p_2(\delta(a,s)\mid a,s)  \\
&=& \sum_{b\in \Omega} p_1(b\mid a,s)  = 1.
\end{eqnarray*}
Hence, $A$ is state-determined.
The converse is similar.
\end{proof}

\begin{proposition}
For each stochastic Mealy automaton $A$ there is a reduced stochastic Mealy automaton.
\end{proposition}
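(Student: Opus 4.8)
The plan is to reuse the powerset construction from the proof of Thm.~\ref{t-redA}, which already produces a reduced stochastic automaton $B$ that is S-equivalent to $A$; the only thing left to verify is that this $B$ inherits the Mealy factorization.

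First I would recall the construction. Letting $Z=\{Z_1,\ldots,Z_r\}$ be the equivalence classes of the states of $A$ with fixed representatives $z_i\in Z_i$, the reduced automaton $B=(Z,\Sigma,\Omega,p_B)$ is defined by
$$p_B(b,Z_j\mid a,Z_i) = \sum_{z\in Z_j} p_A(b,z\mid a,z_i).$$
By Thm.~\ref{t-redA} this $B$ is reduced and S-equivalent to $A$, so it remains only to exhibit a Mealy factorization of $p_B$.

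Since $A$ is a stochastic Mealy automaton, there are conditional probabilities $p_{1,A}(\cdot\mid a,s)$ on $\Omega$ and $p_{2,A}(\cdot\mid a,s)$ on $S$ with $p_A(b,z\mid a,s)=p_{1,A}(b\mid a,s)\cdot p_{2,A}(z\mid a,s)$. Substituting this into the definition of $p_B$ and noting that the factor $p_{1,A}(b\mid a,z_i)$ does not depend on the summation variable $z$, I would pull it out of the sum:
$$p_B(b,Z_j\mid a,Z_i) = \sum_{z\in Z_j} p_{1,A}(b\mid a,z_i)\cdot p_{2,A}(z\mid a,z_i) = p_{1,A}(b\mid a,z_i)\cdot \sum_{z\in Z_j} p_{2,A}(z\mid a,z_i).$$
This is the key step; the whole argument reduces to the observation that the output factor is constant across each equivalence class, so it separates from the state-transition sum.

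It then remains to set $p_{1,B}(b\mid a,Z_i)=p_{1,A}(b\mid a,z_i)$ and $p_{2,B}(Z_j\mid a,Z_i)=\sum_{z\in Z_j} p_{2,A}(z\mid a,z_i)$ and to check these are conditional probability distributions: $\sum_{b\in\Omega} p_{1,B}(b\mid a,Z_i)=\sum_{b\in\Omega} p_{1,A}(b\mid a,z_i)=1$, and since the classes $Z_j$ partition $S$, $\sum_{j} p_{2,B}(Z_j\mid a,Z_i)=\sum_{z\in S} p_{2,A}(z\mid a,z_i)=1$. With $p_B(b,Z_j\mid a,Z_i)=p_{1,B}(b\mid a,Z_i)\cdot p_{2,B}(Z_j\mid a,Z_i)$, the automaton $B$ is a reduced stochastic Mealy automaton (S-equivalent to $A$), as desired. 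I do not anticipate a genuine obstacle: the only care needed is the bookkeeping that the chosen factorization of $A$ is read off at the representatives $z_i$, and that the partition property is exactly what makes the second factor a distribution.
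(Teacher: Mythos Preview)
Your proof is correct and follows essentially the same route as the paper: apply the powerset construction of Thm.~\ref{t-redA} and then check that the Mealy factorization survives by pulling the output factor $p_{1,A}(b\mid a,z_i)$ out of the sum over $z\in Z_j$. Your version is slightly more explicit in verifying that $p_{1,B}$ and $p_{2,B}$ are genuine conditional distributions, while the paper additionally remarks (without proof) that these values are independent of the choice of representative $z_i$; since the representatives are fixed in the construction, your omission of that remark does not affect correctness.
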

\begin{proof}
Construct for a given stochastic Mealy automaton $A$ a reduced stochastic automaton $B$ by
the powerset method described in the proof of Thm.~\ref{t-redA}.

Using the notation in the proof, 
we obtain for all $a\in\Sigma$, $b\in\Omega$, and $1\leq i,j\leq r$,
\begin{eqnarray*}
p_B(b,Z_j\mid a, Z_i) 
&=& \sum_{z\in Z_j}p_A(b,z,\mid a,z_i)\\
&=& \sum_{z\in Z_j}p_{1,A}(b\mid a,z_i) p_{2,A}(z\mid a,z_i).
\end{eqnarray*}
Define 
$$p_{1,B}(b\mid y,Z_i) = p_{1,A}(b\mid a,z_i)$$
and 
$$p_{2,B}(Z_j\mid y,Z_i) = \sum_{z\in Z_j} p_{2,A}(z\mid a,z_i).$$
These values are independent of the chosen representative state $z_i\in Z_i$.
Thus
$$p_B(b,Z_j\mid a, Z_i) = p_{1,B}(b\mid y,Z_i)\cdot p_{2,B}(Z_j\mid y,Z_i)$$
and hence $B$ is a stochastic Mealy automaton.
\end{proof}

\begin{example}
Take the stochastic Mealy automaton
$A=(\{s_1,s_2\},\{a\},\{b,c\},p)$ with probabilities
$$
\begin{array}{c|cccc}
& p_1(b\mid a,\cdot ) & p_1(c\mid a,\cdot )   & p_2(s_1\mid a,\cdot ) & p_2(s_2\mid a,\cdot )   \\\hline
s_1 & \frac{1}{2} & \frac{1}{2} & \frac{1}{3} & \frac{2}{3} \\
s_2 & \frac{1}{4} & \frac{3}{4} & \frac{1}{5} & \frac{4}{5} \\
\end{array}
$$
Then we have
$$
P(b\mid a) = \begin{pmatrix}
\frac{1}{6} & \frac{1}{3}  \\
\frac{1}{20} & \frac{1}{5}  \\
\end{pmatrix}
\quad\mbox{and}\quad
P(c\mid a) = \begin{pmatrix}
\frac{1}{6} & \frac{1}{3} \\
\frac{3}{20} & \frac{3}{5} \\
\end{pmatrix}.
$$
Therefore,
$$
P(a) = \begin{pmatrix}
\frac{1}{3} & \frac{2}{3}  \\
\frac{1}{5} & \frac{4}{5}  \\
\end{pmatrix}.
$$
\EXX
\end{example}

A stochastic automaton $A = (S,\Sigma,\Omega,p)$ is called 
{\em stochastic Moore automaton\/}\index{stochastic Moore automaton}
if there are conditional probabilities $\mu(\cdot\mid s)$ and $p'(\cdot\mid a,s)$ over $\Omega$ and $S$, 
respectively, such that for all $a\in\Sigma$, $b\in\Omega$ and $s,s'\in S$,
\begin{eqnarray}
p(b,s'\mid a,s) = \mu(b\mid s')\cdot p'(s'\mid a,s).
\end{eqnarray}
In a Moore automaton the output emission depends on the state transition.
%Note that the conditional probabilities $p_1$, $p_2$, $\mu$ and $p'$ are generally not uniquely determined.
%The basic difference between stochastic Mealy and stochastic Moore automata lies in the conditional output probabilities
%$p_1(\cdot\mid a,s)$ and $\mu(\cdot\mu 
%Stochastic Moore automata are the most general type of stochastic automata, while stochastic Mealy automata are not,
%since the transition probabilities for successor state and output are independent of each other given input and state. 
\begin{theorem}
For each stochastic automaton $A = (S_A,\Sigma,\Omega,p_A)$ 
there is a stochastic Moore automaton $B = (S_B,\Sigma,\Omega,p_B)$ such that $A$ and $B$ are S-equivalent.
\end{theorem}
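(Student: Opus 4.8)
The plan is to encode the most recently emitted output symbol into the state, so that the output becomes a function of the state just entered --- which is exactly the defining property of a stochastic Moore automaton. Concretely, I would set $S_B = S_A\times\Omega$, writing a typical state as a pair $(s,b)$, and define the two conditional distributions by
$$\mu_B(c\mid (s',b')) = \left\{\begin{array}{ll} 1 & \mbox{if } c=b',\\ 0 & \mbox{otherwise,}\end{array}\right. \qquad p'_B((s',b')\mid a,(s,b)) = p_A(b',s'\mid a,s).$$
The first factor makes the entered state $(s',b')$ emit its own tag $b'$ with certainty, and the second factor copies the joint output-and-transition law of $A$ into the pure transition law of $B$. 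By construction the product $p_B(c,(s',b')\mid a,(s,b)) = \mu_B(c\mid (s',b'))\cdot p'_B((s',b')\mid a,(s,b))$ has the Moore form.

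First I would verify that $B$ is a bona fide stochastic Moore automaton. Summing $p'_B$ over $(s',b')\in S_B$ gives $\sum_{s',b'} p_A(b',s'\mid a,s)=1$ by~(\ref{e-SA-marg0}), so $p'_B(\cdot\mid a,(s,b))$ is a distribution on $S_B$; and $\sum_{c\in\Omega}\mu_B(c\mid(s',b'))=1$ trivially, so $\mu_B(\cdot\mid(s',b'))$ is a distribution on $\Omega$. Hence $p_B$ is a valid family of conditional distributions of Moore type.

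The core step is to prove that for every $s\in S_A$ and every tag $b\in\Omega$ the states $(s,b)$ and $s$ have identical behavior, i.e.\ $\eta_B^{(s,b)}(y\mid x)=\eta_A^{s}(y\mid x)$ for all $x\in\Sigma^*$, $y\in\Omega^*$ with $|x|=|y|$. I would argue by induction on $|x|=|y|$. The base case $x=y=\epsilon$ gives $1$ on both sides. For the inductive step, using Prop.~\ref{p-omega} componentwise in $B$ and then the induction hypothesis,
\begin{eqnarray*}
\eta_B^{(s,b)}(cy\mid ax) &=& \sum_{(s',b')\in S_B} p_B(c,(s',b')\mid a,(s,b))\,\eta_B^{(s',b')}(y\mid x)\\
&=& \sum_{s'\in S_A}\sum_{b'\in\Omega} \mu_B(c\mid(s',b'))\, p_A(b',s'\mid a,s)\,\eta_A^{s'}(y\mid x)\\
&=& \sum_{s'\in S_A} p_A(c,s'\mid a,s)\,\eta_A^{s'}(y\mid x)\\
&=& \eta_A^{s}(cy\mid ax),
\end{eqnarray*}
where the third equality uses that $\mu_B(c\mid(s',b'))$ annihilates every term except $b'=c$, and the last equality is again Prop.~\ref{p-omega} applied to $A$.

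Finally, S-equivalence follows at once: every $\eta_A^{s}\in\cS_A$ equals $\eta_B^{(s,b)}$ for an arbitrary tag $b$, so $\cS_A\subseteq\cS_B$; and every $\eta_B^{(s,b)}\in\cS_B$ equals $\eta_A^{s}$, so $\cS_B\subseteq\cS_A$. Hence $\cS_A=\cS_B$ and $A$ and $B$ are S-equivalent. The only delicate point is conceptual rather than computational: the tag $b$ in a state $(s,b)$ records a past output and is never consulted by future transitions, so all states $(s,b)$ with $s$ fixed are mutually equivalent --- this is precisely why the arbitrary choice of an initial tag is harmless and why the construction respects S-equivalence.
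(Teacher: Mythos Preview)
Your proof is correct and uses the same construction as the paper: $S_B=\Omega\times S_A$ (you write $S_A\times\Omega$, which is immaterial), with $\mu$ reading off the tag and $p'$ carrying $p_A$. The only difference is in how S-equivalence is verified. The paper packages the argument by observing that the projection $\phi:(b,s)\mapsto s$ is an S-epimorphism from $B$ onto $A$ and then invoking Prop.~\ref{p-S-hom}; you instead prove $\eta_B^{(s,b)}=\eta_A^{s}$ directly by induction on word length. These are really the same computation --- the induction you carry out is precisely what underlies Prop.~\ref{p-Shomxy} and hence Prop.~\ref{p-S-hom} --- so the paper's route is slightly more economical once that machinery is in place, while yours is more self-contained.
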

\begin{proof}
By Prop.~\ref{p-S-hom}, it is sufficient to construct an S-epimorphism $\phi:B\rightarrow A$.
For this, put $S_B=\Omega\times S_A$ and
$$
p_B(b_2,(b_1,s_1)\mid a,(b_0,s_0)) 
=\left\{ \begin{array}{ll}
p_A(b_1,s_1\mid a,s_0) & \mbox{if } b_1=b_2,\\
0 & \mbox{otherwise}
\end{array} \right.
$$
for all $a\in\Sigma$, $b_2\in\Omega$, and $(b_1,s_1),(b_0,s_0)\in S_B$.

Moreover, for all $a\in\Sigma$ and $(b_1,s_1),(b_0,s_0)\in S_B$ put
$$p'((b_1,s_1)\mid a,(b_0,s_0)) = p_A(b_1,s_1\mid a,s_0)$$ 
and
$$
\mu(b_2\mid (b_1,s_1)) =\left\{ \begin{array}{ll}
1 & \mbox{if } b_1=b_2,\\
0 & \mbox{otherwise.}
\end{array} \right.
$$
Then we have for all $a\in\Sigma$, $b_2\in\Omega$, and $(b_1,s_1),(b_0,s_0)\in S_B$,
$$ p_B(b_2,(b_1,s_1)\mid a,(b_0,s_0)) 
= \mu(b_2\mid (b_1,s_1))\cdot p'((b_1,s_1)\mid a,(b_0,s_0)).$$
It is clear that $B$ is a stochastic Moore automaton.
In particular, $\mu$ is a mapping, i.e., the successor state determines uniquely the output symbol.

Finally, define the mapping $\phi:S_B\rightarrow S_A$ by $\phi((b,s))  = s$  for all $(b,s)\in S_B$.
Then we have
\begin{eqnarray*}
p_B(b_2,\phi^{-1}\phi(b_1,s_1)\mid a,(b_0,s_0)) 
&=& \sum_{b\in \Omega} p_B(b_2,(b,s_1)\mid a,(b_0,s_0))\\
&=& p_B(b_2, (b_2,s_1)\mid a,(b_0,s_0))\\
&=& p_A(b_2,s_1\mid a,s_0)\\
&=& p_A(b_2,\phi(b_1,s_1) \mid a,\phi(b_0,s_0)).
\end{eqnarray*}
Hence, $\phi$ is an S-epimorphism.
\end{proof}
%The construction in the proof shows that for each stochastic automaton $A$
%there is an S-equivalent stochastic Moore automaton $B$ with conditional probabilities
%$p_B(\mu(s'),s'\mid a,s)$ for all $a\in\Sigma_B$ and $s,s'\in S_B$, where $\mu:S_B\rightarrow \Omega_B$ is a mapping. 

The above result shows that stochastic Moore automata already provide the most general type of stochastic automata.  However, this is not true for stochastic Mealy automata, since the transition probabilities for successor state and output are independent of each other given input and state.

\chapter{Stochastic Acceptors}
Stochastic acceptors are a generalization of the nondeterministic finite automata.
The languages recognized by stochastic acceptors are called stochastic languages.
We will see that the class of stochastic languages is uncountable and includes the regular languages.
For this, we assume familiarity with the basic concepts of regular languages as well as deterministic 
and nondeterministic finite acceptors~\cite{salomaa}.

A {\em stochastic acceptor\/}\index{stochastic automaton!acceptor}\index{stochastic acceptor}
is a quintuple $A = (S,\Sigma,P, \pi,f)$, where
\begin{itemize}
\item $S$ is a nonempty finite set of {\em states},
\item $\Sigma$ is an alphabet of {\em input symbols},
\item $P$ is a collection $\{P(a)\mid a\in\Sigma\}$ of stochastic $n\times n$ matrices, 
where $n$ is the number of states,
\item $\pi$ is an {\em initial distribution}\index{initial distribution} of the states written as row vector,
\item $f$ is a binary column vector of length $n$ called {\em final state vector}\index{final state vector}.
\end{itemize}
Note that if the state set is $S=\{s_1,\ldots,s_n\}$ and the final state vector is $f=(f_1,\ldots,f_n)^T$, then 
$F = \{s_i\mid f_i=1\}$ is the {\em final state set}\index{final state set}.

The matrices $P(a)=(p_{ij}(a))$ with $a\in\Sigma$ are transition probability matrices, 
where the $(i,j)$th entry $p_{ij}(a) = p(s_j\mid a,s_i)$ is the conditional probability of transition from state $s_i$ to state $s_j$ when the symbol $a$ is read, 
$1\leq i,j\leq n$.
A state change from one state to another must take place with probability~1.
Thus for each symbol $a\in \Sigma$ and each state $s_i\in S$, we have
\begin{eqnarray}
\sum_{j=1}^n p(s_j\mid a,s_i) = 1.
\end{eqnarray}

Given the conditional probability distribution $p(\cdot\mid a,s)$ on $S$,  
we define a conditional probability distribution $\hat p$ recursively as follows:
\begin{itemize}
\item For all states $s,s'\in S$,
\begin{eqnarray}\label{e-SAA-phat1}
\hat p (s'\mid \epsilon,s) = \left\{ \begin{array}{ll} 1 & \mbox{if } s=s',\\ 0 & \mbox{if } s\ne s',  \end{array} \right.
\end{eqnarray}
where $\epsilon$ denotes the empty word in $\Sigma^*$.
%\item For all $s,s'\in S$, $x\in\Sigma^*$, and $y\in\Omega^*$ with $|x|\ne |y|$,
%\begin{eqnarray}\label{e-SA-phat2}
%\hat p(y,s'\mid x,s) =  0.
%\end{eqnarray}
\item For all $s'\in S$, $a\in \Sigma$, and $x\in\Sigma^*$,
\begin{eqnarray}\label{e-SAA-phat3}
\hat p (s'\mid ax,s) = \sum_{t\in S} p(t\mid a,s)\cdot \hat p(s'\mid x,t).
\end{eqnarray}
\end{itemize}
Then $\hat p(\cdot\mid x,s)$ is a conditional probability distribution on $S$ and so we have
\begin{eqnarray}\label{e-SAA-phat4}
\sum_{s'\in S} \hat p(s'\mid x,s) =1,\quad x\in\Sigma^*, \,s\in S.
\end{eqnarray}
Note that the condititional probability distributions $p(\cdot\mid a,s)$ and $\hat p(\cdot\mid a,s)$ coincide 
on $\Sigma\times S$ and therefore we write $p$ instead of $\hat p$.

A stochastic acceptor works serially and synchronously.
It reads an input word symbol by symbol and after reading an input symbol it transits from one state into another.
In particular, if the automaton starts in state $s$ and reads the word $x$, then
with probability $p(s'\mid x,s)$ it will end in state $s'$ taking all intermediate states into account.

\begin{proposition}\label{p-SAA-prob0}
For all $x,x'\in\Sigma^*$, and $s,s'\in S$,
$$p(s'\mid xx',s) = \sum_{t\in S} p(t\mid x,s) \cdot p(s'\mid x',t).$$ 
\end{proposition}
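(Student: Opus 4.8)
The plan is to prove the identity by induction on the length of the first word $x$, exploiting the fact that the recursive definition~(\ref{e-SAA-phat3}) peels symbols off the \emph{front} of the input word. This is the natural induction variable here, in contrast to the analogous Prop.~\ref{p-SA-prob0} for full stochastic automata, whose defining recursion acts at the back of the word; the two proofs are structurally parallel but the roles of the two factors are interchanged. For the base case $x=\epsilon$, the left-hand side is $p(s'\mid \epsilon x',s)=p(s'\mid x',s)$, while the right-hand side is $\sum_{t\in S} p(t\mid \epsilon,s)\,p(s'\mid x',t)$, which by~(\ref{e-SAA-phat1}) collapses to the single term $t=s$ and equals $p(s'\mid x',s)$. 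So the two sides agree.

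For the inductive step, I would assume the assertion holds for all words of length $k$ (for all $x'$, $s$, and $s'$) and write a word of length $k+1$ as $x=aw$ with $a\in\Sigma$ and $|w|=k$. Since $xx'=a(wx')$, the recursion~(\ref{e-SAA-phat3}) gives $p(s'\mid xx',s)=\sum_{t\in S} p(t\mid a,s)\,p(s'\mid wx',t)$. Applying the induction hypothesis to $w$ inside the sum rewrites $p(s'\mid wx',t)=\sum_{u\in S} p(u\mid w,t)\,p(s'\mid x',u)$, and after interchanging the order of summation one obtains $p(s'\mid xx',s)=\sum_{u\in S}\bigl(\sum_{t\in S} p(t\mid a,s)\,p(u\mid w,t)\bigr)\,p(s'\mid x',u)$. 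The inner sum is, once more by~(\ref{e-SAA-phat3}), exactly $p(u\mid aw,s)=p(u\mid x,s)$, so the whole expression equals $\sum_{u\in S} p(u\mid x,s)\,p(s'\mid x',u)$, which is the desired right-hand side after renaming $u$ to $t$.

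I do not expect any genuine obstacle: the only delicate point is bookkeeping in the double sum, namely recognizing that the \emph{same} recursion~(\ref{e-SAA-phat3}) is invoked twice, first to strip $a$ off $xx'$ and then, after reindexing, to reassemble $a$ and $w$ into $x$. The finiteness of $S$ makes the interchange of the two finite sums unproblematic, and no convergence or measure-theoretic care is needed. One could equally run the induction on $|x'|$, but that would require first establishing an auxiliary back-recursion for $\hat p$, so induction on $|x|$ via the given front-recursion is the shorter route.
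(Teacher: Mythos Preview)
Your proof is correct and complete; the induction on $|x|$ via the front-recursion~(\ref{e-SAA-phat3}) goes through exactly as you describe, and your bookkeeping in the double sum is sound.

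The paper, however, does not prove this proposition from scratch at all: it simply remarks that the result is a special case of Prop.~\ref{p-SA-prob0}. The implicit reduction is to view a stochastic acceptor as a stochastic automaton with a one-letter output alphabet $\Omega=\{b\}$, so that $p(s'\mid x,s)$ is identified with $p(b^{|x|},s'\mid x,s)$; the identity then follows immediately from the already-proved version for full automata. Your route is genuinely different in that it is self-contained and tailored to the front-recursion that defines $\hat p$ for acceptors, whereas the paper's one-line argument leans on the earlier back-recursion proof. What your approach buys is independence from the transducer machinery (useful if one wants to present acceptors without first developing stochastic automata with output); what the paper's approach buys is economy, since the work was already done in Chapter~2. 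Your observation about the interchanged roles of the two factors is exactly the content of this comparison.
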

This result is a special case of Prop.~\ref{p-SA-prob0}.

The behavior of a stochastic acceptor can be described by transition probability matrices.
To see this, let $A$ be a stochastic acceptor with state set $S=\{s_1,\ldots,s_n\}$.
For the empty word, define by~(\ref{e-SAA-phat1}), 
\begin{eqnarray}
P(\epsilon) = I_n,
\end{eqnarray}
where $I_n$ is the $n\times n$ unit matrix.
Furthermore, if $a\in \Sigma$ and $x\in\Sigma^*$, then by~(\ref{e-SAA-phat3}) we have
\begin{eqnarray}
P(ax) = P(a)\cdot P(x).
\end{eqnarray}
By Prop.~\ref{p-SAA-prob0} and the associativity of matrix multiplication, we obtain the following.
\begin{proposition}\label{p-SAA-prob1}
For all $x,x'\in\Sigma^*$, 
$$P(xx') = P(x) \cdot P(x').$$ 
\end{proposition}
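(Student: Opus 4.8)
The plan is to verify the matrix identity entrywise, reading off each side as a sum over intermediate states and matching it against Prop.~\ref{p-SAA-prob0}. First I would fix the state set $S=\{s_1,\ldots,s_n\}$ and recall that, by the way the transition matrices were set up, the $(i,j)$ entry of $P(w)$ equals $p(s_j\mid w,s_i)$ for every word $w\in\Sigma^*$. In particular, the $(i,j)$ entry of the left-hand side $P(xx')$ is $p(s_j\mid xx',s_i)$.

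Next I would apply Prop.~\ref{p-SAA-prob0} with $s=s_i$ and $s'=s_j$ to expand this entry over the intermediate state reached after reading $x$:
$$p(s_j\mid xx',s_i) = \sum_{t\in S} p(t\mid x,s_i)\cdot p(s_j\mid x',t).$$
Writing the summation state as $t=s_k$ and letting $k$ range over $1,\ldots,n$, the right-hand side is $\sum_{k=1}^n p(s_k\mid x,s_i)\, p(s_j\mid x',s_k) = \sum_{k=1}^n p_{ik}(x)\,p_{kj}(x')$, which is precisely the $(i,j)$ entry of the matrix product $P(x)\cdot P(x')$. Since $x$ and $x'$ and the indices $i,j$ were arbitrary, the matrices $P(xx')$ and $P(x)\,P(x')$ agree in every entry and are therefore equal, which is the assertion.

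I do not expect any genuine obstacle here: the whole content is the identification of the convolution sum in Prop.~\ref{p-SAA-prob0} with the definition of matrix multiplication, and the only point demanding care is the bookkeeping that turns the abstract intermediate state $t\in S$ into the running index $k$ of the product. This is exactly the acceptor (single-track, output-free) analogue of the argument behind Prop.~\ref{p-SA-prob1}. The reference to associativity of matrix multiplication is not needed for the two-word case itself; it becomes relevant only when iterating the identity to obtain $P(x_1\cdots x_k)=P(x_1)\cdots P(x_k)$, where associativity guarantees that the iterated product is well defined independently of how it is bracketed.
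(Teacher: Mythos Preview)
Your argument is correct and is essentially the paper's approach spelled out in detail: the paper simply remarks that the result follows from Prop.~\ref{p-SAA-prob0} (together with associativity of matrix multiplication), and your entrywise verification is exactly how one cashes this out. Your closing observation that associativity is only needed for the iterated product $P(x_1\cdots x_k)=P(x_1)\cdots P(x_k)$, not for the two-factor identity itself, is a fair sharpening of the paper's phrasing.
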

It follows by induction that if $x=x_1\ldots x_k\in\Sigma^*$, then
\begin{eqnarray}\label{e-SAA-prob2}
P(x) = P(x_1)\cdots P(x_k).
\end{eqnarray}
The $(i,j)$-th element of the product matrix $P(x)$
is the probability of transition from state $s_i$ to state $s_j$ when the word $x$ is read.
Moreover, if $\pi$ is the initial state distribution of the stochastic acceptor, then after reading
the word $x\in \Sigma^*$ the final state distribution of the stochastic acceptor becomes $\pi P(x)$.
\begin{example}
Take the stochastic acceptor $A = (\{s_1,s_2\},\{a,b\}, P, \pi,f)$ with
$$ P(a) = \left( \begin{array}{cc} \frac{1}{2} &\frac{1}{2}\\ 0 & 1 \end{array} \right),
\quad P(b) = \left( \begin{array}{cc} \frac{1}{4} &\frac{3}{4}\\ \frac{1}{8} & \frac{7}{8} \end{array} \right),
\quad \pi = (1,0), 
\quad\mbox{and}\quad f= \left( \begin{array}{l} 0\\ 1 \end{array} \right).  $$
Then we have for all input words of length~3,
\begin{eqnarray*}
\begin{array}{lll}
\pi P(aaa) f = \frac{7}{8},    && \pi P(baa) f =  \frac{15}{16},\\
\pi P(aab) f =  \frac{27}{32}, && \pi P(bab) f =  \frac{55}{64},\\
\pi P(aba) f =  \frac{29}{32}, && \pi P(bba) f =  \frac{59}{64},\\
\pi P(abb) f =  \frac{109}{128}, && \pi P(bbb) f =  \frac{219}{256}.
\end{array}
\end{eqnarray*}
\EXX
\end{example}

A stochastic acceptor $A=(S,\Sigma,P,\pi,f)$ is {\em deterministic\/}\index{stochastic acceptor!deterministic}
if each matrix $P(a)$, $a\in\Sigma$, has in each row exactly one entry~1.
Note that a deterministic stochastic acceptor is a nondeterministic acceptor in the usual sense if the initial state distribution is a single state.
\begin{example}
Take the deterministic acceptor $A = (\{s_1,s_2\},\{a,b\}, P, \pi,f)$ with
$$ P(a) = \left( \begin{array}{cc} 1&0\\ 0 & 1 \end{array} \right),
\quad P(b) = \left( \begin{array}{cc} 0&1\\ 1 & 0 \end{array} \right),
\quad \pi = (1,0), 
\quad\mbox{and}\quad f= \left( \begin{array}{l} 0\\ 1 \end{array} \right).  $$
Then we have for all input words of length~3,
\begin{eqnarray*}
\begin{array}{lll}
\pi P(aaa) f = 0, && \pi P(baa) f = 1,\\
\pi P(aab) f = 1, && \pi P(bab) f = 0,\\
\pi P(aba) f = 1, && \pi P(bba) f = 0,\\
\pi P(abb) f = 0, && \pi P(bbb) f = 1.\\
\end{array}
\end{eqnarray*}
\EXX
\end{example}

Let $A=(S,\Sigma,P,\pi,f)$ be a stochastic acceptor and let $\lambda$ be a real number with $0\leq \lambda\leq 1$.
The set 
\begin{eqnarray}
L_{A,\lambda} = \{x\in \Sigma^*\mid \pi P(x)f > \lambda\}
\end{eqnarray}
is the {\em stochastic language\/}\index{language!stochastic} of the acceptor 
and the number $\lambda$ is the {\em cut point\/}\index{cut point} 
of the language. %$L_{A,\lambda}$.
Note that by definition $L_{A,1} = \emptyset$.

\begin{example}\label{e-SA-0}
Consider the stochastic acceptor $A=(\{s_1,s_2\},\{a\},P,\pi,f)$ with
$$ P(a) = \left( \begin{array}{cc} \frac{1}{2} &\frac{1}{2}\\ 0 & 1 \end{array} \right),
\quad \pi = (1,0), \quad\mbox{and}\quad 
f= \left( \begin{array}{l} 0\\ 1 \end{array} \right).  $$
Then for each integer $k\geq 0$,
$$\pi P(a^k) f 
= \pi  P(a)^k f = \pi \left( \begin{array}{cc} \frac{1}{2^k} &\frac{2^k-1}{2^k}\\ 0 & 1 \end{array} \right) f
= 1-\frac{1}{2^k}.$$
Given the cut point $\lambda$ with $1-\frac{1}{2^{l+1}}> \lambda \geq 1-\frac{1}{2^l}$ for some $l\geq 0$, the accepted language is
$$L_{A,\lambda} = \{a^k \mid  1-\frac{1}{2^k} > \lambda \} = \{a^k \mid  k> l\}.$$
This language is regular for each cut point $\lambda$.
\EXX
\end{example}

A subset $L$ of $\Sigma^*$ is a {\em $\lambda$-stochastic language}\index{language!lambda-stochastic} 
with $0\leq \lambda\leq 1$
if there is a stochastic acceptor $A$ such that $L=L_{A,\lambda}$.
A subset $L$ of $\Sigma^*$ is a {\em stochastic language}\index{language!stochastic} 
if $L$ is $\lambda$-stochastic for some $0\leq \lambda\leq 1$.
A language $L$ is {\em $\lambda$-regular}\index{language!lambda-regular} with $0\leq \lambda\leq 1$
if there is a deterministic acceptor $A$ such that $L=L_{A,\lambda}$.
A language $L$ is {\em regular}\index{language!regular}
if $L$ is $\lambda$-regular for some $0\leq \lambda\leq 1$.

%Let $l=|S|$ be the number of states.
%The acceptance probability of an input sequence $x=x_1\ldots x_n\in\Sigma^n$ is given by the 
%following sum-product decomposition:
%\begin{eqnarray}\label{e-SAA-spd}
%\pi P(x) f 
%&=&
%\sum_{s_{n+1}\in S} 
%f_{s_{n+1}}
%\left(
%\sum_{s_n\in S} p(s_{n+1}\mid x_n,s_n)
%\ldots 
%\right.
%\\
%&&
%\left.
%\left(\ldots
%\left(
%\sum_{s_2\in S} p(s_3\mid x_2,s_2) 
%\left(
%\sum_{s_1\in S} p(s_2\mid x_1,s_1)\pi_{s_1}\right)\right)\ldots\right)\right).\nonumber
%\end{eqnarray}
%%CHECK insert computation of prob.
%This decomposition can be exploited to evaluate the acceptance probability by using an $n\times l$ table $M$: 
%\begin{eqnarray}
%M[0,s] &:=& \pi_s,  \quad s\in S,\nonumber\\
%M[k,s] &:=& \sum_{s'\in S} p(s\mid x_k,s') \cdot M[k-1,s'], \quad s\in S,\,1\leq k\leq n,\\
%\pi P(x) f &:=& \sum_{s\in S} f_s \cdot M[n,s].\nonumber
%\end{eqnarray}
%The time complexity of the algorithm is $O(l^2n)$, since the table $M$ has size $O(ln)$ and each table entry is computed in $O(l)$ steps.
%
\begin{proposition}\label{p-sa-0}
Each $0$-stochastic language is regular.
\end{proposition}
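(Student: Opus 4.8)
The plan is to show that membership in $L_{A,0}=\{x\in\Sigma^*\mid \pi P(x)f>0\}$ is a purely combinatorial reachability condition that depends only on the \emph{supports} (the zero/nonzero patterns) of $\pi$, of the matrices $P(a)$, and of $f$, and not on the precise transition probabilities. Once that reduction is in hand, I can read off an ordinary nondeterministic finite acceptor recognizing $L_{A,0}$, and regularity will follow from standard automata theory.

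Concretely, I would first note that, since $\pi$, every $P(a)$, and $f$ have nonnegative entries, so does each product $P(x)=P(x_1)\cdots P(x_k)$ by Prop.~\ref{p-SAA-prob1}. Writing $\pi P(x)f=\sum_{i,j}\pi_i\,P(x)_{ij}\,f_j$, every summand is nonnegative, so the total is strictly positive if and only if some summand is, i.e.\ if and only if there are indices $i,j$ with $\pi_i>0$, $f_j=1$, and $P(x)_{ij}>0$. Expanding the matrix product gives $P(x)_{ij}=\sum p_{i i_1}(x_1)\,p_{i_1 i_2}(x_2)\cdots p_{i_{k-1} j}(x_k)$, a sum of nonnegative products over intermediate states $i_1,\ldots,i_{k-1}$, so $P(x)_{ij}>0$ holds exactly when there is a sequence $i=i_0,i_1,\ldots,i_k=j$ along which every factor $p_{i_{l-1} i_l}(x_l)$ is positive. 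This equivalence is the heart of the argument and the step I would verify most carefully; it is cleanest to prove it by induction on $|x|$ mirroring the recursion~(\ref{e-SAA-phat3}), or equivalently to observe that passing from a nonnegative matrix to its Boolean support turns matrix multiplication into Boolean matrix multiplication.

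Having established this, I would define the nondeterministic finite acceptor $N$ with state set $S$, initial states $\{s_i\mid \pi_i>0\}$, final states $F=\{s_j\mid f_j=1\}$, and an $a$-labelled transition $s_i\to s_j$ precisely when $p(s_j\mid a,s_i)>0$. By the previous paragraph, a word $x$ admits an accepting run of $N$ if and only if $\pi P(x)f>0$, whence $L(N)=L_{A,0}$. Since $N$ is an ordinary nondeterministic acceptor, $L(N)$ is a regular language in the classical sense.

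Finally, to match the notion of regularity used here (a deterministic acceptor together with a cut point), I would determinize $N$ by the subset construction, obtaining a deterministic acceptor $D$ whose states are subsets of $S$, with a single initial state $\{s_i\mid \pi_i>0\}$ and each $P_D(a)$ functional, having exactly one entry $1$ per row. For such $D$ the value $\pi_D P_D(x) f_D$ lies in $\{0,1\}$ and equals $1$ exactly when $x\in L(N)$, so choosing cut point $\lambda=0$ yields $L_{D,0}=L(N)=L_{A,0}$. Hence $L_{A,0}$ is $0$-regular, and in particular regular. The only genuinely substantive point is the support/reachability equivalence of the second paragraph; the two acceptor constructions are then routine.
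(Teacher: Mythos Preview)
Your proof is correct and follows essentially the same approach as the paper: both reduce $\pi P(x)f>0$ to a support/reachability condition and then apply the subset construction. The only difference is presentational---you factor through an explicit NFA before determinizing, whereas the paper builds the powerset deterministic acceptor $B=(2^{S_A},\Sigma,P_B,\pi_B,f_B)$ directly, with $\pi_B$ concentrated on $\{s_i\mid \pi_{A,i}\ne 0\}$ and transitions tracking the reachable set; the resulting machine is the same.
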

\begin{proof}
Let $L$ be a $0$-stochastic language over $\Sigma$.
Then there is a stochastic acceptor 
$$A=(S_A,\Sigma,P_A	,\pi_A,f_A)$$ such that $L=L_{A,0}$.
Consider the deterministic acceptor 
$$B=(S_B,\Sigma,P_B,\pi_B,f_B),$$ where
\begin{itemize}
\item $S_B= 2^{S_A}$ is the power set of $S_A$.
\item The transition probability matrices $P_B(a)$, $a\in\Sigma$, are defined by
\begin{eqnarray*}
\lefteqn{p_B(S_j\mid a,S_i) }\\
&=&\left\{ \begin{array}{ll}
1 & \mbox{if } S_j = \{s'\in S_A \mid p_A(s'\mid a,s)\ne 0 \mbox{ for some } s\in S_i\},\\
0 & \mbox{otherwise,}
\end{array} \right.  
\end{eqnarray*}
for all $a\in \Sigma$ and $S_i,S_j\in S_B$.
\item
$\pi_B = (\pi_{B,1},\ldots,\pi_{B,2^n})$ with 
$$\pi_{B,i}= 
\left\{ \begin{array}{ll}
1 & \mbox{if } S_i = R,\\
0 & \mbox{otherwise,}
\end{array} \right.$$
where $S_B=\{S_1,\ldots,S_{2^n}\}$, $\pi_A = (\pi_{A,1},\ldots,\pi_{A,n})$, and 
$$R=\{s_i\in S_A\mid\pi_{A,i}\ne 0\}.$$
\item $f_B$ is uniquely determined by the final state set 
$$F_B=\{ R\in S_B\mid R \cap F_A\ne\emptyset\}.$$ % where $F_A$ is the set of final states of $A$.
\end{itemize}
Clearly, the acceptor $B$ is deterministic.
Moreover, for each $x\in \Sigma^*$, 
the inequality $\pi_A P_A(x) f_A>0$ 
%$$\sum_j \sum_i \pi_{A,i}p_A(s_j\mid x,s_i)f_{A,j} >0.$$
is equivalent to $\pi_B P_B(x) f_B>0$. 
Hence, $L_{A,0} = L_{B,0}$. 
\end{proof}

\begin{example}
In view of the stochastic acceptor $A$ in Ex.~\ref{e-SA-0}, we have $L_{A,0}=\{a^k\mid k\geq 1\}$.
The corresponding deterministic acceptor $B$ has state set $S_B=\{\emptyset,\{s_1\},\{s_2\},\{s_1,s_2\}\}$, 
input alphabet $\Sigma_B=\{0,1\}$, state transition matrix
$$P_B(x) = \bordermatrix{ 
~              & \emptyset  & \{s_1\} & \{s_2\} & \{s_1,s_2\} \cr
\emptyset      &    1       &    0    &    0    &    0        \cr
\{s_1\}        &    0       &    0    &    0    &    1        \cr
\{s_2\}        &    0       &    0    &    1    &    0        \cr
\{s_1,s_2\}    &    0       &    0    &    0    &    1        \cr
}$$
as well as initial state distribution $\pi_B=(0,1,0,0)$, and final state set $F_B=\{\{s_2\},\{s_1,s_2\}\}$.
\EXX
\end{example}

\begin{proposition}\label{p-sa-pre-0}
For each deterministic acceptor $A$  and each cut point $\lambda$, there is a deterministic acceptor $B$ with 
initial state distribution given by a single state such that
$L_{A,\lambda} = L_{B,0}$. 
\end{proposition}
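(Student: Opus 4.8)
The plan is to realize the cut-point-$\lambda$ language of the deterministic acceptor $A$ as the cut-point-$0$ language of a single-start deterministic acceptor $B$, by packaging the entire initial distribution $\pi_A$ of $A$ into one meta-state and letting the deterministic transition matrices of $A$ act on distributions. This is a subset-style construction, but carried out on reachable \emph{distributions} rather than on subsets of states.

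First I would observe that, since $A$ is deterministic, each $P_A(a)$ is a $0$--$1$ matrix encoding a function on states; hence for every word $x=x_1\ldots x_k$ the matrix $P_A(x)=P_A(x_1)\cdots P_A(x_k)$ is again a $0$--$1$ functional matrix, determined by the induced transition map $\delta_x\colon S_A\to S_A$. Consequently the reachable row vectors $\pi_A P_A(x)$ depend on $x$ only through $\delta_x$, and since there are at most $|S_A|^{|S_A|}$ such maps, the set $D=\{\pi_A P_A(x)\mid x\in\Sigma^*\}$ is finite. Each element of $D$ is a state distribution, because $\pi_A$ is a distribution and each $P_A(a)$ is stochastic, and right multiplication by $P_A(a)$ sends $D$ into $D$, assigning to each $d\in D$ the single vector $dP_A(a)=\pi_A P_A(xa)\in D$.

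Then I would define $B$ to have state set $D$, initial state the single meta-state $\pi_A=\pi_A P_A(\epsilon)\in D$, transition matrices $P_B(a)$ given on the basis $D$ by $d\mapsto dP_A(a)$, final-state vector $f_B$ marking exactly those $d\in D$ with $d\cdot f_A>\lambda$, and cut point $0$. Since each $d$ has a unique image under $P_B(a)$, every $P_B(a)$ has exactly one entry $1$ per row, so $B$ is deterministic, and $\pi_B$ is a single state by construction. Reading $x$ in $B$ from $\pi_A$ lands deterministically in the state $\pi_A P_A(x)$, so $\pi_B P_B(x) f_B$ equals $1$ if $\pi_A P_A(x) f_A>\lambda$ and $0$ otherwise; hence $\pi_B P_B(x) f_B>0$ exactly when $\pi_A P_A(x) f_A>\lambda$, giving $L_{B,0}=L_{A,\lambda}$.

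The routine verifications --- that $P_B$ is well defined on $D$, that it respects the identification $\pi_A P_A(x)\mapsto$ state of $B$, and that $\pi_B P_B(x) f_B$ takes only the values $0$ and $1$ --- are all straightforward once finiteness is established. The one genuine point to pin down is the finiteness of $D$, which I expect to be the crux; the cleanest justification is the observation above that $\pi_A P_A(x)$ is governed entirely by the finitely many functional maps $\delta_x$ arising from the deterministic matrices of $A$.
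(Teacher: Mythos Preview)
Your proof is correct, and it takes a genuinely different route from the paper's. The paper decomposes the problem via the single-start acceptors $A_i$ obtained from $A$ by starting in state $s_i$: since each $s_iP_A(x)f_A\in\{0,1\}$, one has $\pi_AP_A(x)f_A>\lambda$ iff $x$ lies in $\bigcap_{j}L_{A_{i_j},0}$ for some subset $\{s_{i_1},\ldots,s_{i_k}\}$ with $\sum_j\pi_{A,i_j}>\lambda$; thus $L_{A,\lambda}$ is a finite union of finite intersections of regular languages, and the paper then invokes closure of regular languages under union and intersection to obtain $B$. Your construction instead builds $B$ explicitly by tracking the reachable distributions $\pi_AP_A(x)$, with finiteness coming from the functional nature of the deterministic matrices. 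The paper's argument is more modular (it reduces to standard closure properties), while yours is more direct and yields an explicit automaton with a concrete state bound $|S_A|^{|S_A|}$; both hinge on the same underlying fact that $s_iP_A(x)f_A$, and hence $\pi_AP_A(x)$, is determined by the finitely many transition maps $\delta_x$.
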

\begin{proof}
Let $A=(S_A,\Sigma,P_A,\pi_A,f_A)$ be a deterministic acceptor with 
$S_A=\{s_1,\ldots,s_n\}$ and $\pi_A=(\pi_{A,1},\ldots,\pi_{A,n})$.
Write $\pi = \sum_{i=1}^n \pi_{A,i}s_i$.

For each $1\leq i\leq n$,
define the acceptor $A_i$ obtained from $A$ by replacing the initial state distribution $\pi_A$ with the 
state (standard basis vector) $s_i$.
This acceptor is deterministic in the ordinary sense.

Let $x\in L_{A,\lambda}$. Then $\pi_A P_A(x) f_A> \lambda$ or equivalently 
$$\sum_{i=1}^n \pi_{A,i}s_iP_A(x)f_A>\lambda.$$
That is, there are states
$s_{i_1},\ldots,s_{i_k}$ such that $\sum_{j=1}^k \pi_{A,i_j}>\lambda$ and $x\in L_{A_{i_j},0}$ for all $1\leq j\leq k$. 
Equivalently, we have 
$$x\in \bigcup ( L_{A_{i_1},0}\cap\ldots\cap L_{A_{i_k},0})= L',$$
where the union extends over all nonempty subsets $\{s_{i_1},\ldots,s_{i_k}\}$ of $S_A$ such that $\sum_{j=1}^k \pi_{A,i_j}>\lambda$.
Thus the set $L' =L_{A,\lambda}$ is a finite union of intersections of regular languages. 
%in the sense of deterministic acceptors.
But the class of regular languages is closed under finite unions and intersections.
Thus the language $L'$ is regular in the ordinary sense and hence there exists a deterministic acceptor $B$ with
the desired property.
\end{proof}

The $p$-adic languages provide an example of stochastic languages which are not regular and also demonstrate that the class of stochastic languages is uncountable.
To see this, let $p\geq 2$ be an integer.
The stochastic acceptor $A= (\{s_1,s_2\}, \{0,\ldots,p-1\},  P , \pi, f)$ with
$$P(a) = 
\left(\begin{array}{cc} 1-\frac{a}{p} &\frac{a}{p} \\ 1-\frac{a+1}{p} &\frac{a+1}{p} \end{array}\right), \quad 0\leq a\leq p-1, 
\quad 
\pi =(1,0),
\quad \mbox{and}\quad 
f = \left( \begin{array}{c} 0\\1 \end{array}\right)$$
is called {\em $p$-adic acceptor}\index{acceptor!p-adic}.

Each word $x=x_1\ldots x_k\in\{0,\ldots,p-1\}^*$ can be assigned the real number 
whose $p$-adic representation is $0.x_k\ldots x_1$.

\begin{proposition}
Let $A$ be an $p$-adic acceptor and let $\lambda$ be a cut point.
Then 
$$L_{A,\lambda} = \{x_1\ldots x_k\in\{0,\ldots,p-1\}^*\mid 0.x_k\ldots x_1 > \lambda\}.$$
\end{proposition}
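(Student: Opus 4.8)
The plan is to reduce the whole statement to a single closed-form evaluation of the scalar $\pi P(x)f$ and then read off the language directly from the definition $L_{A,\lambda}=\{x\in\Sigma^*\mid \pi P(x)f>\lambda\}$. Since $\pi=(1,0)$ and $f=(0,1)^T$, for a word $x=x_1\ldots x_k$ the number $\pi P(x)f$ is just the first entry of the column vector $P(x)f$, where $P(x)=P(x_1)\cdots P(x_k)$ by~(\ref{e-SAA-prob2}). Writing $r(x)=\sum_{i=1}^{k}x_i\,p^{-(k+1-i)}$ for the real number whose $p$-adic expansion is $0.x_k\ldots x_1$, the proposition reduces to the single identity $\pi P(x)f=r(x)$; once this holds, $L_{A,\lambda}=\{x\mid r(x)>\lambda\}$ is exactly the claimed set.

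The engine of the argument is an invariance property of vectors of the special shape $(\alpha,\alpha+\beta)^T$ under left multiplication by the matrices $P(a)$. A direct $2\times2$ computation gives $P(a)\,(\alpha,\alpha+\beta)^T=(\alpha+\tfrac{a}{p}\beta,\ \alpha+\tfrac{a+1}{p}\beta)^T$: the first entry is raised by $\tfrac{a}{p}\beta$, while the gap between the two entries contracts from $\beta$ to $\tfrac{1}{p}\beta$. The vector $f=(0,1)^T$ already has this shape with $\alpha=0$ and $\beta=1$, so the shape is preserved throughout the product $P(x)f$, with the gap shrinking by a factor $\tfrac1p$ at each symbol.

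With this lemma I would establish, by induction on $k$, the vector identity $P(x_1\ldots x_k)f=(r(x),\,r(x)+p^{-k})^T$. The base case $k=0$ reads $P(\epsilon)f=I_nf=(0,1)^T$, which matches $(r(\epsilon),r(\epsilon)+1)^T$ under the empty-sum convention $r(\epsilon)=0$. For the inductive step I peel off the first symbol, $P(x_1\ldots x_k)f=P(x_1)\bigl(P(x_2\ldots x_k)f\bigr)$, apply the induction hypothesis to the length-$(k-1)$ suffix $x_2\ldots x_k$ to write the inner vector as $(\alpha,\alpha+\beta)^T$ with $\alpha=0.x_k\ldots x_2$ and $\beta=p^{-(k-1)}$, and then invoke the lemma with $a=x_1$. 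The new first entry is $\alpha+\tfrac{x_1}{p}p^{-(k-1)}=\alpha+x_1p^{-k}$, which is precisely $r(x)$ because $x_1$ occupies the least significant position, of weight $p^{-k}$, in the numeral $0.x_k\ldots x_1$; the new second entry is $r(x)+p^{-k}$, closing the induction. Taking first components yields $\pi P(x)f=r(x)$, and the proposition follows.

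I expect the only genuine obstacle to be spotting the right invariant, namely that $P(x)f$ never leaves the one-parameter family $(\alpha,\alpha+\beta)^T$ with $\beta=p^{-|x|}$; once that is seen, the $2\times2$ algebra and the induction are routine. The single point needing care is the indexing convention that ties the word $x_1\ldots x_k$ to the reversed expansion $0.x_k\ldots x_1$, so that the symbol stripped from the front of the word is the least significant digit of the numeral being assembled.
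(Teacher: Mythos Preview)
Your proof is correct and follows essentially the same route as the paper: both compute $\pi P(x)f$ by a direct $2\times2$ induction on the length of $x$, identifying this scalar with $0.x_k\ldots x_1$. The only cosmetic difference is that the paper peels off the \emph{last} symbol (writing $P(xa)=P(x)P(a)$ and using that the row $\pi P(x)$ is stochastic), whereas you peel off the \emph{first} symbol and instead track the invariant that the column $P(x)f$ has the shape $(\alpha,\alpha+p^{-|x|})^T$; these are dual bookkeeping choices for the same computation.
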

\begin{proof}
A word $x=x_1\ldots x_k\in\{0,\ldots,p-1\}^*$ lies in $L_{A,\lambda}$ 
if and only if
$\pi P(x) f > \lambda$.
Note that $\pi P(x) f$ is the upper right entry of the matrix $P(x)$.

Claim that the matrix $P(x)$ has $0.x_k\ldots x_1$ as upper right entry.
Indeed, this is true for $k=1$.
The matrix $P(x_1\ldots x_ka)$ with $x_1,\ldots,x_k,a \in\{0,\ldots,p-1\}$ can be written as the product matrix $P(x_1\ldots x_k)P(a)$.
By induction, the upper right entry of this matrix is 
\begin{eqnarray*}
(1-0.x_k\ldots x_1) \cdot \frac{a}{p} + 0.x_k\ldots x_1 \cdot \frac{a+1}{p} 
&=& \frac{a}{p} + \frac{0.x_k\ldots x_1}{p} \\
&=& 0.a + 0.0x_k\ldots x_1 \\
&=& 0.ax_k\ldots x_1.
\end{eqnarray*}
This proves the claim.
Hence, $x_1\ldots x_k\in L_{A,\lambda}$ if and only if $0.x_k\ldots x_1>\lambda$.
\end{proof}

The following result makes use of the {\em Nerode equivalence relation\/}\index{Nerode relation} 
$\equiv_L$ of a language $L$ over $\Sigma$.
Define for all words $x,y\in\Sigma^*$, 
\begin{eqnarray}
x\equiv_L y \quad:\Longleftrightarrow\quad
\forall z\in\Sigma^*: xz\in L \Longleftrightarrow yz\in L.
\end{eqnarray}
This is an equivalence relation on the set $\Sigma^*$.
The theorem of Nerode-Myhill says that a language $L\subseteq\Sigma^*$ is regular if and only if the number of equivalence 
classes of the relation $\equiv_L$ is finite.

For instance, the language $L=\{a^nb^n\mid n\in\NN\}$ is not regular, since 
the words $a^ib$ and $a^jb$ with $i\ne j$ are not equivalent and so the
equivalence classes of the words $a^ib$, $i\geq 1$, are all distinct.

\begin{proposition}\label{p-SA-rat}
Let $A$ be an $p$-adic acceptor and $\lambda$ be a cut point.
Then $L_{A,\lambda}$ is regular if and only if\/  $\lambda$ is rational.
\end{proposition}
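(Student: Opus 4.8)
The plan is to use the Nerode-Myhill criterion: $L=L_{A,\lambda}$ is regular if and only if $\equiv_L$ has finitely many classes. By the previous proposition, writing $v(x_1\ldots x_k)=0.x_k\ldots x_1$ (base $p$) for the value attached to a word, we have $x\in L$ iff $v(x)>\lambda$. First I would record the append identity $v(xz)=v(z)+p^{-|z|}v(x)$, which follows directly from the definition of $v$ because the digits of $z$ occupy the most significant positions of $v(xz)$. Consequently $xz\in L$ iff $v(x)>p^{|z|}(\lambda-v(z))$, so setting $t(z)=p^{|z|}(\lambda-v(z))$ we find that $x\equiv_L y$ holds exactly when $v(x)$ and $v(y)$ lie on the same side of every threshold $t(z)$, that is, $v(x)>t(z)\Leftrightarrow v(y)>t(z)$ for all $z\in\Sigma^*$.

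Next I would identify the separating thresholds. Since $p^{|z|}v(z)$ equals the integer $N(z)=\sum_i z_i p^{i-1}$, which ranges over $\{0,\ldots,p^{|z|}-1\}$ as $z$ runs over words of a fixed length, the threshold set is $T=\{p^m\lambda-N\mid m\ge0,\ 0\le N\le p^m-1\}$. The values $v(x)$ are precisely the terminating base-$p$ fractions in $[0,1)$ and are dense there, so a threshold can separate two values only if it lies in $[0,1)$; for each $m$ there is exactly one such threshold, the fractional part $\{p^m\lambda\}$ (assuming $0\le\lambda<1$; the case $\lambda=1$ gives $L=\emptyset$, which is both regular and has rational cut point). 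Thus the separating thresholds form $T'=\{\{p^m\lambda\}\mid m\ge0\}\subseteq[0,1)$. I would then show that the number of $\equiv_L$-classes is finite iff $T'$ is finite: if $T'=\{t_1<\cdots<t_r\}$, the signature $t\mapsto[v(x)>t]$ takes at most $r+1$ values; conversely, if $T'$ is infinite I would extract a strictly monotone sequence of distinct thresholds and, by density of the values, slip a value of $v$ strictly between consecutive terms, producing infinitely many pairwise inequivalent words.

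Finally I would prove that $T'$ is finite iff $\lambda$ is rational. If $\lambda=a/b$ then $\{p^m\lambda\}=(p^m a\bmod b)/b$ takes at most $b$ values; conversely, finiteness of $T'$ forces $\{p^{m_1}\lambda\}=\{p^{m_2}\lambda\}$ for some $m_1<m_2$, whence $(p^{m_2}-p^{m_1})\lambda\in\ZZ$ and $\lambda\in\QQ$. Chaining the two equivalences through Nerode-Myhill gives the result. I expect the main obstacle to be the middle step: converting an infinite threshold set into infinitely many inequivalent words via the density of the attainable values, and pinning down the strict-versus-nonstrict inequalities in the threshold comparison so that the class count is exactly right.
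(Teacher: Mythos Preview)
Your proposal is correct and follows essentially the same route as the paper: both arguments use the Nerode--Myhill criterion, both reduce the question of whether $x\equiv_L y$ to whether $v(x)$ and $v(y)$ are separated by one of a family of thresholds derived from $\lambda$, and both conclude by observing that this threshold set is finite precisely when $\lambda$ is rational. Your thresholds $\{p^m\lambda\}$ are exactly the paper's quantities $\lambda_z$ (the tail $0.a_1a_2\ldots$ of $\lambda$ after deleting its first $|z|$ base-$p$ digits), so the two arguments are the same up to notation; if anything, your version is more explicit about the append identity $v(xz)=v(z)+p^{-|z|}v(x)$ and about the density step needed to turn infinitely many thresholds into infinitely many Nerode classes.
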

\begin{proof}
In view of the Nerode equivalence relation of the language $L=L_{A,\lambda}$,
two words $x,y\in\Sigma^*$ belong to different equivalence classes if and only if
there is a word $z\in\Sigma^*$ such that $xz\in L$ and $yz\not\in L$ or $xz\not\in L$ and $yz\in L$.
That is, $0.\mi(yz)\leq \lambda < 0.\mi(xz)$ or $0.\mi(xz)\leq \lambda < 0.\mi(yz)$,
where $\mi(x_1\ldots x_k) = x_k\ldots x_1$ denotes the {\em mirror image}\index{mirror image} of the word $x=x_1\ldots x_k$.

Suppose without restriction that $0.\mi(yz)\leq \lambda < 0.\mi(xz)$. 
Since $\mu(yz)=\mi(z)\mi(y)$ and $\mu(xz)=\mi(z)\mi(x)$, we have
$0.\mi(z)\mi(y)\leq \lambda < 0.\mi(z)\mi(x)$. 
Thus the cut point must have the form $\lambda = 0.\mi(z)a_1a_2\ldots$.
If we put $\lambda_z = 0.a_1a_2\ldots$, then $0.\mi(y)\leq \lambda_z<0.\mi(x)$.
With this kind of numbers $\lambda_z$ it is possible to separate the equivalence classes.
That is, the relation $\equiv_L$ has the classes
$$\{y\in\Sigma^*\mid \lambda_{u}<0.\mi(y)\leq \lambda_v\mbox{ and there is no $w\in\Sigma^*$ with }\lambda_u<\lambda_w<\lambda_v\}.$$
We have 
$$\lambda = 0.\mi(z) +\frac{\lambda_z}{p^l},$$ 
where $l$ is the length of the word $z$.
Therefore, there are only finitely many equivalence classes of $\equiv_L$ if and only if 
there are only finitely many $\lambda_z$ with this property.
In this case, the cut point $\lambda$ is eventually periodic % schließlich periodisch
and hence a rational number.
\end{proof}
By definition, each regular language is stochastic.
However, if the cut point $\lambda$ is not rational, one obtains a stochastic language which is not regular. 
For this, note that the set of algorithms written for a Turing machine is countable, while the set of real numbers is uncountable.
Moreover, for distinct cut points $\lambda_1$ and $\lambda_2$, the corresponding $p$-adic stochastic acceptors yield different stochastic languages.  
Thus the number of stochastic languages is uncountable.
Hence, there are stochastic languages that cannot be accepted by a Turing machine.

\begin{example}\label{e-SA-rat}
Take the stochastic acceptor $A=(\{s_1,\ldots,s_5\},\{a,b\},P,\pi,f)$, where
$$
P(a)= \left( \begin{array}{ccccc}
\frac{1}{2} &  \frac{1}{2} & 0 & 0 & 0 \\
0           &  1           & 0 & 0 & 0 \\
0           &  0           & 0 & 0 & 1 \\
0           &  0           & 0 & 0 & 1 \\
0           &  0           & 0 & 0 & 1 
\end{array} \right),
\quad
P(b)= \left( \begin{array}{ccccc}
0 & 0 & \frac{1}{2} & \frac{1}{2} & 0 \\
0 & 0 & \frac{1}{2} &    0        & \frac{1}{2}  \\
0 & 0 & \frac{1}{2} & \frac{1}{2} & 0 \\
0 & 0 &      0      & 1           & 0 \\
0 & 0 &      0      & 0           & 1 
\end{array} \right),$$
and
$$\pi = \left( \begin{array}{ccccc}
1 & 0 & 0 & 0 & 0 
\end{array} \right), 
\quad
f = \left( \begin{array}{c}
0\\ 0 \\ 0\\ 1\\ 0
\end{array} \right).$$
%see Worksheet Maple:      matrix-Operations.mws  (at the end of worksheet)
The state diagram of the acceptor is given in Fig.~\ref{f-SA-As}. 
For integers $i\geq 0$ and $j\geq 1$, we obtain
$$
P(a^i)= \left( \begin{array}{ccccc}
\frac{1}{2^i} &  \frac{2^i-1}{2^i} & 0 & 0 & 0 \\
0           &  1           & 0 & 0 & 0 \\
0           &  0           & 0 & 0 & 1 \\
0           &  0           & 0 & 0 & 1 \\
0           &  0           & 0 & 0 & 1 
\end{array} \right),
\quad
P(b^j)= \left( \begin{array}{ccccc}
0 & 0 & \frac{1}{2^j} & \frac{2^j-1}{2^j} & 0 \\
0 & 0 & \frac{1}{2^j} &    \frac{2^{j-1}-1}{2^j} & \frac{1}{2}  \\
0 & 0 & \frac{1}{2^j} & \frac{2^j-1}{2^j} & 0 \\
0 & 0 &      0      & 1           & 0 \\
0 & 0 &      0      & 0           & 1 
\end{array} \right).
$$
Thus
$$\pi P(a^ib^j) = \left( \begin{array}{ccccc}
0, & 0, & \frac{1}{2^j}, & \frac{(2^j-1)+(2^i-1)(2^{j-1}-1)}{2^{i+j}}, & \frac{2^i-1}{2^{i+1}} 
\end{array} \right) $$
and hence
$$\pi P(a^ib^j) f = \frac{1}{2} +\left(\frac{1}{2} \right)^{i+1} -\left(\frac{1}{2} \right)^j.$$
The state diagram shows that for each word $x$, which is not of the form $a^ib^j$ with $i\geq0$ and $j\geq 1$, 
we obtain $\pi P(x) f= 0$.
Therefore, we obtain the language 
$$L_{A,\frac{1}{2}}= \{a^ib^j\mid i\geq 0, j\geq i+2\},$$ 
which is not regular by the theorem of Nerode-Myhill.
\EXX
\end{example}
\begin{figure}[hbt]
\begin{center}
\mbox{$
\xymatrix{
                                                              &*++[o][F-]{s_2} \ar@(ul,ur)[]^{a:1}\ar@{->}[dr]^{b:1/2} \ar@{->}[d]^{b:1/2} & \\
*++[o][F-]{s_1}\ar@(ul,dl)[]_{a:1/2}\ar@{->}[ur]^{a:1/2}\ar@{->}[r]^{b:1/2}\ar@{->}[dr]_{b:1/2} & *++[o][F-]{s_3}\ar@(r,dr)[]^{b:1/2}\ar@{->}[r]^{a:1}\ar@{->}[d]^{b:1/2}                      & *++[o][F-]{s_5} \ar@(ur,dr)[]^{a,b:1} \\
                                                              &*++[o][F-]{s_4} \ar@(dl,dr)[]_{b:1} \ar@{->}[ur]_{a:1} & \\
}
$}
\end{center}
\caption{State diagram of stochastic acceptor $A$.}\label{f-SA-As}
\end{figure}
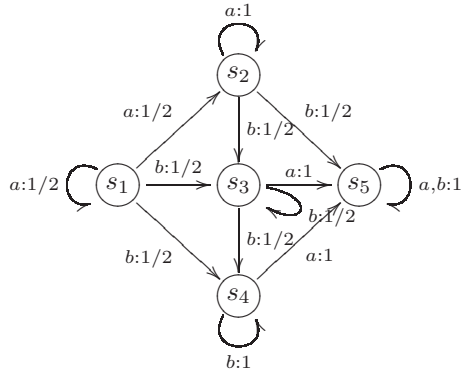

This example demonstrates that there are stochastic acceptors $A$, whose transition matrices 
and initial distributions are given by rational numbers, 
and rational cut points $\lambda$ such the language $L_{A,\lambda}$ is not regular.

%\section{Normalization}
Two normalization properties will come next.
First, stochastic languages are independent of the specific cut point.
\begin{proposition}
Let $A$ be a stochastic acceptor and $\lambda$ be a cut point.
Then for each number $\mu$ with $0<\mu<1$, there is a stochastic acceptor $B$ such that 
$L_{A,\lambda} = L_{B,\mu}$.
\end{proposition}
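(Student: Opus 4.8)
The plan is to realize a suitable \emph{affine reparametrization} of the acceptance value. Writing $v_A(x) = \pi P(x) f$ for the acceptance value of $A$ (which lies in $[0,1]$, since $\pi P(x)$ is a probability vector and $f$ is binary), I want to build $B$ whose acceptance value is $v_B(x) = \alpha\, v_A(x) + \beta$ for fixed constants $\alpha > 0$ and $\beta \ge 0$ chosen so that the old cut point is sent to the new one, i.e. $\alpha\lambda + \beta = \mu$. Because $\alpha > 0$, the map $t \mapsto \alpha t + \beta$ is strictly increasing, so $v_A(x) > \lambda$ holds precisely when $\alpha v_A(x) + \beta > \alpha\lambda + \beta = \mu$. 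Hence $x \in L_{A,\lambda}$ if and only if $x \in L_{B,\mu}$, and the statement follows at once once such a $B$ is exhibited.

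First I would construct $B$ by adjoining two absorbing states to $A$. Let $A = (S,\Sigma,P,\pi,f)$ have $n$ states and set $S_B = S \cup \{s_+, s_-\}$, where $s_+$ is an accepting sink and $s_-$ a rejecting sink. For each $a \in \Sigma$ define
\[
P_B(a) = \begin{pmatrix} P(a) & 0 & 0 \\ 0 & 1 & 0 \\ 0 & 0 & 1 \end{pmatrix},
\]
which is stochastic because $P(a)$ is stochastic and each sink carries a self-loop of probability $1$. Take the (binary) final-state vector $f_B = (f^T, 1, 0)^T$ and the initial distribution $\pi_B = (\alpha\pi,\ \beta,\ 1-\alpha-\beta)$.

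Next I would verify the affine identity by exploiting the block structure. Since $P_B(x)$ is block-diagonal with diagonal blocks $P(x)$, $1$, and $1$, one computes $\pi_B P_B(x) = (\alpha\,\pi P(x),\ \beta,\ 1-\alpha-\beta)$, and pairing with $f_B$ yields $\pi_B P_B(x) f_B = \alpha\,\pi P(x) f + \beta = \alpha\, v_A(x) + \beta$, exactly the reparametrization sought. Thus $v_B = \alpha v_A + \beta$ on all of $\Sigma^*$, and combined with the monotonicity observation above this gives $L_{B,\mu} = L_{A,\lambda}$.

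The only point requiring care — and the place where the hypothesis $0 < \mu < 1$ is actually used — is feasibility: I must choose $\alpha,\beta$ so that $\pi_B$ is a genuine probability vector, i.e. $\alpha > 0$, $\beta \ge 0$, and $1 - \alpha - \beta \ge 0$, subject to $\alpha\lambda + \beta = \mu$. Eliminating $\beta = \mu - \alpha\lambda$, these conditions reduce to $\alpha\lambda \le \mu$ and $\alpha(1-\lambda) \le 1 - \mu$, whose right-hand sides are both strictly positive exactly because $0 < \mu < 1$. Since $0 \le \lambda \le 1$, the single universal choice $\alpha = \min(\mu,\,1-\mu) > 0$ satisfies both inequalities, and then $\beta = \mu - \alpha\lambda \ge 0$ and $1-\alpha-\beta = 1-\mu-\alpha(1-\lambda) \ge 0$ as well, so $\pi_B$ is a valid distribution. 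This is a routine verification rather than a genuine obstacle; as a sanity check one may record the degenerate endpoints, where $\lambda = 0$ gives $\beta = \mu$ and $\lambda = 1$ gives $v_B(x) = \mu\, v_A(x) \le \mu$, correctly recovering $L_{B,\mu} = \emptyset = L_{A,1}$.
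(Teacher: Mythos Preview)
Your proof is correct, and the core idea---an affine reparametrization $v_B = \alpha v_A + \beta$ realized by adjoining absorbing states---is the same as the paper's. The implementations differ, however. The paper adds a \emph{single} sink state and then splits into cases: it disposes of $\lambda \in \{0,1\}$ separately (for $\lambda=0$ it invokes the earlier result that $0$-stochastic languages are regular), and for $0<\lambda<1$ it makes the sink accepting or rejecting according to whether $\mu>\lambda$ or $\mu\le\lambda$. You instead adjoin \emph{two} sinks, one accepting and one rejecting, which lets you realize any map $t\mapsto\alpha t+\beta$ with $\alpha,\beta\ge 0$ and $\alpha+\beta\le 1$ in one stroke; the single choice $\alpha=\min(\mu,1-\mu)$ then covers all $\lambda\in[0,1]$ uniformly, with no case analysis and no appeal to outside results. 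The paper's construction is marginally more parsimonious in state count, but yours is cleaner. One tiny slip in a parenthetical: your $\lambda=1$ sanity-check formula $v_B=\mu\,v_A$ is only literally correct when $\mu\le\tfrac12$ (otherwise $\alpha=1-\mu$ and $\beta=2\mu-1>0$), though the conclusion $L_{B,\mu}=\emptyset$ holds either way.
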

\begin{proof}
Let $A=(S,\Sigma,P,\pi,f)$ be a stochastic acceptor with state set 
$S_A = \{s_1,\ldots,s_n\}$ and let $\lambda$ be a cut point.

Let $\lambda=0$.
Then by Prop.~\ref{p-sa-0}, $L_{A,0}$ is regular 
and thus there is a deterministic acceptor $B$ which accepts this language for each cut point $\mu$ with $0<\mu<1$.

Let $\lambda=1$.
Then we have $L_{A,1}=\emptyset$ and it is clear that there is a stochastic acceptor $B$ with the required property.

Let $0<\lambda<1$.
Define the stochastic acceptor $B$ with alphabet $\Sigma$, 
state set $S_B = S_A\cup\{s_{n+1}\}$, and transition matrices
$$P_B(a) = \left(\begin{array}{ccc|c}
  &        &   & 0\\
  &P_A(a)  &   &  \vdots\\
  &        &   & 0\\\hline
0 & \ldots & 0 & 1
\end{array}\right),\quad a\in\Sigma.  $$
Then for each $x\in\Sigma^*$,
$$P_B(x) = \left(\begin{array}{ccc|c}
  &        &   & 0\\
  &P_A(x)  &   &  \vdots\\
  &        &   & 0\\\hline
0 & \ldots & 0 & 1
\end{array}\right).$$

Consider a cut point $\mu$ with $0<\mu<1$.
We will define a number $\nu$ with $0\leq \nu<1$ corresponding to $\lambda$ and $\mu$.
Put 
$$\pi_B = ((1-\nu)\pi_{A,1}, \ldots, (1-\nu)\pi_{A,n},\nu).$$
First, let $0<\mu\leq \lambda$.
Then put $\nu = 1-\frac{\mu}{\lambda}$ and 
$f_B = \left(\begin{array}{c} f_A\\ 0\end{array}\right)$.
This defines a stochastic acceptor $B$ with the property that for each word $x\in\Sigma^*$,
$$\pi_B P_B(x) f_B = (1-\nu)\pi_A P_A(x) f_A =\frac{\mu}{\lambda} \pi_A P_A f_A.$$
Thus $\pi_A P_A(x) f_A>\lambda$ if and only if $\pi_B P_B(x) f_B > \mu$.
Hence, $L_{A,\lambda} = L_{B,\mu}$.

Second, let $\lambda<\mu<1$.
Then put $\nu = \frac{\mu-\lambda}{1-\lambda}$ and
$f_B = \left(\begin{array}{c} f_A\\ 1\end{array}\right)$.
This defines a stochastic acceptor $B$ with the property that for each word $x\in\Sigma^*$,
%CHECK
$$\pi_B P_B(x) f_B = (1-\nu)\pi_AP_A(x)f_A + \nu.$$
We have $(1-\nu)\lambda + \nu = \mu$. 
Thus $\pi_A P_A(x) f_A>\lambda$ if and only if $\pi_B P_B(x) f_B > \mu$.
Hence, $L_{A,\lambda} = L_{B,\mu}$.
\end{proof}

Second, stochastic languages are independent in some way of the initial state distribution.
\begin{proposition}\label{p-sa-norm2}
Let $A$ be a stochastic acceptor and $\lambda$ be a cut point.
Then there is a stochastic acceptor $B$, whose initial state distribution is given by a single state,
such that $L_{A,\lambda} = L_{B,\lambda}$.
\end{proposition}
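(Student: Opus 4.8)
The plan is to prepend a single fresh initial state to $A$ whose one-step behaviour reproduces the action of the initial distribution $\pi_A$. Write $A = (S_A,\Sigma,P_A,\pi_A,f_A)$ with $S_A = \{s_1,\ldots,s_n\}$, and introduce a new state $s_{n+1}$. I would define $B = (S_B,\Sigma,P_B,\pi_B,f_B)$ with $S_B = S_A\cup\{s_{n+1}\}$, initial distribution $\pi_B = s_{n+1}$ (the unit row vector concentrated on the new state), and, for each $a\in\Sigma$,
$$P_B(a) = \left(\begin{array}{c|c} P_A(a) & 0 \\\hline \pi_A P_A(a) & 0 \end{array}\right),$$
where the right-hand block is a zero column. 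The new bottom row $\pi_A P_A(a)$ records exactly where the old initial distribution sends its probability mass after reading one symbol, and since the last column vanishes no transition ever feeds back into $s_{n+1}$, so the new state is occupied only at time $0$.

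First I would check that each $P_B(a)$ is stochastic: the top $n$ rows are those of $P_A(a)$ and sum to $1$, while the bottom row sums to $\pi_A P_A(a)\bone_n = \pi_A\bone_n = 1$ because $P_A(a)$ is stochastic and $\pi_A$ is a distribution. Next I would prove by induction on $|x|$ that $\pi_B P_B(x) = (\pi_A P_A(x)\mid 0)$ for every nonempty $x\in\Sigma^*$: the base case $|x|=1$ is just the bottom row of $P_B(a)$, and the inductive step uses that a row vector of the form $(w\mid 0)$ is sent by $P_B(a)$ to $(wP_A(a)\mid 0)$, since the last column of $P_B(a)$ is zero. Setting $f_B = \left(\begin{array}{c} f_A\\ c\end{array}\right)$ for a constant $c$ still to be fixed, this identity gives $\pi_B P_B(x) f_B = \pi_A P_A(x) f_A$ for all nonempty $x$, so acceptance of nonempty words agrees for $A$ and $B$ at every cut point, independently of $c$.

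The one delicate point, indeed the only place the construction could fail, is the empty word: here $\pi_B P_B(\epsilon) f_B = \pi_B f_B = c$, whereas $\pi_A P_A(\epsilon) f_A = \pi_A f_A$, and since $f_B$ must be a binary vector I cannot simply set $c = \pi_A f_A$. Instead I would put $c = 1$ if $\pi_A f_A > \lambda$ and $c = 0$ otherwise. When $\pi_A f_A > \lambda$ one has $\lambda < \pi_A f_A \leq 1$, so $c = 1 > \lambda$ and $\epsilon$ lies in both $L_{A,\lambda}$ and $L_{B,\lambda}$; when $\pi_A f_A \leq \lambda$ one has $c = 0 \leq \lambda$, so $\epsilon$ lies in neither. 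Hence $\epsilon\in L_{B,\lambda}$ if and only if $\epsilon\in L_{A,\lambda}$, and combined with the agreement on nonempty words this yields $L_{A,\lambda} = L_{B,\lambda}$ with $\pi_B$ a single state, as required.
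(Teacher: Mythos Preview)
Your proof is correct and is essentially identical to the paper's argument: the paper also adjoins one new state whose outgoing row is $\pi_A P_A(a)$, sets the initial distribution to the unit vector on that state, and chooses the final bit on the new state according to whether $\pi_A f_A > \lambda$. The only cosmetic difference is that the paper inserts the new state as $s_0$ (so the new row/column appear first in the block matrix) while you append it as $s_{n+1}$; your explicit verification of stochasticity and the induction on $|x|$ are in fact more detailed than what the paper writes.
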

\begin{proof}
Let $A=(S,\Sigma,P,\pi,f)$ be a stochastic acceptor with state set 
$S_A = \{s_1,\ldots,s_n\}$ and let $\lambda$ be a cut point.

Define the stochastic acceptor $B$ with alphabet $\Sigma$,
state set $S_B=S_A\cup\{s_0\}$, transition matrices
$$P_B(a) = \left(\begin{array}{c|c}
0       & \pi P_A(a) \\\hline
0       &            \\
\vdots  & P_A(a)     \\
0       & 
\end{array}\right),\quad a\in\Sigma,$$
and
$$\pi_B = (1,0\ldots,0),
\quad
f_B = \left\{
\begin{array}{ll}
\left(\begin{array}{c} 0\\ f_A\end{array}\right) & \mbox{if }\pi_A f_A\leq \lambda,\\
\left(\begin{array}{c} 1\\ f_A\end{array}\right) & \mbox{if }\pi_A f_A> \lambda.
\end{array}
\right.$$
Then for each $x\in\Sigma_A^*$, 
$$P_B(x) = \left(\begin{array}{c|c}
0       & \pi P_A(x) \\\hline
0       &            \\
\vdots  & P_A(x)     \\
0       & 
\end{array}\right).$$
In view of the empty word, we have $\pi_A f_A >\lambda$ if and only if $\pi_B f_B =1>\lambda$.
Thus $\epsilon\in L_{A,\lambda}$ if and only if $\epsilon \in L_{B,\lambda}$.

Moreover, for each word $x\in\Sigma^*$, we have $\pi_A P_A(x) f_A = \pi_B P_B(x) f_B$.
Thus $x\in L_{A,\lambda}$ if and only if $x\in L_{B,\lambda}$.
\end{proof}

Finally, the concept of isolated cut points is introduced.
For this, let $A$ be a stochastic acceptor.
A cut point $\lambda$ is called {\em isolated\/}\index{cut point!isolated}  for $A$
if there is a number $\Delta>0$ such that for all words $x\in\Sigma_A^*$,
\begin{eqnarray}
|\pi_AP_A(x) f_A- \lambda|\geq \Delta.
\end{eqnarray}
Isolated cut points have some meaning when conducting probabilistic experiments with stochastic acceptors.

\begin{theorem}[Rabin]
Let $A$ be a stochastic acceptor with isolated cut point~$\lambda$.
Then the language $L_{A,\lambda}$ is regular.
\end{theorem}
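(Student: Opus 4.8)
The plan is to invoke the Nerode--Myhill theorem stated above: it suffices to prove that the Nerode equivalence relation $\equiv_L$ of the language $L = L_{A,\lambda}$ has only finitely many equivalence classes. The bridge between the algebraic description of $A$ and the combinatorial relation $\equiv_L$ is the family of reachable state distributions $\pi P(x)$, $x\in\Sigma^*$. Each such row vector lies in the probability simplex $T = \{\, w\in\RR^n \mid w_i\geq 0,\ \sum_i w_i = 1 \,\}$, which is a bounded subset of $\RR^n$. The key idea is that two input words whose reachable distributions are sufficiently close must already be Nerode-equivalent, so that a finite cover of the bounded set $T$ forces finitely many classes.

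First I would establish that right multiplication by a stochastic matrix does not increase the $\ell_1$-distance between row vectors: for any row vector $w$ and any stochastic matrix $Q$, the estimate $\sum_j |\sum_i w_i Q_{ij}| \leq \sum_i |w_i|\sum_j Q_{ij} = \sum_i |w_i|$ shows $\|wQ\|_1 \leq \|w\|_1$ (this holds for arbitrary real $w$, not only distributions). Consequently, for all words $x,y,z\in\Sigma^*$, using Prop.~\ref{p-SAA-prob1} and the fact that $f$ is a binary vector,
\[
|\pi P(xz) f - \pi P(yz) f| \leq \|(\pi P(x) - \pi P(y))\,P(z)\|_1 \leq \|\pi P(x) - \pi P(y)\|_1 .
\]

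This bound is the engine of the main step: if $\|\pi P(x) - \pi P(y)\|_1 < 2\Delta$, then $x \equiv_L y$. Indeed, for every suffix $z$ the numbers $\pi P(xz) f$ and $\pi P(yz) f$ then differ by less than $2\Delta$; since $\lambda$ is isolated with margin $\Delta$, each of the two values lies at distance at least $\Delta$ from $\lambda$, and two numbers on opposite sides of $\lambda$ would differ by at least $2\Delta$. Hence both lie on the same side of $\lambda$, so $xz\in L \Leftrightarrow yz\in L$; as $z$ is arbitrary, $x\equiv_L y$. Finally I would cover the bounded simplex $T$ by finitely many pieces each of $\ell_1$-diameter less than $2\Delta$ (a finite grid suffices); all reachable distributions that fall into a single piece are pairwise within distance $2\Delta$ and hence represent a single Nerode class, so $\equiv_L$ has at most as many classes as there are pieces. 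Therefore the number of classes is finite and $L_{A,\lambda}$ is regular.

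The main obstacle is precisely this closeness-implies-equivalence step, and in particular the interplay of its two ingredients. Isolation by itself only separates accepted from rejected words at the level of individual strings; it is the non-expansiveness of the stochastic transition maps that propagates closeness of two state distributions uniformly through \emph{all} suffixes $z$ at once, which is exactly the universal quantifier demanded by the Nerode relation. Once this is in place, the remaining geometry (boundedness of $T$ and the existence of a finite cover of small diameter) is entirely routine.
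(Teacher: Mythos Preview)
Your proof is correct and follows essentially the same route as the paper: Nerode--Myhill together with an $\ell_1$-distance lower bound for inequivalent words and a finite cover of the probability simplex. The only cosmetic differences are that you work directly with $\pi P(x)$ instead of first normalizing to a single initial state via Prop.~\ref{p-sa-norm2}, and you package the key estimate as $\ell_1$-non-expansiveness of stochastic matrices, whereas the paper splits into positive and negative parts to obtain the slightly sharper (but unnecessary) constant $4\Delta$ in place of your $2\Delta$.
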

\begin{proof}
Let $A$ have state set $S_A=\{s_1,\ldots,s_n\}$.
Then by Prop.~\ref{p-sa-norm2},
there is a stochastic acceptor $B$ with state set $S_A = \{s_0\}\cup S_A$ such that
$L_{A,\lambda} = L_{B,\lambda}$.

Let $\equiv_L$ be the Nerode equivalence relation of the language $L=L_{A,\lambda}$.
Suppose $x$ and $y$ are words in $\Sigma^*$ such that $x\not\equiv_L y$.
Then there exists a word $z\in\Sigma^*$ such that without restriction
$xz\in L_{A,\lambda}$ and $yz\not\in L_{A,\lambda}$.
Thus $\pi_B P_B(xz) f_B> \lambda$ and $\pi_B P_B(yz) f_B \leq \lambda$.
Since $\lambda$ is isolated,
$\pi_B P_B(xz) f_B\geq \lambda +\Delta$ and $\pi_B P_B(yz) f_B \leq \lambda -\Delta$.
Therefore, $$\pi_B (P_B(x) - P_B(y))P_B(z) f_B\geq 2\Delta.$$
The column vector $P_B(z)f_B$ has nonnegative entries and the initial state vector is $\pi_B=(1,0,\ldots,0)$.
Thus the above inequality remains valid if the summation is restricted to the positive entries $p_{0j}(x)-p_{0j}(y)$, i.e.\
$$\sum_j^+ p_{0j}(x)-p_{0j}(y)\geq 2\Delta,$$
where $\sum^+$ denotes the summation over the positive terms.
Since the matrices $P_B(x)$ and $P_B(y)$ are stochastic, we obtain
$\sum_{j=0}^n p_{0j}(x) = \sum_{j=0}^n p_{0j}(y) = 1$ and thus
$\sum_{j=0}^n p_{0j}(x) - p_{0j}(y) = 0$.
Hence,
$$\sum_j^+ p_{0j}(x)-p_{0j}(y) = -\sum_j^- p_{0j}(x)-p_{0j}(y),$$
where $\sum^-$ denotes the summation over the non-positive terms.
Therefore,
$$\sum_{j=0}^n |p_{0j}(x)-p_{0j}(y)| = 2\sum_j^+ p_{0j}(x)-p_{0j}(y).$$
We may assume that $x,y$ are different from the empty word.
Then $p_{00}(x) = p_{00}(y)$ and so
$$2\Delta \leq \frac{1}{2} \sum_{j=1}^n |p_{0j}(x)-p_{0j}(y)|.$$
Thus in case of $x\not\equiv_L y$, the row vectors $\gamma_x = (p_{01}(x),\ldots,p_{0n}(x))$
and $\gamma_y = (p_{01}(y),\ldots,p_{0n}(y))$
have the $|\cdot|$ (absolute) distance of at least $4\Delta$.

However, all vectors of length $n$ with row sum~$1$ form a closed and limited (and thus compact) subset of $\RR^n$,
which can be covered by finitely many $n$-dimensional cubes of length $2\Delta$.
The above considerations show that 
each of these cubes can only contain those vectors~$\gamma_x$, which belong to the same equivalence class of the Nerode
equivalence relation.
Thus there are only finitely many classes of the Nerode equivalence.
Hence, by the Nerode-Myhill theorem, the language $L_{B,\lambda}$ is regular.
\end{proof}

\begin{example}
Consider the stochastic acceptor 
$$A = (\{s_1,s_2\},\{0,2\},\{P(0),P(2)\},\pi,f)$$ 
with
$$ 
P(0) = \begin{pmatrix} 1 & 0\\ \frac{2}{3} & \frac{1}{3}\end{pmatrix},
\quad
P(2) = \begin{pmatrix} \frac{1}{3} & \frac{2}{3}\\ 0 & 1\end{pmatrix},
\quad
\pi = \begin{pmatrix} 1 & 0\end{pmatrix},
\quad
f = \begin{pmatrix} 0 \\ 1\end{pmatrix},
$$
which is part of the 3-adic acceptor.
For each word, $x=x_1\ldots x_n\in\Sigma^*$, we have
$$\pi P(x) f = 
\frac{x_n}{3} 
+\frac{x_{n-1}}{3^2} 
\ldots
+\frac{x_1}{3^{n-1}} . $$
It follows that the topological closure of the set $\{\pi P(x) f\mid x\in\Sigma^*\}$
is precisely Cantor's discontinuum.
In particular, the set of isolated cut points of $A$ lies dense in the interval $[0,1]$. 
\EXX
\end{example}

\cleardoublepage

\addcontentsline{toc}{chapter}{Literature}

\cleardoublepage
\addcontentsline{toc}{chapter}{Index}
\printindex

\end{document}